\newtheorem{theorem}{Theorem}
\newtheorem{corollary}[theorem]{Corollary}
\newtheorem{lemma}[theorem]{Lemma}
\newtheorem{prop}[theorem]{Proposition}
\theoremstyle{remark}
\newcommand{\ket}[1]{|#1\rangle}
\newcommand{\bra}[1]{\langle#1|}
\newcommand{\ketbra}[2]{|#1\rangle\langle#2|}
\newcommand{\tr}{\text{\rm tr}}
\newcommand{\Ent}{\text{\rm Ent}}
\newcommand{\cE}{\mathcal{E}}
\newcommand{\cK}{\mathcal{K}}
\newcommand{\dd}{\text{\rm{d}}}
\newcommand{\cF}{\mathcal{F}}
\newcommand{\cL}{\mathcal{L}}
\newcommand{\diag}{\text{\rm{diag}}}
\newcommand{\E}{\mathbb{E}}
\newcommand{\bfa}{\mathbf{a}}
\newcommand{\bfq}{\mathbf{q}}
\newcommand{\hatp}{\hat{p}}
\newcommand{\thermal}{\text{\rm{th}}}
\newcommand{\Phiatt}{\Phi^{\text{att}}}
\newcommand{\Phiamp}{\Phi^{\text{amp}}}
\newcommand{\Phiad}{\Phi^{\text{add}}}
\title{A Meta Logarithmic-Sobolev Inequality for Phase-Covariant Gaussian Channels}
\author{Salman Beigi$^{1, 2}$,\, Saleh Rahimi-Keshari$^3$}
\affil{\it \footnotesize $^1$School of Mathematics, Institute for Research in Fundamental Sciences (IPM), P.O. Box 19395-5746, Tehran, Iran \\
\it \footnotesize $^2$Centre for Quantum Technologies, National University of Singapore, Singapore\\
\it \footnotesize $^3$School of Physics, Institute for Research in Fundamental Sciences (IPM), P.O. Box 19395-5531, Tehran, Iran}
\begin{document}

\maketitle

\begin{abstract}
	We introduce a meta logarithmic-Sobolev (log-Sobolev) inequality for the Lindbladian of all single-mode phase-covariant Gaussian channels of bosonic quantum systems, and prove that this inequality is saturated by thermal states. We show that our inequality provides a general framework to derive information theoretic results regarding phase-covariant Gaussian channels. Specifically, by using the optimality of thermal states, we explicitly compute the optimal constant $\alpha_p$, for $1\leq p\leq 2$, of the $p$-log-Sobolev inequality associated to the quantum Ornstein-Uhlenbeck semigroup. Prior to our work, the optimal constant was only determined for $p=1$. Our meta log-Sobolev inequality also enables us to provide an alternative proof for the constrained minimum output entropy conjecture in the single-mode case. Specifically, we show that for any single-mode phase-covariant Gaussian channel $\Phi$, the minimum of the von Neumann entropy $S\big(\Phi(\rho)\big)$ over all single-mode states $\rho$ with a given lower bound on $S(\rho)$, is achieved at a thermal state.	
\end{abstract}

%{\footnotesize\tableofcontents}

%******************************************************************************
\section{Introduction}

Quantum Gaussian channels are prototype noise models for the transmission of quantum information in current quantum communication technologies and information processing. In particular, they model transmission channels for sending
quantum data encoded in bosonic systems through optical fibers and free space. Despite several developments in the past decades, there are still wide open conjectures regarding information theoretic properties of these channels. These conjectures are important not only due to the significance of quantum Gaussian channels in quantum information
science, but also as quantum generalizations of some influential results in functional analysis.

In functional analysis and information theory, the optimality of Gaussian functions and Gaussian distributions has been established for various optimization problems involving Gaussian kernels or Gaussian channels. To name a few results, it is shown in~\cite{CARLEN1991} that only Gaussian functions saturate Gross's celebrated \emph{logarithmic-Sobolev (log-Sobolev) inequality}~\cite{Gross75}. Also, it is shown in~\cite{Lieb1990Gaussian} that ``Gaussian kernels have only Gaussian maximizers." In information theory, it is shown that the optimal input distribution for Gaussian broadcast channels is Gaussian~\cite{GengNair2014}. More generally, various inequalities including Brascamp-Lieb inequalities, the Loomis-Whitney inequality, the Pr\'ekopa-Leindler inequality, sharp form of Young's inequality for convolution of functions as well as the entropy power inequality are tight for Gaussian functions and distributions; see~\cite{AnantharamJogNair2022} and references therein. Thus, it is tempting to verify the validity of these results in the non-commutative case for quantum Gaussian channels. 

Single-mode quantum Gaussian channels have been classified in~\cite{Holevo2007Structure}; see also~\cite{Holevo+2013Book}. Among these classes are the three important physical classes of \emph{attenuator,} \emph{amplifier} and \emph{additive-noise channels}, which have two main features. First, these channels are \emph{phase-covariant} meaning, that they are covariant with respect to the group of phase operators, and second, they form quantum Markov semigroups~\cite{HeinosaariHolevoWolf2010}. These two features make these channels suitable candidates for generalizing some of the aforementioned properties of Gaussian kernels in functional analysis to the quantum case. In particular, as we show in this paper, these channels satisfy a new inequality that is saturated by Gaussian states, which we refer to as \emph{meta log-Sobolev inequality} and provides a general framework for deriving quantum information-theoretic results regarding bosonic systems.

One application of our meta log-Sobolev inequality is to compute the optimal constants of the $p$-log-Sobolev inequalities for a semigroup of attenuator channels, called the \emph{quantum (bosonic) Ornstein-Uhlenbeck semigroup}. This semigroup is of particular interest as it resembles its classical counterpart when acting on certain operators; see~\cite[Equation~(7.5)]{Cipriani+2000}. 
The classical Ornstein-Uhlenbeck semigroup is an important example of a 
Markov semigroup, which can be understood as the convolution of an input distribution by a Gaussian one. Its associated log-Sobolev inequality takes the form
\begin{align}\label{eq:C-LSI}
	\frac 12\Ent_2(f)\leq \|f'\|_{2}^2, \qquad\quad \forall f>0,
\end{align}
where the entropy $\Ent_2(f) = \E[f^2\log f^2] - \E[f^2]\log \E[f^2]$ and the $2$-norm $\|f\|_2=\E[f^2]^{1/2}$ are defined with respect to the standard normal distribution. This inequality was first established by Gross in his seminal work~\cite{Gross75}. 

Given the log-Sobolev inequality~\eqref{eq:C-LSI} for the classical Ornstein-Uhlenbeck semigroup and the development of the theory of hypercontractivity for quantum semigroups~\cite{OZ99}, it is natural to ask if the quantum Ornstein-Uhlenbeck semigroup is hypercontractive and satisfies a log-Sobolev inequality.\footnote{Amplifier and additive-noise channels are not hypercontractive in the usual sense.} The $2$-log-Sobolev inequality for the quantum Ornstein-Uhlenbeck semigroup takes the form
\begin{align}\label{eq:Q-LSI}
	\alpha_2\,\Ent_{2, \sigma_\beta} (X)\leq \tr\Big(\!\sqrt{\sigma_\beta} [\bfa, X]^\dagger \sqrt{\sigma_\beta} [\bfa, X]\Big),\qquad  \forall X>0,\; \tr\Big(\sigma^{\frac 14}_\beta X \sigma_\beta^{\frac 14}\Big)^2\!\!<+\infty.
\end{align}
Here, $\sigma_\beta =(1-e^{-\beta}) e^{-\beta a^\dagger a}$ is a thermal (and then Gaussian) state with parameter $\beta>0$, $[\bfa, X]=\bfa X-X\bfa$ is the commutator of the annihilation operator with $X$, and in the special case that $\rho = \Big(\sigma^{\frac 14}_\beta X \sigma_\beta^{\frac 14}\Big)^2$ is a quantum state
$$\Ent_{2, \sigma_\beta} (X)=D(\rho\| \sigma_\beta)= \tr(\rho\log \rho)- \tr(\rho\log \sigma_\beta).$$
Moreover, $\alpha_2$ is called the $2$-log-Sobolev constant whose optimal value, depending on $\beta$, is one of our main problems of study in this work.

Using the spectral properties of the Lindbladian of the quantum Ornstein-Uhlenbeck semigroup established in~\cite{Cipriani+2000}, a bound on the 2-log-Sobolev constant $\alpha_2$ is derived in~\cite{CarboneSasso2008}. This bound confirms the hypercontractivity of the quantum Ornstein-Uhlenbeck semigroup, yet it seems far from being optimal. Moreover, the bound of~\cite{CarboneSasso2008} works only in the single-mode case and is not clear to satisfy the so called \emph{tensorization property.}

Although $2$-log-Sobolev inequalities (assuming some regularity conditions~\cite{OZ99, KT13}) imply hypercontractivity inequalities, for a more refined characterization of the latter inequalities we need to establish generalizations of the above $2$-log-Sobolev inequality for all values of $1\leq p\leq 2$. These generalizations, called $p$-log-Sobolev inequalities, are all equivalent and ignorant of the parameter~$p$ for the classical Ornstein-Uhlenbeck semigroup, yet they differ in the quantum case.\footnote{The parameter $p$ is also relevant in the study of log-Sobolev inequalities for \emph{finite} classical Markov semigroups; see~\cite{Mossel+2013} and references therein.}

The 1-log-Sobolev (also called the \emph{modified log-Sobolev}) inequality for the quantum Ornstein-Uhlenbeck semigroup is first studied in~\cite{HuberKoenigVershynina17} where it is examined for Gaussian states and it is conjectured that Gaussian states are optimal for the $1$-log-Sobolev inequality. This conjecture is first proven in~\cite{CarlenMaas17}. An alternative proof of this inequality and an extension are also derived in~\cite{DePalmaHuber18}. 

Our meta log-Sobolev inequality is also related to computing the classical capacity of attenuators, amplifiers and additive-noise channels, i.e., finding the maximum rate of reliable transmission of classical data through these channels. Since these channels are phase-covariant, their classical capacity is related to their minimum output entropy, i.e., the minimum entropy of their output states; see, e.g.,~\cite{Holevo+2013Book}. It was a long-standing open problem that the minimum output entropy of these channels is additive and is achieved over Gaussian states~\cite{HolevoWerner2001Evaluating, GiovannettiGuha+2004Minimum, GiovannettiHolevoLloydMaccone2010}. This conjecture was answered in the affirmative in~\cite{GiovannettiHolevoGarcia-Patron2015}. Extensions of this result have also been proven for which we refer to~\cite{Holevo_2015, DePalma+2018survey} and references therein.

Generalizing the aforementioned minimum output entropy conjecture, the \emph{Constrained Minimum Output Entropy (CMOE)} conjecture states that the minimum output entropy of the above phase-covariant Gaussian channels over input states with a given entropy is achieved by Gaussian states~\cite{GuhaShapiroErkmen2007}. More precisely, the CMOE conjecture states that if $\Phi$ is a single-mode phase-covariant Gaussian channel, which by the above discussion is either attenuator, amplifier or additive-noise, then for any $m$-mode state $\rho$ we have
\begin{align}\label{eq:CMOE-introduction}
\frac 1m S\big( \Phi^{\otimes m}(\rho)\big)\geq S(\Phi(\tau)),
\end{align}
where $\tau$ is a single-mode thermal state satisfying $S(\tau) = \frac 1m S(\rho)$. 
This conjecture is in particular related to the problem of computing the capacity of quantum Gaussian broadcast channels~\cite{GuhaShapiroErkmen2007}. Using tools developed in~\cite{PalmaTrevisanGiovannetti2016Passive} the CMOE conjecture is proven in~\cite{DePalma+2017PRL, PalmaTrevisanGiovannettt2017Attenuator} \emph{in the special case of $m=1$.} It is also proven for arbitrary $m$ in the parameter regime that the channel $\Phi$ becomes entanglement breaking~\cite{DePalma2019EntBreaking}. 

Some other conjectures regarding the optimality of Gaussian states have also been proposed including an \emph{entropy photon-number inequality}~\cite{GuhaErkmenShapiro2008, DePalmaMariGiovannetti2014GeneralizationEPI} and a \emph{sharp Young's inequality for the beam-splitter}~\cite{DePalma+2018survey}.  For more details on \emph{quantum Gaussian optimizer conjectures} in quantum information theory we refer to~\cite{DePalma+2018survey}.

\subsection{Our contributions}

In this paper, we introduce a \emph{meta log-Sobolev inequality}  
for phase-covariant Gaussian quantum channels. In the Schr\"odinger picture, the Lindbladian of the semigroups of attenuator, amplifier and additive-noise channels mentioned above, takes the form
$$\cL(\rho) = \nu_0\Big(\frac 12 \{\bfa\bfa^\dagger, \rho\} - \bfa^\dagger \rho\bfa\Big)+ \nu_1\Big( \frac 12 \{\bfa^\dagger\bfa, \rho\} - \bfa \rho\bfa^\dagger\Big),$$
where $\nu_0, \nu_1\geq 0$ are non-negative constants and $\{X, Y\} = XY+YX$ denotes anti-commutator. Then, given any $\omega\geq 0$, we define the function 
\begin{align*}
	\Upsilon(\rho)= \frac{p}{p-1}\tr\Big( \cL\big(\rho^{1/p}\big) \rho^{1-1/p}\Big) + \omega\,\tr\big(\rho \bfa^\dagger \bfa\big) + S(\rho),
\end{align*}
over single-mode states $\rho$.  Later, we will see that for the choice of 
$$X=\sigma_\beta^{-\frac{1}{2p}} \rho^{\frac 1p} \sigma_\beta^{-\frac{1}{2p}},$$ 
and $p=2$, the first term in $\Upsilon(\rho)$ resembles the \emph{Dirichlet form} appearing on the right hand side of~\eqref{eq:Q-LSI}. In this case, $\Ent_{p, \sigma_\beta}(X)=D(\rho\| \sigma_\beta)$ can be written in terms of the other two terms in $\Upsilon(\rho)$. Thus, computing the infimum of $\Upsilon(\rho)$ 
is useful in obtaining the optimal log-Sobolev constant~$\alpha_p$. 

Our main result, which we call a meta log-Sobolev inequality, states that  optimal states in the minimization of $\Upsilon(\rho)$ are thermal, i.e.,
\begin{align}\label{eq:main-ineq-intro}
	\Upsilon(\rho) \geq \inf_{\tau: \text{ \rm thermal}} \Upsilon\big(\tau\big),
\end{align}
for any state $\rho$ of a single-mode bosonic system.

By using \eqref{eq:main-ineq-intro}, we turn the problem of computing the \emph{optimal} $p$-log-Sobolev constant $\alpha_p$ into an optimization problem over thermal states that are characterized by a single real parameter. We explicitly compute the optimal constant for the quantum Ornstein-Uhlenbeck semigroup for any $1\leq p\leq 2$:
\begin{align}\label{eq:intro-alpha-p}
	\alpha_p = \frac{p\hatp}{4\beta} e^{\beta/2}\big(1-e^{-\beta/p}\big)\big(1-e^{-\beta/\hatp}\big).
\end{align}
This result for $p=1$ recovers the modified log-Sobolev inequality of~\cite{CarlenMaas17}, and for $p=2$ is an improvement over the bound of~\cite{CarboneSasso2008}. 

We also study the quantum Ornstein-Uhlenbeck semigroup in the multimode case. It is well-known that log-Sobolev constants for classical Markov semigroups have the tensorization property (see, e.g.,~\cite{Mossel+2013}), meaning that considering the $m$-fold tensor product of the semigroup, the log-Sobolev constants of the resulting semigroup 
are the same as those of the original one. This property is not known to hold for arbitrary quantum Markov semigroups. Thus, we also consider the $m$-mode version of~\eqref{eq:Q-LSI}:
$$\widehat \alpha_{2}\Ent_{2, \sigma_{\beta}^{\otimes m}} (X)\leq \sum_{j=1}^m \tr\Big(\!\sqrt{\sigma_\beta^{\otimes m}} [\bfa_j, X]^\dagger \sqrt{\sigma_\beta^{\otimes m}} [\bfa_j, X]\Big),$$
where $X>0$ runs over $m$-mode operators, and $\widehat \alpha_{2}$ is the $2$-log-Sobolev constant for the $m$-fold tensor product semigroup. We show that for any $m>1$,
\begin{align}\label{eq:intro-alpha-2-m}
	\widehat \alpha_{2} \geq  \bigg(\frac{2+ \log(2m+1)}{\sinh(\beta/2)} + \frac{1}{\alpha_{2}}\bigg)^{-1}\!,
\end{align}
where $\alpha_2$ is given by~\eqref{eq:intro-alpha-p}.

To prove our result~\eqref{eq:intro-alpha-2-m}, we follow the approach of~\cite{CarboneSasso2008}. We use properties of the spectrum of the generator $\cL$ of the quantum Ornstein-Uhlenbeck semigroup to reduce the problem for arbitrary $m$-mode operators $X$ to operators that are diagonal in the number basis. To this end, we prove an entropic inequality which might be of independent interest. Next, restricting to diagonal operators, we obtain an essentially classical Markov semigroup that is known to satisfy the tensorization property. Thus, the problem for $m$-mode states is reduced to that of single-mode states for which we have already established the optimal log-Sobolev constant in~\eqref{eq:intro-alpha-p}. We note that, although, our first step of the proof is inspired by~\cite{CarboneSasso2008}, our penalty term that compensates for the reduction to diagonal states is smaller than that of~\cite{CarboneSasso2008}. Moreover, as mentioned above we compute the exact value of the $2$-log-Sobolev constant for diagonal states while~\cite{CarboneSasso2008} derives only a lower bound. Thus, our estimate on the $2$-log-Sobolev inequality is not just an $m$-mode generalization of~\cite{CarboneSasso2008}, but a twofold improvement thereof.

Our meta log-Sobolev inequality~\eqref{eq:main-ineq-intro} provides a general framework that can be applied in other settings as well. In this paper, using~\eqref{eq:main-ineq-intro} we also give an alternative proof of the CMOE conjecture~\eqref{eq:CMOE-introduction} in the case of $m=1$.

\paragraph{Outline of the paper:} The rest of this paper is organized as follows. In Section~\ref{sec:phase-cov-channels}, we review the class of single-mode phase-covariant Gaussian channels. We argue that these channels form semigroups and compute their Lindbladians. Section~\ref{sec:meta-LSI} is devoted to our meta log-Sobolev inequality~\eqref{eq:main-ineq-intro} and its proof. The log-Sobolev constants for the quantum Ornstein-Uhlenbeck semigroup are computed in Section~\ref{sec:LSI-OU}. Also, our proof of the CMOE conjecture is given in Section~\ref{sec:CMOE}. Final remarks are discussed in Section~\ref{sec:con} and some detailed computations are left for the appendices.

%************************************************************************
\section{Phase-covariance Gaussian channels}\label{sec:phase-cov-channels}

A single-mode bosonic continuous-variable system is described by the \emph{annihilation operator} $\bfa$ and it hermitian conjugate, the \emph{creation operator} $\bfa^\dagger$. These operators satisfy the bosonic commutation relation 
$$[\bfa, \bfa^\dagger]=\bfa\bfa^\dagger - \bfa^\dagger \bfa=1.$$ 
Fock states $\{\ket n:\, n\geq 0\}$, also called number states, form an orthonormal basis for the corresponding Hilbert space and are eigenvectors of $\bfa^\dagger \bfa$ called the \emph{number operator}: $\bfa^\dagger\bfa\ket{n}=n\ket{n}$. We indeed have $\bfa \ket n = \sqrt{n}\ket{n-1}$ and $\bfa^\dagger \ket n = \sqrt{n+1}\ket{n+1}$.
Single-mode quantum states $\rho$ are density operators ($\rho\succeq 0$ and $\tr(\rho)=1$) acting on this Hilbert space.

All states that we consider in this paper are assumed to be physical and hence have finite mean photon number. More precisely, for a pure state $\rho=\ket\psi\bra\psi$ we assume that $\ket\psi$ belongs to the domain of $\bfa$ and $\bra \psi \bfa^\dagger \bfa \ket\psi = \|\bfa\ket\psi\|^2 < +\infty$,  and more generally for a mixed state with eigen-decomposition $\rho = \sum_{j=0}^\infty  \lambda_j \ketbra{\psi_j}{\psi_j}$ we assume that $\ket{\psi_j}$, for any $j$ with $\lambda_j>0$, belongs to the domain of $\bfa$ and $\sum_j \lambda_j  \| \bfa \ket{\psi_j} \|^2 < +\infty$. The latter expression is indeed equal to $\tr(\bfa \rho \bfa^\dagger)$, yet we often write it as $\tr(\rho\bfa^\dagger \bfa)$, the conventional notation for the mean photon number, considering the following formal definition for the expectation value of an operator in quantum mechanics.\footnote{Note that although $\tr(\bfa \rho \bfa^\dagger)$ may be well-defined and finite, $\rho\bfa^\dagger \bfa$ may not be a trace class operator. For more details on the expectation value of unbounded operators in quantum mechanics we refer to~\cite{Busch2016QuantumMeasurement}.} For any positive semidefinite operator $X$, we formally define  $\tr(\rho X) = \sum_j \lambda_j \|X^{1/2} \ket{\psi_j}\|^2$ if $\ket{\psi_j}$ belongs to the domain of $X^{1/2}$ for all $j$ with $\lambda_j>0$, and let $\tr(\rho X) = +\infty$ otherwise.\footnote{Note that $\|\bfa\ket\psi\|=\| |\bfa^\dagger\bfa|^{1/2}\ket\psi  \|$, so this definition matches the one for the number operator.} Also, for a self-adjoint operator $X$ with decomposition $X=X_+-X_-$ where both $X_+, X_-$ are positive semidefinite, we define $\tr(\rho X) = \tr(\rho X_+) - \tr(\rho X_-)$.

Quantum states can be represented in terms of \emph{displacement (Weyl) operators} $D_\xi=\exp(\xi\bfa^\dagger-\bar\xi\bfa)$ with $\xi\in\mathbb{C}$ as
\begin{equation}\label{eq:state-rep-charac}
	\rho=\frac{1}{\pi} \int_{\mathbb{C}} \chi_\rho(\xi) D_{-\xi}\,   \dd^{2}\xi,
\end{equation}
where $\chi_\rho(\xi)=\tr(\rho D_\xi)$ is called the \emph{characteristic function}~\cite{CahillGlauber1969}. The characteristic function is the Fourier transform of the \emph{Wigner function}~\cite{HILLERY1984}. 

A quantum state is called \emph{Gaussian} if its characteristic and equivalently Wigner functions are Gaussian. Therefore, Gaussian states can be simply described in terms of the first-order and second-order moments of their Wigner function.
\emph{Thermal states} are an important class of Gaussian states. A thermal state with parameter $\beta>0$ that is proportional to the inverse of temperature, is diagonal in the Fock basis and is given by 
\begin{equation}\label{eq:thermal-state}
	\sigma_\beta=\frac{1}{\tr(e^{-\beta \bfa^\dagger\bfa})}e^{-\beta \bfa^\dagger\bfa}=(1-e^{-\beta})\sum_{n=0}^{\infty} e^{-n\beta}\ket{n}\bra{n}.
\end{equation} 
The characteristic function of this thermal state equals
\begin{align}\label{eq:chi-thermal}
\chi_{\sigma_\beta}(\xi) = e^{-\frac 1 2 \coth(\beta/2)|\xi|^2}.
\end{align}
At zero temperature ($\beta=\infty$) we obtain the vacuum state $\sigma_\infty=\ketbra{0}{0}$.

The evolution of quantum systems, in general, is described by quantum channels $\Phi$ that are linear completely positive and trace-preserving superoperators. Quantum channels that transform Gaussian states into Gaussian states are known as \emph{Gaussian channels}~\cite{HolevoWerner2001Evaluating,Braunstein2005,Weedbrook2012,Holevo+2013Book}. Gaussian channels that are unitary are called \emph{Gaussian unitaries}. %Any Gaussian unitary can be decomposed into displacement operators, \emph{passive transformations} and \emph{squeezing transformations}; see~\cite{SerafiniBook} for more details. 

A single-mode Gaussian channels is called \emph{phase-covariant}\footnote{Phase-covariant channels are also known as gauge-covariant channels.}~\cite{GiovannettiHolevoGarcia-Patron2015,Giovannetti2015majorization} if it satisfies
\begin{equation}
	\Phi\big(U_\theta \rho\, U_{\theta}^\dagger\big)=U_\theta \Phi(\rho) U_{\theta}^\dagger,
\end{equation}
for any state $\rho$ and $\theta\in[0,2\pi)$, where $U_\theta=e^{i\theta\, \bfa^\dagger\bfa}$ is the \emph{phase-rotation unitary}. 
By the classification of single-mode Gaussian channels~\cite{Holevo+2013Book, Holevo2007Structure}, a single-mode phase-covariant Gaussian channel can be described in terms of its action on the characteristic function. For any such channel there are parameters $\gamma, \lambda\geq 0$ such that
\begin{equation}\label{eq:Phas-Cov-Def}
	\chi_{\Phi(\rho)}(\xi)=e^{-\frac 1 2 \gamma|\xi|^2}\chi_\rho\big(\!\sqrt{\lambda}\xi\big),
\end{equation}  
where the complete positivity condition implies $\gamma\geq|1-\lambda|$~\cite{HolevoWerner2001Evaluating}. The channel $\Phi$ is called \emph{quantum limited} if $\gamma=|1-\lambda|$. By using the characteristic function of thermal states \eqref{eq:chi-thermal} in \eqref{eq:Phas-Cov-Def}, one can verify that phase-covariant Gaussian channels transform thermal states to thermal states.

Single-mode phase-covariant Gaussian channels consist of three classes of \emph{attenuator channels} corresponding to $0\leq\lambda<1$, \emph{additive-noise channels} corresponding to $\lambda=1$, and \emph{amplifier channels} corresponding $1<\lambda$. These are important physical channels in describing dynamics of continuous-variable quantum systems.

A crucial property of single-mode phase-covariant Gaussian channels is that they admit semigroup structures~\cite{HeinosaariHolevoWolf2010,GiovannettiHolevoLloydMaccone2010}. Specifically, any such channel can be written as $\Phi_{t_0}=e^{-t_0\cL}$ for some $t_0\geq 0$ where $\big\{\Phi_{t} = e^{-t\cL}:\, t\geq 0\big\}$ is a semigroup of single-mode phase-covariant Gaussian channels. We show in Appendix~\ref{app:generators} that the corresponding Lindbladian $\cL$ of such a semigroup takes the form
\begin{equation}\label{eq:LG-general form}
\cL(\rho) = \nu_0\Big(\frac 12 \{\bfa\bfa^\dagger, \rho\} - \bfa^\dagger \rho\,\bfa\Big)+ \nu_1\Big( \frac 12 \{\bfa^\dagger\bfa, \rho\} - \bfa\, \rho\,\bfa^\dagger\Big),
\end{equation}
where $\{X, Y\} = XY+YX$ denotes the anti-commutator, and $\nu_0, \nu_1\geq 0$ are parameters determining the semigroup. We argue in Appendix~\ref{app:generators} that three ranges for the parameters $\nu_0, \nu_1$ give the three classes of single-mode phase-covariant channels: $\nu_1>\nu_0$ corresponds to 
attenuator channels, $\nu_0=\nu_1$ corresponds to  additive-noise channels, and $\nu_1<\nu_0$ corresponds to 
amplifier channels.

Attenuator channels, in general, can be physically modeled by applying a \emph{beam splitter unitary} with the transmissivity of $0\leq \lambda\leq1$ on the system and an auxiliary system in the thermal state $\sigma_\beta$, and then tracing out the second subsystem: 
\begin{equation}\label{eq:U-att}
	\Phiatt_\lambda(\rho)=\tr_2\Big( U_{\text{BS},\lambda} (\rho\otimes \sigma_\beta) U_{\text{BS},\lambda}^\dagger \Big).
\end{equation}
The characteristic function of the output state is given by
\begin{equation*}
	\chi_{\Phiatt_\lambda(\rho)} (\xi)=e^{-\frac1 2 \coth(\beta/2)(1-\lambda)|\xi|^2}\chi_\rho(\sqrt{\lambda}\xi).
\end{equation*}
Choosing the transmissivity parameter $\lambda_t=e^{-2ct}$ as a function of time, where $c>0$ is some constant, we obtain a semigroup of attenuation channels. The generator of this semigroup given in \eqref{eq:LG-general form} has parameters $\nu_0 =  c \left(\coth(\beta/2)-1\right)$ and $\nu_1 =c \left(\coth(\beta/2)+1\right)$. Note that for the special choice of $c=\sinh(\beta/2)$ we have $\nu_0=e^{-\beta/2}$ and $\nu_1=e^{\beta/2}$. This semigroup of  
attenuator channels is sometimes called the \emph{quantum Ornstein-Uhlenbeck semigroup}.

Amplifier channels can be described by replacing the beam splitter unitary in the above model by a \emph{two-mode squeezing unitary} to get
\begin{equation}\label{eq:U-amp}
	\Phiamp_\lambda(\rho)=\tr_2\Big( U_{\text{2S},\lambda} (\rho\otimes \sigma_\beta) U_{\text{2S},\lambda}^\dagger \Big),
\end{equation}
where $\lambda\geq1$ is the squeezing parameter.
In this case, the relation between input and output characteristic functions becomes
\begin{equation*}
	\chi_{\Phiatt_\lambda(\rho)} (\xi)=e^{-\frac1 2 \coth(\beta/2)(\lambda-1)|\xi|^2}\chi_\rho(\sqrt{\lambda}\xi).
\end{equation*}
Choosing $\lambda_t=e^{2ct}$ with $c>0$ to be a function of time, we obtain a semigroup with the generator corresponding to parameters $\nu_0=c\left(\coth(\beta/2)+1\right)$ and $\nu_1=c\left(\coth(\beta/2)-1\right)$ in \eqref{eq:LG-general form}.  

Additive-noise channels can be modeled by applying a displacement operator whose parameter is chosen at random according to a Gaussian probability distribution:
\begin{equation*}
	\Phiad_\gamma(\rho)=\frac{2}{\pi\gamma}\int e^{-\frac{2}{\gamma}|\xi|^2} D_\xi\rho D^{\dagger}_\xi\, \dd^2\xi.
\end{equation*} 
This channel in terms of characteristic functions can be written as
\begin{equation*}
	\chi_{\Phiad_\gamma(\rho)} (\xi)=e^{-\frac 12\gamma|\xi|^2}\chi_\rho(\xi).
\end{equation*}
Again, by setting $\gamma_t=2ct$ with $c>0$ we obtain a semigroup whose generator given by~\eqref{eq:LG-general form} has parameters $\nu_0=\nu_1=c$. This semigroup is sometimes called the \emph{quantum heat semigroup}.

We emphasize that by the above discussion any single-mode phase-covariant channel~\eqref{eq:Phas-Cov-Def} can be viewed as a member of one of the above three semigroups. The point is that, the above choices of parameters $\lambda_t, \gamma_t$ cover all the valid ranges of $\lambda, \gamma$ in~\eqref{eq:Phas-Cov-Def}.

In this paper, we also consider $m$-mode bosonic systems, described by $m$ pairs of annihilation and creation operators $\{\bfa_1, \bfa_1^\dagger, \dots, \bfa_m, \bfa_m^\dagger \}$ satisfying $[\bfa_{i}, \bfa_j]=0$ and $[\bfa_{i}, \bfa_j^\dagger]=\delta_{i,j}$. Vectors of the associated tensor product Hilbert space can be expressed in terms of $m$-mode number states $\{\ket{n_1,\dots,n_m}:\, n_1,\dots, n_m\geq 0\}$.  Also, the characteristic function for an $m$-mode state $\rho$ is given by $\chi_\rho(\xi)=\tr\big(\rho D_{\xi}\big)$, where $\xi=(\xi_1,\dots,\xi_m)\in \mathbb C^m$ and $D_\xi = D_{\xi_1}\otimes \dots \otimes D_{\xi_m}$. As in the single-mode case, physical multimode states $\rho$ considered in this paper have finite mean photon number: $\tr(\rho H_m)<+\infty$, where $H_m=\sum_{j=1}^m \bfa_j^\dagger \bfa_j$ is the $m$-mode number operator. The definitions of Gaussian channels and unitaries can be extended to the multimode case. Any Gaussian unitary can be decomposed into displacement operators, mutlimode \emph{passive transformations}, which preserve the mean photon number, and single-mode \emph{squeezing transformations}; see~\cite{SerafiniBook} for more details.

%************************************************************************
\section{Meta log-Sobolev inequality}\label{sec:meta-LSI}

As discussed in the previous section, all the Lindbladians associated to single-mode phase-covariant channels have the form $\cL=\nu_0\cL_{0} + \nu_1 \cL_1$ for some $\nu_0, \nu_1\geq 0$ where
\begin{align}\label{eq:L0-L1}
\cL_0(X) = \frac 12 \{\bfa\bfa^\dagger, X\} - \bfa^\dagger X\bfa, \qquad \cL_1(X) = \frac 12 \{\bfa^\dagger\bfa, X\} - \bfa X\bfa^\dagger.
\end{align}
Let $\langle \cdot, \cdot \rangle$ denote the \emph{Hilbert-Schmidt inner product:} 
$$\langle  X, Y\rangle= \tr(X^\dagger Y).$$ 
Also, let $\cL^*$ be the adjoint of the generator $\cL$ with respect to this inner product: $\langle X, \cL(Y)\rangle = \langle \cL^*(X), Y\rangle$.  Indeed, $\cL^*$ is the generator in the \emph{Heisenberg picture} given by
\begin{align*}
\cL^*(X) = \nu_0\Big( \frac{1}{2}\{\bfa \bfa^\dagger, X\}-\bfa X\bfa^\dagger \Big)+\nu_1\Big( \frac{1}{2}\{\bfa^\dagger \bfa, X\} -\bfa^\dagger X\bfa\Big).
\end{align*}
 
Let $p\geq 1$ and $\hatp$ be the \emph{H\"older conjugate} of $p$ given by\footnote{If $p=1$, we let $\hatp=+\infty$.} 
$$\frac{1}{p} + \frac{1}{\hatp}=1.$$ 
Then, for any single-mode quantum state $\rho$ define
\begin{align}\label{eq:def-Upsilon}
\Upsilon(\rho)&:=\hatp  \big\langle \cL\big(\rho^{1/p}\big), \rho^{1/\hatp}\big\rangle + \omega \tr(\rho \bfa^\dagger \bfa) + S(\rho)\nonumber\\
&\, =\hatp   \Big(\!\nu_0\big\langle \cL_{0}\big(\rho^{1/p}\big), \rho^{1/\hatp}\big\rangle+\nu_1\big\langle \cL_{1}\big(\rho^{1/p}\big), \rho^{1/\hatp}\big\rangle\!\Big) + \omega\tr(\rho \bfa^\dagger \bfa) +S(\rho),
\end{align}
where $\omega\geq 0$ is a fixed parameter and $S(\rho)= -\tr(\rho\log \rho)$ is the von Neumann entropy of the state. The factor $\hatp$ in the first term of $\Upsilon(\rho)$ is for the sake of normalization in the limiting case of $p\to 1^+$. Indeed, since $\cL^*(I)=0$, we have 
$$\lim_{p\to 1^+} \hatp \big\langle \cL\big(\rho^{1/p}\big), \rho^{1/\hatp}\big\rangle =\lim_{p\to 1^+} \hatp\,\big \langle \rho^{1/p}, \cL^*\big(\rho^{1/\hatp}\big)\big\rangle = \langle \rho, \cL^*(\log \rho)\rangle= \langle \cL(\rho), \log \rho\rangle. $$
Thus, by convention for $p=1$ we let $\hatp\, \big\langle \cL\big(\rho^{1/p}\big), \rho^{1/\hatp}\big\rangle = \langle \cL(\rho), \log \rho\rangle$.

A straightforward computation verifies that
\begin{align}\label{eq:Upsilon-equiv-exp}
\Upsilon(\rho) =\hatp  \bigg(\!\nu_0 \Big[\tr\big(\rho\bfa\bfa^\dagger\big) -  \tr\big(  \rho^{1/p} \bfa \rho^{1/\hatp} \bfa^\dagger\big) \Big] +\nu_1\Big[  \tr\big(\rho\bfa^\dagger\bfa\big)-  \tr\big(  \rho^{1/p} \bfa^\dagger \rho^{1/\hatp} \bfa \big)\!\Big]\!\bigg) + \omega \tr\big(\rho \bfa^\dagger\bfa\big) + S(\rho).
\end{align}
We may think of this equation as the starting definition of $\Upsilon(\rho)$. In Appendix~\ref{app:range-Upsilon}, we show that if $\rho$ has a finite mean photon number, then all the terms in the above equation are finite and $\Upsilon(\rho)$ given by~\eqref{eq:Upsilon-equiv-exp} is well-defined.

One of the main technical contributions of our work is that the infimum of $\Upsilon(\rho)$ over states $\rho$ is achieved at single-mode thermal states. To this end, it would be beneficial to compute $\Upsilon(\rho)$ for thermal states. 
Using~\eqref{eq:Upsilon-equiv-exp}, for a single-mode thermal state,\footnote{Hereafter, we use $\tau$ to denote a thermal state on which we optimize, and save $\sigma=\sigma_\beta$ for the \emph{reference} thermal state in our description of phase-covariant Gaussian channels; see equations~\eqref{eq:U-att} and~\eqref{eq:U-amp}.} 
$$\tau=\tau_x=(1-x)\sum_n x^n\ketbra{n}{n},$$
where we put $0<x=e^{-\beta}<1$ in~\eqref{eq:thermal-state}, we have 
\begin{align}\label{eq:Upsilon-for-thermal}
\Upsilon(\tau) &= \hatp (1-x)  \sum_n \Big( \nu_0 (1-x^{1/\hatp}) (n+1)   x^n + \nu_1 (1-x^{-1/\hatp}) n x^n \Big)\nonumber\\
&\quad ~+ \omega(1-x) \sum_n nx^n  - (1-x)\sum_n x^n\log \big((1-x)x^n\big)\nonumber\\
& =   \hatp (1-x)\Big(\nu_0 (1-x^{1/\hatp}) \frac{1}{(1-x)^2} + \nu_1 (1-x^{-1/\hatp}) \frac{x}{(1-x)^2}  \Big)\nonumber\\
&\quad~ + \omega(1-x) \frac{x}{(1-x)^2} - (1-x) \frac{x}{(1-x)^2}  \log x- \log(1-x) \nonumber\\
& =  \frac{\hatp}{1-x} \Big(\nu_0 (1-x^{1/\hatp})+ \nu_1(x- x^{1/p})   \Big) + \omega\frac{x}{1-x} -  \frac{x}{1-x}  \log x- \log(1-x). 
\end{align}
Optimizing over the choice of the thermal state $\tau$, we define 
\begin{align}\label{eq:def-eta-thermal}
\eta_{\thermal}&:= \inf_{\tau: \text{ \rm thermal}} \Upsilon(\tau)\nonumber\\
&\,=\inf_{0<x<1}\, \frac{\hatp}{1-x} \Big(\nu_0 (1-x^{1/\hatp})+ \nu_1(x- x^{1/p})   \Big) + \omega\frac{x}{1-x} -  \frac{x}{1-x}  \log x- \log(1-x).
\end{align}

\medskip

\begin{theorem}\label{thm:main-technical-result} \emph{[Meta log-Sobolev inequality]}
For any $\nu_0, \nu_1, \omega\geq 0$ and $ p\geq 1$ define $\Upsilon(\rho)$ by~\eqref{eq:def-Upsilon}. Let $\eta_{\thermal}$ be the infimum of $\Upsilon(\tau)$ over thermal states as in~\eqref{eq:def-eta-thermal}. Then, for any single-mode quantum state $\rho$ with finite mean photon number we have 
\begin{align}\label{eq:meta-LSI}
\Upsilon(\rho)\geq \eta_\thermal.
\end{align}
\end{theorem}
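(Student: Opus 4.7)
The plan is to prove $\Upsilon(\rho)\geq \eta_\thermal$ by exploiting the $U(1)$ phase-rotation symmetry of $\Upsilon$ to reduce to states diagonal in the Fock basis, and then solving the resulting essentially classical variational problem over geometric sequences.

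First, I would verify phase-invariance $\Upsilon(U_\theta \rho U_\theta^\dagger) = \Upsilon(\rho)$ for every $\theta$. The entropy $S$ is unitary-invariant, the energy term $\tr(\rho \bfa^\dagger\bfa)$ is invariant because $U_\theta = e^{i\theta \bfa^\dagger \bfa}$ commutes with the number operator, and the Dirichlet term is invariant because $\cL$ is phase-equivariant, $\cL(U_\theta X U_\theta^\dagger) = U_\theta \cL(X) U_\theta^\dagger$, which one checks directly from \eqref{eq:L0-L1} using $U_\theta \bfa U_\theta^\dagger = e^{i\theta}\bfa$. This phase symmetry motivates comparing $\rho$ to its Fock-basis pinching $\bar\rho := \sum_n \rho_{nn}|n\rangle\langle n|$. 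For the two easy terms, energy is preserved ($\tr(\bar\rho \bfa^\dagger\bfa) = \tr(\rho \bfa^\dagger\bfa)$) while entropy cannot decrease ($S(\bar\rho)\geq S(\rho)$, by Schur--Horn). The nontrivial task is to show that the Dirichlet term shrinks by at least the entropy gain, giving $\Upsilon(\rho)\geq \Upsilon(\bar\rho)$. I would attempt this through an operator-Jensen or integral-representation estimate on the two-point forms $\tr(\rho^{1/p}\bfa\,\rho^{1/\hatp}\bfa^\dagger)$ and $\tr(\rho^{1/p}\bfa^\dagger\rho^{1/\hatp}\bfa)$, possibly using the integral formula $\rho^{s}=c_{s}\int_{0}^{\infty}\lambda^{s-1}\rho(\rho+\lambda)^{-1}\,\mathrm{d}\lambda$ to split the fractional powers and apply a pinching-type bound pointwise in $\lambda$.

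Once diagonal states $\rho = \sum_n p_n|n\rangle\langle n|$ are reached, the functional $\Upsilon$ becomes explicit: the Dirichlet term evaluates to $\hatp \sum_n (n+1)\bigl[\nu_0(p_n - p_n^{1/p}p_{n+1}^{1/\hatp}) + \nu_1(p_{n+1} - p_{n+1}^{1/p}p_n^{1/\hatp})\bigr]$, added to the linear energy $\omega\sum_n n p_n$ and the Shannon entropy $-\sum_n p_n\log p_n$ subject to $\sum_n p_n = 1$. Introducing a Lagrange multiplier for the normalization constraint and differentiating in each $p_n$, I would check that a geometric sequence $p_n=(1-x)x^n$ satisfies the resulting three-term difference equations (the ansatz makes each Euler--Lagrange equation collapse to a single scalar equation in $x$ that matches the optimality condition coming from differentiating the expression in \eqref{eq:def-eta-thermal}). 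Global minimality on the diagonal manifold would then follow by showing the constrained second variation is non-negative, for instance by rewriting the Dirichlet sum as a difference operator in the variables $r_n := p_n^{1/p}p_{n+1}^{1/\hatp}$ and exploiting strict convexity in $\log x$ of $\Upsilon(\tau_x)$ visible in \eqref{eq:Upsilon-for-thermal}.

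The main obstacle I anticipate is the pinching step. Fractional powers $\rho^{1/p}$ are not operator-convex, and the off-diagonal coherences of $\rho$ enter the Dirichlet form through sandwich-like expressions that do not obviously decompose as a convex integral over classical probabilities. A plausible fallback, which I would pursue in parallel, is to bypass the pinching reduction entirely and work directly on the full state manifold: establish existence of a minimizer through lower semi-continuity of $\Upsilon$ together with coercivity from the energy penalty $\omega \tr(\rho \bfa^\dagger \bfa)$, derive the Euler--Lagrange equation, and then argue from its algebraic form, which involves only $\rho$ and polynomials in $\bfa, \bfa^\dagger$, that any minimizer must commute with $\bfa^\dagger \bfa$ and hence be diagonal in the Fock basis, at which point the analysis of the preceding paragraph finishes the proof.
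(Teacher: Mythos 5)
There is a genuine gap in your first reduction step, and it is not just a technical obstacle you could hope to overcome: the intermediate inequality $\Upsilon(\rho)\geq \Upsilon(\bar\rho)$ for the Fock-basis \emph{pinching} $\bar\rho=\sum_n\rho_{nn}\ketbra nn$ is false in general. Since $S$ enters $\Upsilon$ with a positive sign and pinching strictly increases entropy for any state with coherences, you would need the Dirichlet form to drop by at least $S(\bar\rho)-S(\rho)$, and this fails whenever $\nu_0,\nu_1$ are small. Concretely, take $\ket\psi=\tfrac1{\sqrt2}(\ket0+\ket1)$ and $\rho=\ketbra\psi\psi$: one computes that pinching decreases each of $\langle\cL_0(\rho^{1/p}),\rho^{1/\hatp}\rangle$ and $\langle\cL_1(\rho^{1/p}),\rho^{1/\hatp}\rangle$ by exactly $1/4$, while the entropy jumps by $\log 2$, so $\Upsilon(\bar\rho)>\Upsilon(\rho)$ once $\hatp(\nu_0+\nu_1)/4<\log 2$. (In the extreme case $\nu_0=\nu_1=\omega=0$ your claim reduces to $S(\rho)\geq S(\bar\rho)$, which is plainly wrong.) The paper avoids this by comparing $\rho$ not to its pinching but to the \emph{eigenvalue-preserving rearrangement} $\widehat\rho=\sum_n\lambda_n\ketbra nn$, where $\lambda_0\geq\lambda_1\geq\cdots$ are the sorted eigenvalues of $\rho$ placed on the Fock basis (the passive-state idea of De Palma et al.). This keeps $S(\widehat\rho)=S(\rho)$ exactly, and then one only has to show that the energy and the Dirichlet terms do not increase, which is done by writing $\rho^{1/p}$ and $\rho^{1/\hatp}$ as positive combinations of the spectral projections $P_k$ and using $P_k\preceq I$ together with $[\bfa,\bfa^\dagger]=1$. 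Your phase-covariance observation is correct but does not single out the pinching as the right comparison state.

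Your second step also diverges from the paper and carries its own unresolved difficulties. The paper does not use Euler--Lagrange equations on the diagonal simplex; it applies the defining inequality for $\eta_\thermal$ to the tail ratios $x_\ell=s_{\ell+1}/s_\ell$ with $s_\ell=\sum_{n\geq\ell}\lambda_n$, multiplies by $s_\ell$, sums over $\ell$, and closes the argument with H\"older's inequality and a telescoping identity for the entropy. Your variational route must confront the facts that (a) the infimum in \eqref{eq:def-eta-thermal} need not be attained at an interior point $x\in(0,1)$, so a geometric critical point may not exist; (b) global minimality from a second-variation computation on an infinite-dimensional constraint set does not follow without a convexity or uniqueness argument you have not supplied; and (c) your coercivity fallback relies on $\omega>0$, whereas the theorem (and its main application to the $p$-log-Sobolev constants, where $\omega=0$) must cover $\omega=0$. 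As written, neither step of the proposal yields a proof.
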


\medskip
The proof of this theorem is broken into two steps:

\begin{itemize}
\item[(i)] The first step is to reduce the problem for arbitrary states $\rho$ to states that are diagonal in the Fock basis. To this end, we use ideas developed in~\cite{PalmaTrevisanGiovannetti2016Passive}. We show that, fixing the eigenvalues of $\rho$ and rotating its eigen-basis, we can obtain a diagonal state $\widehat \rho$ that satisfies $\Upsilon(\rho)\geq \Upsilon(\widehat \rho)$. We then conclude that diagonal states are sufficient when minimizing $\Upsilon(\cdot)$.

\item[(ii)] In the second step, we show that the optimal diagonal states are thermal. The proof idea in this step is extracted from \emph{tensorization type} arguments. Tensorization was first used by Gross~\cite{Gross75} for proving his celebrated log-Sobolev inequality. The idea in~\cite{Gross75} is that by the central limit theorem, an expectation value with respect to a  Gaussian distribution can be understood as the limit $n\to +\infty$ of the expectation value of some lifted function on the product space $\{0,1\}^n$ with respect to a product probability measure. Here, being interested in thermal states, our distributions of reference are \emph{geometric distributions}, and geometric distributions can be understood in terms of the first success in a sequence of independent Bernoulli trials. Thus, it is natural to employ a tensorization type argument in order to reduce the optimization over general single-mode diagonal states to thermal ones. In our proof, we do not directly refer to Bernoulli trials, yet our intuition on why and how it works is really rooted in a tensorization argument as described here.

\end{itemize}

\begin{proof}
(i) In the first step of the proof we  show that for any state $\rho$, there is a state $\widehat \rho$ that  is diagonal in the  Fock  basis and satisfies $\Upsilon(\rho)\geq \Upsilon(\widehat  \rho)$.  

Let 
$$\rho = \sum_{n=0}^\infty  \lambda_n \ketbra{\psi_n}{\psi_n},$$
be the eigen-decomposition of $\rho$ where $\lambda_0 \geq \lambda_1\geq  \cdots \geq 0$ are the eigenvalues of $\rho$ and $\{\ket{\psi_n}:\, n\geq 0\}$ is its eigen-basis. In this case, there are $r_k=\lambda_k-\lambda_{k+1}\geq0$ such that
$$\rho = \sum_{k=0}^{\infty} r_k P_k,$$
where $P_k = \sum_{n=0}^{k} \ketbra{\psi_n}{\psi_n}$ is the projection operator on the first $k+1$ eigenvectors of $\rho$. Now  we define
$$\widehat  \rho = \sum_{k=0}^\infty  r_k \Pi_k,$$
where $\Pi_k = \sum_{n=0}^{k}  \ketbra{n}{n}$ is the projection operator on  the first  $k+1$ vectors in the Fock basis. 
We note  that by definition, $\widehat \rho = \sum_{k=0}^n \lambda_k \ketbra{n}{n}$ is diagonal in the Fock basis and shares the same eigenvalues with $\rho$. Therefore, they have the same entropy 
\begin{align}\label{eq:S-rho-hatrho}
S(\widehat \rho)= S(\rho).
\end{align}
Next, we note that the infimum of $\tr(Q_k \bfa^\dagger \bfa)$ over  
projectors $Q_k$ with rank $k+1$, equals the sum of the first $k+1$ smallest eigenvalues of $\bfa^\dagger \bfa$, and is achieved by $Q_k = \Pi_k$. 
This implies that the mean photon number of $\rho$ is lower bounded by that of $\widehat \rho $:
\begin{align}\label{eq:adaggera-rho-hatrho}
\tr(\rho\bfa^\dagger \bfa) & = \sum_k r_k \tr(P_k \bfa^\dagger \bfa) \geq \sum_k r_k \tr(\Pi_k \bfa^\dagger \bfa) = \tr(\widehat \rho \bfa^\dagger \bfa). 
\end{align}
We also have 
$$\big\langle \cL_{0}\big(\rho^{1/p}\big), \rho^{1/\hatp}\big\rangle = \tr(\rho \bfa\bfa^\dagger) - \tr\big(\rho^{1/p} \bfa \rho^{1/\hatp} \bfa^\dagger\big) = \tr\big(\rho^{1/p}\rho^{1/\hatp} \bfa\bfa^\dagger\big) - \tr\big(\rho^{1/p} \bfa \rho^{1/\hatp} \bfa^\dagger\big).$$
We note that $\rho^{1/p} = \sum_n \lambda_n^{1/p} \ketbra{\psi_n}{\psi_n}$ with $\lambda_0^{1/p}\geq \lambda_1^{1/p}\geq \cdots \geq 0$. Then, there are $r_{p,k} \geq 0$ such that $\rho^{1/p} = \sum_k  r_{p, k} P_k$. Similarly,  there are  $r_{\hatp, k}\geq 0$ such that $\rho^{1/\hatp} = \sum_k  r_{\hatp, k} P_k$. We also have $\widehat \rho^{1/p} = \sum_k  r_{p, k} \Pi_k$ and $\widehat \rho^{1/\hatp} = \sum_k  r_{\hatp, k} \Pi_k$. Therefore, we have 
\begin{align*}
\big\langle \cL_0 &\big(\rho^{1/p}\big), \rho^{1/\hatp}\big\rangle  = \sum_{k, \ell} r_{p,k}\,r_{\hat p, \ell} \Big(\! \tr\big(P_kP_\ell  \bfa\bfa^\dagger\big) - \tr\big( P_k \bfa P_\ell \bfa^\dagger \big)\!\Big)\\
& = \sum_{k\geq  \ell} r_{p,k}\,r_{\hat p, \ell} \Big(\! \tr\big(P_kP_\ell \bfa \bfa^\dagger \big) - \tr\big( P_k \bfa P_\ell \bfa^\dagger \big)\!\Big) + \sum_{k<  \ell} r_{p,k}\,r_{\hat p, \ell} \Big(\! \tr\big(P_kP_\ell \bfa \bfa^\dagger \big) - \tr\big( P_k \bfa P_\ell \bfa^\dagger \big)\!\Big)\\
& = \sum_{k\geq  \ell} r_{p,k}\,r_{\hat p, \ell} \Big(\! \tr\big(P_\ell \bfa\bfa^\dagger \big) - \tr\big( P_k \bfa P_\ell \bfa^\dagger \big)\!\Big) + \sum_{k<  \ell} r_{p,k}\,r_{\hat p, \ell} \Big(\! \tr\big(P_k \bfa \bfa^\dagger\big) - \tr\big( P_k \bfa P_\ell \bfa^\dagger \big)\!\Big)\\
& \geq \sum_{k\geq  \ell} r_{p,k}\,r_{\hat p, \ell} \Big(\! \tr\big(P_\ell \bfa \bfa^\dagger\big) - \tr\big(  \bfa P_\ell\bfa^\dagger \big)\!\Big) + \sum_{k<  \ell} r_{p,k}\,r_{\hat p, \ell} \Big(\! \tr\big(P_k \bfa \bfa^\dagger\big) - \tr\big( P_k \bfa \bfa^\dagger \big)\!\Big),
\end{align*}
where in the last inequality we use $P_k, P_\ell\preceq I$. Thus, using the commutation relation $[\bfa, \bfa^\dagger]= 1$, we get
\begin{align*}
\big\langle \cL_0 \big(\rho^{1/p}\big), \rho^{1/\hatp}\big\rangle  
 \geq  \sum_{k\geq  \ell} r_{p,k}\,r_{\hat p, \ell}\,  \tr(P_k )  =   \sum_{k\geq  \ell} r_{p,k}\,r_{\hat p, \ell} (k+1).
\end{align*}
Repeating the same computation for $\widehat  \rho$, we observe that the support  of $\bfa \Pi_\ell \bfa^\dagger$ is included in the support of $\Pi_k$ if $k\geq \ell$. This implies $\tr\big( \Pi_k \bfa \Pi_\ell \bfa^\dagger \big) = \tr\big( \bfa \Pi_\ell \bfa^\dagger \big)$. Similarly, if $k<\ell$, then the support  of $\bfa^\dagger \Pi_k\bfa$ is inside the support of $\Pi_\ell$, which implies $\tr\big( \Pi_k \bfa \Pi_\ell \bfa^\dagger \big)=\tr\big( \bfa^\dagger \Pi_k \bfa \Pi_\ell  \big) = \tr\big( \Pi_k \bfa  \bfa^\dagger \big)$. Therefore, we have
\begin{align}\label{eq:L-0-rho-hatrho}
\big\langle \cL_0 \big(\rho^{1/p}\big), \rho^{1/\hatp}\big\rangle  
 \geq     \sum_{k\geq  \ell} r_{p,k}\,r_{\hat p, \ell} (k+1) = \big\langle \cL_0 \big(\widehat \rho^{1/p}\big), \widehat \rho^{1/\hatp}\big\rangle.
\end{align}
By using the same argument, we also have
\begin{align}\label{eq:L-1-rho-hatrho}
\big\langle \cL_1 \big(\rho^{1/p}\big), \rho^{1/\hatp}\big\rangle  
 \geq   \big\langle \cL_1 \big(\widehat \rho^{1/p}\big), \widehat \rho^{1/\hatp}\big\rangle.
\end{align}
Putting~\eqref{eq:S-rho-hatrho}-\eqref{eq:L-1-rho-hatrho} together and using the fact that $\nu_0, \nu_1, \omega\geq 0$ we conclude that $\Upsilon(\rho)\geq \Upsilon(\widehat \rho)$ for $p>1$. The same inequality for $p=1$ is obtained by taking the limit $p\to 1^+$. Therefore, to prove $\Upsilon(\rho)\geq \eta_{\thermal}$ in Theorem~\ref{thm:main-technical-result}, it suffices to restrict to diagonal states in the Fock basis.

\medskip
\medskip
\noindent
(ii) We now, in the second step, show that the optimal diagonal states that minimize $\Upsilon(\rho)$ are thermal. Let
$$\rho= \sum_n \lambda_n \ketbra{n}{n},$$
be the eigen-decomposition of $\rho$.
A straightforward computation yields 
\begin{align}\label{eq:Upsilon-single-mode-diagonal}
\Upsilon(\rho) & = \hatp \sum_n \Big( \nu_0 (n+1)\Big(\lambda_n-\lambda_n^{1/p}\lambda_{n+1}^{1/\hat p}\Big) + \nu_1 n \Big(\lambda_n - \lambda_n^{1/p}\lambda_{n-1}^{1/\hat p} \Big)\!\Big)\nonumber\\
&\quad ~ + \omega \sum_n n\,\lambda_n  -\sum_n \lambda_n\log \lambda_n.
\end{align}

By the definition of $\eta_\thermal$ in~\eqref{eq:def-eta-thermal},  for any $0\leq x\leq 1$ we have\footnote{Here, we take the limits $x\to 0^+$ and $x\to 1^-$ to include the boundary values.}
$$\eta_{\thermal} (1-x)\leq\hatp \Big(\nu_0 (1-x^{1/\hatp})+ \nu_1 (x-x^{1/p})   \Big) + \omega x -  x  \log x- (1-x)\log(1-x).$$
For any $\ell\geq 0$ let 
$$s_\ell = \sum_{n=\ell}^\infty \lambda_n, \qquad x_\ell = \frac{s_{\ell+1}}{s_\ell}.$$  
We note that $s_\ell = \lambda_\ell + s_{\ell+1}$ and $0\leq x_\ell\leq 1$. Therefore,  
\begin{align*}
 \eta_{\thermal} (1-x_\ell)
\leq \hatp \Big(\nu_0 \Big(1-x_\ell^{1/\hatp}\Big)+ \nu_1 \Big(x-x_\ell^{1/p}\Big) \!  \Big) + \omega x_\ell -  x_\ell  \log x_\ell- (1-x_\ell)\log(1-x_\ell).
\end{align*}
Multiplying both sides by $s_\ell$ and summing over $\ell$, we obtain
\begin{align}\label{eq:tensorization-ineq}
 \eta_{\thermal} \sum_{\ell=0}^\infty s_\ell (1-x_\ell)
&\leq \hatp \sum_{\ell=0}^\infty s_\ell \Big(\nu_0 \Big(1-x_\ell^{1/\hatp}\Big)+ \nu_1\Big(x_\ell-x_\ell^{1/p}\Big) \!  \Big) \nonumber\\
&\quad ~ + \omega \sum_{\ell=0}^\infty s_\ell x_\ell -  \sum_{\ell=0}^\infty s_\ell\big(x_\ell  \log x_\ell+ (1-x_\ell)\log(1-x_\ell)\big).
\end{align}
We compute and estimate each term in the above equation as follows.
First, we have 
$$\sum_{\ell=0}^\infty s_\ell (1-x_\ell) =  \sum_{\ell=0}^\infty \lambda_\ell = 1.$$
Second, by applying H\"older's inequality, we obtain
\begin{align}\label{eq:applying-Holder}
\sum_{\ell=0}^\infty  s_\ell (1-x_\ell^{1/\hat p})  & = \sum_{\ell=0}^\infty s_\ell -\sum_{\ell=0}^\infty    s_\ell^{1/p} s_{\ell+1}^{1/\hat p} \nonumber \\
& = \sum_{\ell=0}^\infty s_\ell -\sum_{\ell=0}^\infty  \Big(\sum_{n=\ell}^\infty \lambda_n\Big)^{\!1/p}  \Big(\sum_{n=\ell}^\infty \lambda_{n+1}\Big)^{\!1/\hatp} \nonumber \\
&\leq \sum_{\ell=0}^\infty  \sum_{n=\ell}^\infty \lambda_n - \sum_{\ell=0}^\infty  \sum_{n=\ell}^\infty \lambda_n^{1/p} \lambda_{n+1}^{1/\hatp} \nonumber \\
& = \sum_{n=0}^\infty (n+1)\Big(\lambda_n - \lambda_n^{1/p} \lambda_{n+1}^{1/\hatp}\Big).
\end{align}
Similarly, we obtain
\begin{align*}
\sum_{\ell=0}^\infty  s_\ell (x_\ell -x_\ell^{1/p}) \leq \sum_{n=0}^\infty n \big(\lambda_n- \lambda_n^{1/p} \lambda_{n-1}^{1/\hatp}\big).
\end{align*}
We also have 
$$\sum_{\ell=0}^\infty s_\ell x_\ell = \sum_{\ell=0}^\infty s_{\ell+1}  =\sum_{\ell=0}^\infty \sum_{n=\ell+1}^\infty \lambda_n = \sum_{n=0}^\infty n\lambda_n.$$
Finally, the last term of \eqref{eq:tensorization-ineq} becomes
\begin{align*}
\sum_{\ell=0}^\infty s_\ell\big(x_\ell  \log x_\ell+ (1-x_\ell)\log(1-x_\ell)\big) 
& =  \sum_{\ell=0}^\infty s_\ell\Big(\frac{s_{\ell+1}}{s_\ell}  \log \frac{s_{\ell+1}}{s_\ell}+ \frac{\lambda_{\ell}}{s_\ell}\log\frac{\lambda_{\ell}}{s_\ell}\Big)\\
& =  \sum_{\ell=0}^\infty \Big(s_{\ell+1}  \log s_{\ell+1}- s_{\ell+1}\log s_\ell+ \lambda_\ell \log \lambda_\ell -\lambda_{\ell}\log s_\ell\Big)\\
& =  \sum_{\ell=0}^\infty \Big(s_{\ell+1}  \log s_{\ell+1}- s_{\ell}\log s_\ell+ \lambda_\ell \log \lambda_\ell \Big)\\
& = \sum_{\ell=0}^\infty \lambda_\ell\log\lambda_\ell + \lim_{k\to+\infty}  \sum_{\ell=0}^{k-1} \Big(s_{\ell+1}  \log s_{\ell+1}- s_{\ell}\log s_\ell\Big)\\
& = \sum_{\ell=0}^\infty \lambda_\ell\log\lambda_\ell - s_0\log s_0 + \lim_{k\to+\infty}  s_{k} \log s_{k}\\
& = \sum_{\ell=0}^\infty \lambda_\ell\log\lambda_\ell,
\end{align*}
where in the last line we use $s_0=\sum_n \lambda_n=1$ and $\lim_{k\to +\infty} s_k=0$. Using these equations in~\eqref{eq:tensorization-ineq} and comparing to~\eqref{eq:Upsilon-single-mode-diagonal} we arrive at $\Upsilon(\rho)\geq \eta_\thermal$. 

The proof for $p=1$ is similar; we only need to replace the H\"older inequality in~\eqref{eq:applying-Holder} with 
$$-s_\ell \log x_{\ell}\leq  \sum_{n=\ell}^\infty \lambda_n\big(\log \lambda_{n}-\log \lambda_{n+1}\big),$$
that is derived from H\"older's inequality by taking an appropriate limit.\footnote{This inequality can also be derived from the non-negativity of the Kullback-Leibler divergence.} 
\end{proof}

We remark that in part (ii) of the above proof, when the diagonal state $\rho$ is thermal, the parameter $x_\ell$ is independent of $\ell$. Moreover, in this case, the H\"older inequality applied in~\eqref{eq:applying-Holder} is tight. 

Note that our meta log-Sobolev inequality can be viewed as a generalization of a log-sobolev inequality introduced in~\cite[Theorem~3.1]{DTG18} which holds only for the generator of  the quantum-limited attenuator channel.

%******************************
\subsection{Meta log-Sobolev inequality for multimode states}

For some applications it is useful to consider the function $\Upsilon(\cdot)$ in the \emph{multimode} case and prove a generalization of Theorem~\ref{thm:main-technical-result} for multimode quantum states. In this subsection, we establish such a generalization but for the special cases of multimode states that can be prepared by applying a Gaussian unitary on  multimode states that are diagonal in the Fock basis, which include all multimode Gaussian states, and multimode states that can be prepared by applying a passive Gaussian unitary on any product state.

For any $m$-mode quantum state $\rho$ with finite mean photon number define 
\begin{align}\label{eq:def-Upsilon-m}
\Upsilon_{m}(\rho)&:=\frac{\hatp}{m} \sum_{j=1}^m  \big\langle \cL_j\big(\rho^{1/p}\big), \rho^{1/\hatp}\big\rangle + \frac{\omega}{m}\sum_{j=1}^m\tr(\rho \bfa_j^\dagger \bfa_j) +\frac 1m S(\rho),
\end{align}
where as before $\nu_{0}, \nu_{1},\omega\geq 0$ and $\cL_j = \nu_{0}\cL_{0, j}+\nu_{1}\cL_{1, j}$ is the Lindbladian acting on the $j$-th mode. Note that we still use $\Upsilon(\rho)=\Upsilon_{1}(\rho)$ for  the single-mode case. To establish our results on the $m$-mode generalizations of Theorem~\ref{thm:main-technical-result} we first state a lemma.

\begin{lemma}\label{lem:Gaussian-unitaries}
For any $m$-mode state $\rho$ the followings hold:
\begin{enumerate}
\item[{\rm (i)}] Let $\rho' = D_{ \xi} \rho D_\xi^\dagger$ where $D_\xi = D_{\xi_1}\otimes \dots \otimes D_{\xi_m}$ is an $m$-mode displacement operator. Suppose that $\tr(\rho \bfa_j)=0$ for all $j$. Then, $\Upsilon_m(\rho')\geq \Upsilon_m(\rho)$.
\item[{\rm (ii)}] Let $\rho' = U\rho U^\dagger$ where $U$ is a passive transformation, i.e., $U$ is a Gaussian unitary that commutes with $H_m=\sum_j \bfa_j^\dagger \bfa_j$. Then, $\Upsilon_m(\rho')= \Upsilon_m(\rho)$. 
\item[{\rm (iii)}] Let $\rho' = S_r\rho S_r^\dagger$ where $S_r = e^{\frac 12\sum_j r_j(\bfa_j^2 - (\bfa_j^\dagger)^2)}$ is a squeezing transformation. Suppose that $\tr\big(\rho \bfa_j^2\big)=\tr\big( \rho^{1/p} \bfa_j \rho^{1/\hatp} \bfa_j\big)=0$ for all $j$.  Then, $\Upsilon_m(\rho')\geq \Upsilon_m(\rho)$. 
\end{enumerate}
\end{lemma}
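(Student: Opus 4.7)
My plan is to track how the Bogoliubov action of each Gaussian unitary affects three types of pieces in $\Upsilon_m(\rho)$: the Dirichlet-form parts $T_0^{(j)}(\rho):=\tr(\rho\bfa_j\bfa_j^\dagger)-\tr(\rho^{1/p}\bfa_j\rho^{1/\hatp}\bfa_j^\dagger)$ and $T_1^{(j)}(\rho):=\tr(\rho\bfa_j^\dagger\bfa_j)-\tr(\rho^{1/p}\bfa_j^\dagger\rho^{1/\hatp}\bfa_j)$, the photon-number $\tr(\rho\bfa_j^\dagger\bfa_j)$, and the entropy $S(\rho)$. Since each map is a unitary conjugation, functional calculus gives $(\rho')^{1/p}=U\rho^{1/p}U^\dagger$ and $S(\rho')=S(\rho)$ automatically, so only the Dirichlet and photon-number contributions need to be controlled. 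I will use $\rho^{1/p}\rho^{1/\hatp}=\rho$ (in particular $\tr(\rho^{1/p}\rho^{1/\hatp})=1$) and cyclicity of the trace throughout.

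For part (i) the key identity is $D_\xi^\dagger\bfa_j D_\xi=\bfa_j+\xi_j$. Expanding $\tr(\rho'\bfa_j\bfa_j^\dagger)$ and $\tr((\rho')^{1/p}\bfa_j(\rho')^{1/\hatp}\bfa_j^\dagger)$, the linear-in-$\xi_j$ cross terms in both expressions reduce to $\bar\xi_j\tr(\rho\bfa_j)+\xi_j\tr(\rho\bfa_j^\dagger)$ (in the second trace this requires collapsing $\rho^{1/p}\rho^{1/\hatp}=\rho$); both vanish by hypothesis. The $|\xi_j|^2$ contributions on the two sides both equal $|\xi_j|^2$ and cancel in the difference, so $T_0^{(j)}$ is invariant, and by the same calculation so is $T_1^{(j)}$. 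The photon number grows: $\tr(\rho'\bfa_j^\dagger\bfa_j)=\tr(\rho\bfa_j^\dagger\bfa_j)+|\xi_j|^2$. Summing over $j$ then yields $\Upsilon_m(\rho')\ge\Upsilon_m(\rho)$.

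For part (ii), a passive Gaussian unitary $U$ commuting with $H_m$ acts by $U^\dagger\bfa_j U=\sum_k V_{jk}\bfa_k$ for some unitary matrix $V$. Because $\Upsilon_m$ only involves the symmetric sums over $j$ of traces quadratic in $\{\bfa_j,\bfa_j^\dagger\}$, the relation $\sum_j V_{jk}\overline{V}_{j\ell}=\delta_{k\ell}$ immediately gives $\sum_j\tr(\rho'\bfa_j\bfa_j^\dagger)=\sum_k\tr(\rho\bfa_k\bfa_k^\dagger)$ and the analogous identities for every other term appearing in $\Upsilon_m$. Hence $\Upsilon_m(\rho')=\Upsilon_m(\rho)$.

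For part (iii) I would use $S_r^\dagger\bfa_j S_r=\cosh r_j\,\bfa_j-\sinh r_j\,\bfa_j^\dagger$ together with its adjoint. Substituting into $T_0^{(j)}(\rho')$ and $T_1^{(j)}(\rho')$, the cross terms carry coefficients $\cosh r_j\sinh r_j$ and are \emph{exactly} $\tr(\rho\bfa_j^2)$, $\tr(\rho(\bfa_j^\dagger)^2)$, $\tr(\rho^{1/p}\bfa_j\rho^{1/\hatp}\bfa_j)$, and $\tr(\rho^{1/p}\bfa_j^\dagger\rho^{1/\hatp}\bfa_j^\dagger)$, all of which vanish by the two moment hypotheses (the last two are Hermitian conjugates of the first). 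What survives is
\[
T_0^{(j)}(\rho')=\cosh^2 r_j\,T_0^{(j)}(\rho)+\sinh^2 r_j\,T_1^{(j)}(\rho),\qquad T_1^{(j)}(\rho')=\cosh^2 r_j\,T_1^{(j)}(\rho)+\sinh^2 r_j\,T_0^{(j)}(\rho),
\]
so that $\nu_0 T_0^{(j)}(\rho')+\nu_1 T_1^{(j)}(\rho')$ exceeds $\nu_0 T_0^{(j)}(\rho)+\nu_1 T_1^{(j)}(\rho)$ by $(\nu_0+\nu_1)\sinh^2 r_j\,(T_0^{(j)}(\rho)+T_1^{(j)}(\rho))$. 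This excess is non-negative because each $T_k^{(j)}\ge 0$, a standard positivity property of a QMS Dirichlet form which in this concrete setting follows from a short H\"older/Cauchy--Schwarz bound on $|\tr(\rho^{1/p}\bfa_j\rho^{1/\hatp}\bfa_j^\dagger)|$ combined with $[\bfa_j,\bfa_j^\dagger]=1$. A parallel computation yields $\tr(\rho'\bfa_j^\dagger\bfa_j)-\tr(\rho\bfa_j^\dagger\bfa_j)=\sinh^2 r_j\,(2\tr(\rho\bfa_j^\dagger\bfa_j)+1)\ge 0$, and combining with the non-negative weights $\nu_0,\nu_1,\omega$ gives $\Upsilon_m(\rho')\ge\Upsilon_m(\rho)$. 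The main obstacle I anticipate is the bookkeeping in this last part: one must check that the four anomalous cross terms produced by the Bogoliubov substitution are \emph{precisely} those controlled by the two moment hypotheses, and then confirm via positivity of the Dirichlet form that the $\sinh^2 r_j$-residuals point in the right direction.
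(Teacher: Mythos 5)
Your parts (i) and (ii) reproduce the paper's argument essentially verbatim and are correct, and your part (iii) follows the same route (Bogoliubov substitution, vanishing of the four anomalous cross terms under the two moment hypotheses, and the identity that the excess in $\nu_0T_0^{(j)}+\nu_1T_1^{(j)}$ equals $(\nu_0+\nu_1)\sinh^2(r_j)\big(T_0^{(j)}(\rho)+T_1^{(j)}(\rho)\big)$, which matches the paper's formula). However, the justification of the final positivity step is a genuine gap: your claim that \emph{each} $T_k^{(j)}\geq 0$ is false. The H\"older/Cauchy--Schwarz route does give $\tr\big(\rho^{1/p}\bfa_j\rho^{1/\hatp}\bfa_j^\dagger\big)\leq \tr(\rho\bfa_j\bfa_j^\dagger)^{1/p}\tr(\rho\bfa_j^\dagger\bfa_j)^{1/\hatp}\leq \tr(\rho\bfa_j\bfa_j^\dagger)$, hence $T_0^{(j)}\geq 0$, but the analogous bound for $T_1^{(j)}$ points the wrong way precisely because $[\bfa_j,\bfa_j^\dagger]=1$ makes $\tr(\rho\bfa_j\bfa_j^\dagger)\geq\tr(\rho\bfa_j^\dagger\bfa_j)$. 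Concretely, for $m=1$, $p=\hatp=2$ and $\rho=\tfrac{9}{10}\ketbra{0}{0}+\tfrac{1}{10}\ketbra{1}{1}$ one finds $\tr(\rho\,\bfa^\dagger\bfa)=\tfrac{1}{10}$ while $\tr\big(\rho^{1/2}\bfa^\dagger\rho^{1/2}\bfa\big)=\sqrt{9/100}=\tfrac{3}{10}$, so $T_1=-\tfrac{1}{5}<0$. The quantity $\big\langle\cL_{1}(\rho^{1/p}),\rho^{1/\hatp}\big\rangle$ is not a Dirichlet form with respect to the weighted inner product of the invariant state, so the ``standard positivity of a QMS Dirichlet form'' does not apply to it.

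The proof is nevertheless reparable, because your own algebra shows that only the \emph{sum} $T_0^{(j)}+T_1^{(j)}$ enters the excess, and this sum is non-negative. That is exactly what the paper establishes: writing $\rho=\sum_k\lambda_k\ketbra{\psi_k}{\psi_k}$,
\begin{align*}
\tr\big(\rho^{1/p}\bfa_j\rho^{1/\hatp}\bfa_j^\dagger\big)+\tr\big(\rho^{1/p}\bfa_j^\dagger\rho^{1/\hatp}\bfa_j\big)
=\sum_{k,\ell}\Big(\lambda_k^{1/p}\lambda_\ell^{1/\hatp}+\lambda_k^{1/\hatp}\lambda_\ell^{1/p}\Big)\big|\bra{\psi_k}\bfa_j\ket{\psi_\ell}\big|^2
\leq\sum_{k,\ell}\big(\lambda_k+\lambda_\ell\big)\big|\bra{\psi_k}\bfa_j\ket{\psi_\ell}\big|^2,
\end{align*}
by Young's inequality $\lambda_k^{1/p}\lambda_\ell^{1/\hatp}\leq\tfrac{1}{p}\lambda_k+\tfrac{1}{\hatp}\lambda_\ell$, and the right-hand side equals $\tr(\rho\bfa_j\bfa_j^\dagger)+\tr(\rho\bfa_j^\dagger\bfa_j)$. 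Substituting this symmetrized bound for your per-term positivity claim closes part (iii); the remaining pieces, including $\tr(\rho'\bfa_j^\dagger\bfa_j)-\tr(\rho\bfa_j^\dagger\bfa_j)=\sinh^2(r_j)\big(2\tr(\rho\bfa_j^\dagger\bfa_j)+1\big)\geq0$, are correct.
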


\begin{proof} 
(i) We note that $D_\xi^\dagger \bfa_j D_\xi = \bfa_j + \xi_j$. Therefore, $D_\xi^\dagger \bfa_j^\dagger \bfa_j D_\xi = \bfa_j^\dagger \bfa_j + \xi_j \bfa_j^\dagger + \bar \xi_j \bfa_j + |\xi_j|^2$. Moreover, 
$$\tr(\rho' \bfa_j^\dagger \bfa_j) = \tr(\rho D_\xi^\dagger \bfa_j^\dagger \bfa_k D_\xi) = \tr(\rho \bfa_j^\dagger \bfa_j) + |\xi_j|^2,$$
where for the second equality we use the assumption $\tr(\rho \bfa_j)=0$. We similarly have 
$$\tr\big({\rho'}^{1/p} \bfa_j {\rho'}^{1/\hatp}\bfa_j^\dagger\big) = \tr\big({\rho}^{1/p} \bfa_j {\rho}^{1/\hatp}\bfa_j^\dagger \big) + |\xi_j|^2.$$
Therefore, we have  
$$\big\langle \cL_{0,j}(\rho'^{1/p}), \rho'^{1/\hatp}\big\rangle =\tr(\rho' \bfa_j\bfa_j^\dagger )-\tr\big({\rho'}^{1/p} \bfa_j {\rho'}^{1/\hatp}\bfa_j^\dagger\big) =\big\langle \cL_{0,j}(\rho^{1/p}), \rho^{1/\hatp}\big\rangle.$$ By using similar computations, we can also show that  $\big\langle \cL_{1,j}(\rho'^{1/p}), \rho'^{1/\hatp}\big\rangle = \big\langle \cL_{1,j}(\rho^{1/p}), \rho^{1/\hatp}\big\rangle$. On the other hand, $S(\rho') = S(D_\xi \rho D_\xi^\dagger) = S(\rho)$. Putting these together, we find that 
$$\Upsilon_m(\rho') = \Upsilon_m(\rho)+ \frac {\omega}m \sum_j |\xi_j|^2\geq \Upsilon_m(\rho). $$

\medskip
\noindent
(ii) For any passive transformation $U$, there is an $m\times m$ \emph{unitary} matrix $(u_{jk})_{j,k}$ such that $U^\dagger \bfa_j U = \sum_{k} u_{jk} \bfa_{k}$. Employing this relation yields
$$\sum_{j=1}^m\tr(\rho' \bfa_j^\dagger \bfa_j)=\sum_{j=1}^m\tr\big(\rho\, U^\dagger\bfa_j^\dagger U U^\dagger\bfa_j U\big)=\sum_{j,k,k'=1}^m u_{jk}\bar{u}_{jk'}\, \tr\big(\rho \bfa_k^\dagger \bfa_{k'}\big)=\sum_{k=1}^m\tr(\rho \bfa_k^\dagger \bfa_k),$$
where we used $\sum_{j}u_{jk}\bar{u}_{jk'}=\delta_{k,k'}$. Applying similar computations, one can also verify that $\big\langle \cL_{b,j}(\rho'^{1/p}), \rho'^{1/\hatp}\big\rangle = \big\langle \cL_{b,j}(\rho^{1/p}), \rho^{1/\hatp}\big\rangle$ for $b\in\{0,1\}$. Therefore, using these relations together with $S(\rho') = S(\rho)$, we obtain $\Upsilon_m(\rho')=\Upsilon_m(\rho)$.

\medskip
\medskip
\noindent
(iii) The proof of this part is more involved. First, we note that $S_r^\dagger \bfa_j S_r = \cosh(r_j)\bfa_j - \sinh(r_j)\bfa_j^\dagger$, which using $\cosh^2(r_j) - \sinh^2(r_j)=1$ implies
\begin{align}\label{eq:S-r-a}
S_r^\dagger \bfa_j^\dagger \bfa_j S_r = \bfa_j^\dagger \bfa_j + \sinh^2(r_j)(\bfa_j^\dagger \bfa_j + \bfa_j\bfa_j^\dagger) - \cosh(r_j)\sinh(r_j) \big((\bfa_j^\dagger)^2 + \bfa_j^2\big).
\end{align}
Using these relations and applying the assumption on $\rho$, we find that
\begin{align*}
 \big\langle \cL_{0,j}\big(\rho'^{1/p}\big), \rho'^{1/\hatp}\big\rangle &= \big\langle \cL_{0,j}\big(\rho^{1/p}\big), \rho^{1/\hatp}\big\rangle \\
 &\quad \, + \sinh^2(r_j)\Big( \tr(\rho \bfa_j^\dagger \bfa_j)  + \tr(\rho\bfa_j\bfa_j^\dagger) - \tr(\rho^{1/p} \bfa_j \rho^{1/\hatp} \bfa_j^\dagger) - \tr(\rho^{1/p}\bfa_j^\dagger \rho^{1/\hatp} \bfa_j)   \Big).
\end{align*}
Let $\rho=\sum_k \lambda_k\ketbra{\psi_k}{\psi_k}$ be the eigen-decomposition of $\rho$. Then, by Young's inequality we have
\begin{align*}
\tr(\rho^{1/p} \bfa_j \rho^{1/\hatp} \bfa_j^\dagger) + &\tr(\rho^{1/p}\bfa_j^\dagger \rho^{1/\hatp} \bfa_j) \\
& = \sum_{k, \ell} \Big( \lambda_k^{1/p}\lambda_\ell^{1/\hatp}+ \lambda_k^{1/\hatp}\lambda_\ell^{1/p}\Big)\big| \bra{\psi_k} \bfa_j \ket{\psi_\ell}\big|^2  \\
&\geq \sum_{k, \ell} \Big( \frac 1p\lambda_k + \frac 1{\hatp}\lambda_\ell+ \frac 1{\hatp}\lambda_k+ \frac 1p\lambda_\ell\Big)\big| \bra{\psi_k} \bfa_j \ket{\psi_\ell}\big|^2\\
&= \sum_{k, \ell} \big( \lambda_k + \lambda_\ell\big)\big| \bra{\psi_k} \bfa_j \ket{\psi_\ell}\big|^2\\
&= \sum_{k}  \lambda_k \big\| \bfa_j^\dagger \ket{\psi_k}\big\|^2 + \sum_{ \ell}  \lambda_\ell\big\|  \bfa_j \ket{\psi_\ell}\big\|^2\\
& = \tr\big(\rho \bfa_j\bfa_j^\dagger\big) + \tr\big(\rho \bfa_j^\dagger\bfa_j\big).
\end{align*}
Therefore, we can see that 
\begin{align*}
 \big\langle \cL_{0,j}\big(\rho'^{1/p}\big), \rho'^{1/\hatp}\big\rangle \geq  \big\langle \cL_{0,j}\big(\rho^{1/p}\big), \rho^{1/\hatp}\big\rangle,
 \end{align*}
and similarly $ \big\langle \cL_{1,j}\big(\rho'^{1/p}\big), \rho'^{1/\hatp}\big\rangle \geq  \big\langle \cL_{1,j}\big(\rho^{1/p}\big), \rho^{1/\hatp}\big\rangle$. Moreover, once again using~\eqref{eq:S-r-a} we have
$$\tr(\rho' H_m) =\tr(\rho H_m) + \sum_j \sinh^2(r_j) \big( \tr(\rho\bfa_j^\dagger \bfa_j) + \tr(\rho\bfa_j\bfa_j^\dagger)  \big) \geq \tr(\rho H_m).$$
We also have $S(\rho')=S(\rho)$. Putting these together we arrive at $\Upsilon_m(\rho')\geq \Upsilon_m(\rho)$.

\end{proof}

\begin{theorem}\label{thm:main-technical-result-multimode-Gaussian}
For any $\nu_0, \nu_1, \omega\geq 0$ let $\eta_\thermal$ be given by~\eqref{eq:def-eta-thermal}. Then, for any multimode state $\rho'=U_{\rm G}\, \rho U_{\rm G}^\dagger$ where $U_{\rm G}$ is an $m$-mode Gaussian unitary, we have
\begin{align}\label{eq:Upsilon-ineq-Gaussian}
\Upsilon_m(\rho')\geq \eta_\thermal,
\end{align}
assuming that one of the following conditions is satisfied:
\begin{itemize}
\item[{\rm (a)}] $\rho=\rho_{1}\otimes\cdots\otimes\rho_{m}$ is a product state and $U_G$ is passive.
\item[{\rm (b)}] $\rho=\rho_{\diag}$ is diagonal in the Fock basis. 
\end{itemize}

\end{theorem}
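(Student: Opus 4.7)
The plan is to handle cases (a) and (b) separately, in each case peeling off the Gaussian unitary using Lemma~\ref{lem:Gaussian-unitaries} and then reducing to a base inequality that will rest on Theorem~\ref{thm:main-technical-result}.

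For case (a), I would first observe that for a product state $\rho = \rho_1\otimes\cdots\otimes\rho_m$ every ingredient of $\Upsilon_m$ is additive across modes: since $\cL_j$ acts only on the $j$-th factor and $\tr(\rho_k^{1/p}\rho_k^{1/\hatp})=1$ for $k\neq j$, we have $\langle\cL_j(\rho^{1/p}),\rho^{1/\hatp}\rangle = \langle\cL(\rho_j^{1/p}),\rho_j^{1/\hatp}\rangle$, and of course $\tr(\rho\bfa_j^\dagger\bfa_j) = \tr(\rho_j\bfa^\dagger\bfa)$ and $S(\rho)=\sum_j S(\rho_j)$. Therefore $\Upsilon_m(\rho) = \frac 1m\sum_j \Upsilon(\rho_j)\geq \eta_{\thermal}$ by Theorem~\ref{thm:main-technical-result}, and Lemma~\ref{lem:Gaussian-unitaries}(ii) absorbs the passive $U_G$ without loss.

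For case (b), my plan is to invoke the Bloch--Messiah decomposition $U_G = D_\xi\, U_1\, S_r\, U_2$ with $U_1,U_2$ passive, $S_r$ a tensor product of single-mode squeezings, and $D_\xi$ an $m$-mode displacement, and to apply Lemma~\ref{lem:Gaussian-unitaries} piece by piece. Part~(ii) absorbs $U_2$ and $U_1$ for free. To invoke part~(iii) for $S_r$ acting on $\rho_1 := U_2\rho U_2^\dagger$, I need $\tr(\rho_1\bfa_j^2)=0$ and $\tr(\rho_1^{1/p}\bfa_j\rho_1^{1/\hatp}\bfa_j)=0$; substituting $U_2^\dagger\bfa_j U_2 = \sum_k u^{(2)}_{jk}\bfa_k$ reduces these to linear combinations of $\tr(\rho\bfa_k\bfa_\ell)$ and $\tr(\rho^{1/p}\bfa_k\rho^{1/\hatp}\bfa_\ell)$, all of which vanish because the underlying operators strictly lower the total photon number and therefore have no diagonal Fock matrix elements against the Fock-diagonal $\rho$. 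An analogous computation, now inserting $S_r^\dagger\bfa_k S_r = \cosh(r_k)\bfa_k - \sinh(r_k)\bfa_k^\dagger$ as well, shows $\tr(\rho_3\bfa_j)=0$ for $\rho_3 := U_1 S_r U_2\rho U_2^\dagger S_r^\dagger U_1^\dagger$, which enables part~(i) for $D_\xi$. Chaining these inequalities yields $\Upsilon_m(U_G\rho U_G^\dagger)\geq\Upsilon_m(\rho)$.

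The last and most substantive step, and the true obstacle, is the base inequality $\Upsilon_m(\rho)\geq \eta_{\thermal}$ for Fock-diagonal $\rho$, since $\Upsilon_m$ is not additive across modes on such states. My plan is to write $\Upsilon_m(\rho) = \frac 1m\sum_{j=1}^m G_j(\rho)$, where $G_j$ collects the mode-$j$ Dirichlet, photon-number, and entropy contributions, and to prove $G_j(\rho)\geq \eta_{\thermal}$ by conditioning on the remaining modes. Concretely, for fixed $\vec n_{\neq j}$ the $n_j$-sum inside $G_j$ has the same form as the single-mode expression~\eqref{eq:Upsilon-single-mode-diagonal} but with unnormalized weights $\mu_k := \lambda_{(n_j=k,\vec n_{\neq j})}$. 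Pulling out $Z_{\vec n_{\neq j}}=\sum_k\mu_k$, the scaling $\mu_k^{1/p}\mu_{k+1}^{1/\hatp} = Z_{\vec n_{\neq j}}\tilde\mu_k^{1/p}\tilde\mu_{k+1}^{1/\hatp}$ together with $-\sum_k\mu_k\log\mu_k = -Z_{\vec n_{\neq j}}\log Z_{\vec n_{\neq j}} + Z_{\vec n_{\neq j}}S(\tilde\mu)$ recasts that inner sum as $Z_{\vec n_{\neq j}}\Upsilon(\tilde\rho_{\vec n_{\neq j}}) - Z_{\vec n_{\neq j}}\log Z_{\vec n_{\neq j}}$. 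Applying Theorem~\ref{thm:main-technical-result} gives $\Upsilon(\tilde\rho_{\vec n_{\neq j}})\geq \eta_{\thermal}$, and summing over $\vec n_{\neq j}$ converts the logarithmic remainder into $S(\rho_{\neq j})\geq 0$; hence $G_j(\rho)\geq \eta_{\thermal}+S(\rho_{\neq j})\geq\eta_{\thermal}$, completing the proof.
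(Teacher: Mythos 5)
Your proof is correct and follows essentially the same route as the paper: additivity of $\Upsilon_m$ over product states plus Lemma~\ref{lem:Gaussian-unitaries}(ii) for case (a), and the Bloch--Messiah decomposition combined with parts (i)--(iii) of Lemma~\ref{lem:Gaussian-unitaries} (with the same verification that the required first and second moments vanish for Fock-diagonal states) for case (b). The only difference is that you spell out the base inequality $\Upsilon_m(\rho_{\diag})\geq\eta_{\thermal}$ via explicit conditioning on the remaining modes and the entropy chain rule, whereas the paper delegates this step to a cited ``standard tensorization argument''; your version is a correct instantiation of exactly that argument.
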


\begin{proof}
We first note that by the definition of $\Upsilon_m(\cdot)$ and using Theorem~\ref{thm:main-technical-result} we have
$$\Upsilon(\rho_1\otimes\cdots \otimes \rho_m) = \frac 1m \sum_j \Upsilon(\rho_j)\geq \eta_{\thermal}.$$
Then, (a) follows from part (ii) of Lemma~\ref{lem:Gaussian-unitaries}. To prove (b) we use the \emph{Bloch-Messiah decomposition} to express the Gaussian unitary $U_G$ as $U_{\rm G}=D_\xi U S_r V$, where $D_\xi$ is a mutlimode displacement operator, $U, V$ are passive Gaussian transformations, and $S_r$ is a product of single-mode squeezing transformations~\cite{SerafiniBook}. Therefore, the multimode state can be written as
$$\rho' = D_\xi U S_r V \rho_{\diag} V^\dagger S_r^\dagger U^\dagger D_\xi^\dagger,$$
where $\rho_{\diag}$ is diagonal in the Fock basis. We use this expression to prove~\eqref{eq:Upsilon-ineq-Gaussian} in the following steps:

\begin{itemize}
\item[-] By Theorem~\ref{thm:main-technical-result}, the meta log-Sobolev inequality holds for all single-mode diagonal states. Then, applying a standard tensorization argument in the classical case, i.e., using the chain rule and the concavity of the entropy function (see, e.g.,~\cite[Proposition 3.7]{Mossel+2013}), we find that $\Upsilon_m(\rho_\diag)\geq \eta_{\thermal}$.   Then, by part (ii) of Lemma~\ref{lem:Gaussian-unitaries}, we have $\Upsilon_m(V \rho_{\diag} V^\dagger)=\Upsilon_m(\rho_{\diag})\geq \eta_\thermal$.

\item[-] Next we use part (iii) of Lemma~\ref{lem:Gaussian-unitaries} to establish~\eqref{eq:Upsilon-ineq-Gaussian} for $S_r V \rho_{\diag} V^\dagger S_r^\dagger$. To this end, we need to verify the required assumptions $\tr\big(V \rho_{\diag} V^\dagger \bfa_j^2\big)=\tr\big( (V \rho_{\diag} V^\dagger)^{1/p} \bfa_j (V \rho_{\diag} V^\dagger)^{1/\hatp} \bfa_j\big)=0$.  We note that $\tr\big(V \rho_{\diag} V^\dagger \bfa_j^2\big) = \tr\big(\rho_{\diag} (V^\dagger \bfa_j V)^2\big)$ and $V^\dagger \bfa_j V= \sum_{k} v_{jk} \bfa_{k}$ is a linear combination of $\bfa_{k}$'s. Moreover, $\tr(\rho_{\diag} \bfa_k\bfa_{k'})=0$ for any $k, k'$ simply because $\rho_{\diag}$ is diagonal in the Fock basis. Therefore, we have  $\tr\big(V \rho_{\diag} V^\dagger \bfa_j^2\big)=0$. By the same argument, we also have $\tr\big( (V \rho_{\diag} V^\dagger)^{1/p} \bfa_j (V \rho_{\diag} V^\dagger)^{1/\hatp} \bfa_j\big)=0$. Hence, $\Upsilon_m(S_r V \rho_{\diag} V^\dagger S_r^\dagger)=\Upsilon_m(V \rho_{\diag} V^\dagger)=\Upsilon_m(\rho_{\diag})\geq \eta_\thermal$.

\item[-] Once again, using part (ii) of Lemma~\ref{lem:Gaussian-unitaries} we find that~\eqref{eq:Upsilon-ineq-Gaussian} holds for $U S_r V \rho_{\diag} V^\dagger S_r^\dagger U^\dagger$ since we have already verified it for $S_r V \rho_{\diag} V^\dagger S_r^\dagger$ and $U$ is a passive transformation.

\item[-] Finally,  we note that  $V^\dagger S_r^\dagger U^\dagger \bfa_j U S_r V$ is a linear combination of $\bfa_k$'s and $\bfa_k^\dagger$'s. Moreover, $\tr(\rho_{\diag} \bfa_k) =\tr(\rho_{\diag} \bfa_k^\dagger)=0$. Therefore, $\tr(U S_r V \rho_{\diag} V^\dagger S_r^\dagger U^\dagger \bfa_j)=0$ for all $j$. Thus, by part (i) of Lemma~\ref{lem:Gaussian-unitaries}, inequality~\eqref{eq:Upsilon-ineq-Gaussian} holds for $\rho = D_\xi U S_r V \rho_{\diag} V^\dagger S_r^\dagger U^\dagger D_\xi^\dagger$.
\end{itemize}
\end{proof}

Notice that Theorem~\ref{thm:main-technical-result-multimode-Gaussian} implies $\Upsilon_m(\rho)\geq \eta_{\thermal}$ for all multimode Gaussian states. The point is that by \emph{Williamson's theorem} any Gaussian state can be transformed into a tensor product of thermal states using a Gaussian unitary, and thermal states are diagonal in the Fock basis~\cite{SerafiniBook}.

%************************************************************************
\section{Log-Sobolev inequalities for the quantum Ornstein-Uhlenbeck semigroup}\label{sec:LSI-OU}

As discussed in Section~\ref{sec:phase-cov-channels}, the semigroup of attenuator channels is sometimes called the quantum (bosonic) Ornstein-Uhlenbeck semigroup.\footnote{The quantum Ornstein-Uhlenbeck semigroup restricted to a certain subspace of operators resembles the classical Ornstein-Uhlenbeck semigroup~\cite[Equation (7.5)]{Cipriani+2000}.}
In this section, we explicitly compute the optimal $p$-log-Sobolev constant for this semigroup for any $1\leq p\leq 2$, and derive a quantum variant of the celebrated log-Sobolev inequality of Gross~\cite{Gross75}.  Due to the equivalence of log-Sobolev inequalities and hypercontractivity inequalities for quantum Markov semigroups~\cite{OZ99, KT13, BDR20}, our results provide the optimal hypercontractivity inequalities for the quantum Ornstein-Uhlenbeck semigroup.

We need to develop some notations to %explain
present our results. Let $\Phi_t=e^{-t\cL}$ be the quantum attenuator channel with
\begin{align}\label{eq:L-OU}
\cL(\rho) = e^{-\beta/2}\Big( \frac 12 \big\{\bfa\bfa^\dagger, \rho\big\} - \bfa^\dagger \rho\bfa\Big) +  e^{\beta/2}\Big(  \frac 12 \big\{\bfa^\dagger\bfa, \rho\big\} - \bfa \rho\bfa^\dagger\Big),
\end{align}
where $\beta>0$ is some parameter. 
The adjoint of the channel with respect to the Hilbert-Schmdit inner product is denoted by $\Phi_t^*$ and describes the evolution in the Heisenberg picture. Then, $\Phi_t^*=e^{-t\cL^*}$ where\footnote{In the literature of log-Sobolev inequalities usually $\cL$ is the generator in the Heisenberg picture and $\cL^*$ is the generator in the Schr\"odinger picture. Here, we change the notation since we started with channels in the Schr\"odinger picture in previous sections.}
\begin{align*}
\cL^*(X) = e^{-\beta/2}\Big( \frac{1}{2}\big\{\bfa \bfa^\dagger, X\big\}-\bfa X\bfa^\dagger \Big)+e^{\beta/2}\Big( \frac{1}{2}\big\{\bfa^\dagger \bfa, X\big\} -\bfa^\dagger X\bfa\Big).
\end{align*}
The semigroup $\{\Phi_t:\, t\geq 0\}$ has a fixed point which we denote by $\sigma=\sigma_\beta$:
\begin{align*}
\sigma=\sigma_\beta =  (1-e^{-\beta})\sum_{n=0}^\infty e^{-\beta n}\ketbra nn.
\end{align*}
Then, it is natural to define a \emph{weighted} inner product with respect to this state:
\begin{align}\label{eq:weighted-inner-prod}
\langle X, Y\rangle_{\sigma} = \tr\big( \sigma^{1/2} X^\dagger \sigma^{1/2} Y\big)=\big\langle \Gamma_\sigma(X), Y\big\rangle,
\end{align}
where
$$\Gamma_\sigma(X) = \sigma^{1/2} X\sigma^{1/2}.$$
We emphasize that the weighted inner product $\langle \cdot, \cdot\rangle_\sigma$ should not be confused with the Hilbert-Schmidt inner product $\langle \cdot, \cdot\rangle$ that has no subscript.
This inner product induces the $2$-norm 
$$\|X\|_{2, \sigma}^2 =\tr\big( \big|\Gamma_\sigma^{1/2}(X)\big|^2 \big) =\tr\big( |\sigma^{1/4}X\sigma^{1/4}|^2 \big),$$ 
where $|Y| = \sqrt{Y^\dagger Y}$. It can be verified that $\cL=\Gamma_\sigma\circ \cL^*\circ \Gamma_\sigma^{-1}$ and that $\cL^*$ is self-adjoint with respect to this weighted inner product. Thus, we may consider the \emph{Dirichlet form} associated to $\cL^*$ given by
$$\mathcal E_p(\rho) = \frac{p\hatp}{4}\Big\langle  \Gamma_\sigma^{-\frac 1\hatp}\big(\rho^{\frac 1\hatp}\big) , \cL^*\circ\Gamma_\sigma^{-\frac 1p}\big(\rho^{\frac 1 p}\big)    \Big\rangle_{\sigma}.$$
Here, $p\geq 1$ and the case of $p=1$ is understood in the limit as
$$\mathcal E_1(\rho) = \frac{1}{4}\big\langle  \log \rho-\log \sigma , \cL^*\circ\Gamma_\sigma^{- 1}(\rho)    \big\rangle_{\sigma}.$$
Now, a $p$-log-Sobolev inequality with parameters $c\geq 0$  takes the form:
$$c D(\rho\|\sigma) \leq \mathcal E_p(\rho), \quad \forall \rho, $$
where $D(\rho\| \sigma)=\tr(\rho\log \rho) - \tr(\rho\log \sigma)$ is Umegaki's relative entropy. The optimal constant $c$ for which the above inequality holds is usually denoted by $\alpha_p$, i.e.,
$$\alpha_p = \inf_{\rho}  \frac{\mathcal E_p(\rho)}{D(\rho\| \sigma)},$$
where the infimum is taken over all states with finite mean photon number. We note that, as it will become clear in the proof of the following theorem, similar to $\Upsilon(\rho)$, $\cE_p(\rho)$ is well-defined for any state $\rho$ with finite mean photon number.

\begin{theorem}\label{thm:p-LSI}
For any $\beta>0$, the $p$-log-Sobolev constant of the quantum Ornstein-Uhlenbeck semigroup is given by
\begin{align}\label{eq:alpha-p}
\alpha_p = \frac{p\hatp}{4\beta} e^{\beta/2}\big(1-e^{-\beta/p}\big)\big(1-e^{-\beta/\hatp}\big),
\end{align}
if $p>1$ and $\alpha_1 = \lim_{p\to 1^+}\alpha_p = \frac 12 \sinh(\beta/2)$.
\end{theorem}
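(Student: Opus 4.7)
The plan is to reduce the $p$-log-Sobolev inequality $\alpha_p D(\rho\|\sigma)\le \mathcal E_p(\rho)$ to the meta log-Sobolev inequality of Theorem~\ref{thm:main-technical-result}. Concretely, I would match the gap $F(\rho):=\mathcal E_p(\rho)-\alpha_p D(\rho\|\sigma)$ to $\alpha_p\big(\Upsilon(\rho)-\Upsilon(\sigma)\big)$ for a judicious choice of the parameters $(\nu_0,\nu_1,\omega)$ appearing in~\eqref{eq:def-Upsilon}, and then conclude from Theorem~\ref{thm:main-technical-result} together with the observation that $\sigma$ is the optimal thermal state for the resulting meta inequality.

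First I would expand both $\mathcal E_p(\rho)$ and $D(\rho\|\sigma)$ in terms of the five ``basic'' quantities $\tr(\rho\bfa\bfa^\dagger)$, $\tr(\rho\bfa^\dagger\bfa)$, $\tr(\rho^{1/p}\bfa\rho^{1/\hatp}\bfa^\dagger)$, $\tr(\rho^{1/p}\bfa^\dagger\rho^{1/\hatp}\bfa)$, and $S(\rho)$. Using the intertwining relations $\bfa\sigma^s=e^{-s\beta}\sigma^s\bfa$ and $\bfa^\dagger\sigma^s=e^{s\beta}\sigma^s\bfa^\dagger$ to pull the $\sigma$-factors in $\Gamma_\sigma^{-1/p}(\rho^{1/p})$ and $\Gamma_\sigma^{-1/\hatp}(\rho^{1/\hatp})$ through the jump operators of $\cL^*$ gives
\begin{align*}
\mathcal E_p(\rho)=\tfrac{p\hatp}{4}\Big[e^{-\beta/2}\tr(\rho\bfa\bfa^\dagger)+e^{\beta/2}\tr(\rho\bfa^\dagger\bfa)-e^{\beta/2-\beta/p}\tr(\rho^{1/p}\bfa\rho^{1/\hatp}\bfa^\dagger)-e^{\beta/p-\beta/2}\tr(\rho^{1/p}\bfa^\dagger\rho^{1/\hatp}\bfa)\Big],
\end{align*}
while $\log\sigma=\log(1-e^{-\beta})-\beta\bfa^\dagger\bfa$ gives $D(\rho\|\sigma)=-S(\rho)+\beta\tr(\rho\bfa^\dagger\bfa)-\log(1-e^{-\beta})$. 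Matching the coefficients of the four ``bilinear'' quantities in the ansatz $F=\alpha_p\Upsilon+c$ then forces $\nu_0=\tfrac{p}{4\alpha_p}e^{\beta/2-\beta/p}$ and $\nu_1=\tfrac{p}{4\alpha_p}e^{\beta/p-\beta/2}$, while matching the coefficient of $\tr(\rho\bfa^\dagger\bfa)$ pins down $\omega=0$ together with $\alpha_p=\tfrac{p\hatp}{2\beta}\big[\cosh(\beta/2)-\cosh(\beta/p-\beta/2)\big]$. A product-to-sum identity combined with $\tfrac{1}{p}+\tfrac{1}{\hatp}=1$ reduces this to the formula stated in the theorem.

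Since $\mathcal E_p(\sigma)=D(\sigma\|\sigma)=0$, we have $F(\sigma)=0$, which fixes $c=-\alpha_p\Upsilon(\sigma)$ and gives $F(\rho)=\alpha_p\big(\Upsilon(\rho)-\Upsilon(\sigma)\big)$. Theorem~\ref{thm:main-technical-result} yields $\Upsilon(\rho)\ge\eta_\thermal$, so the final and main step is to show $\eta_\thermal=\Upsilon(\sigma)$, i.e., that $\sigma$ is a minimizer of $\Upsilon$ among thermal states. Since $\sigma$ is itself thermal, $\eta_\thermal\le\Upsilon(\sigma)$ is automatic; the reverse inequality is the main obstacle and I would establish it by analyzing the one-variable function $\Upsilon(\tau_y)$ via~\eqref{eq:Upsilon-for-thermal}, verifying that its derivative at $y=e^{-\beta}$ vanishes (this is forced by the coefficient-matching) and that $y=e^{-\beta}$ is a global minimum by a convexity or second-derivative check. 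Optimality of $\alpha_p$ is then confirmed by evaluating $\lim_{y\to e^{-\beta}}\mathcal E_p(\tau_y)/D(\tau_y\|\sigma)$ via a second-order Taylor expansion, which recovers exactly the stated formula; the boundary case $p=1$ follows by taking $p\to 1^+$, yielding $\alpha_1=\tfrac12\sinh(\beta/2)$ through a routine expansion.
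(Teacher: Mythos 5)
Your reduction to thermal states is sound and follows the paper's route exactly: the expansion of $\mathcal E_p(\rho)$ via the intertwining relations, the identity $D(\rho\|\sigma)=-S(\rho)+\beta\tr(\rho\,\bfa^\dagger\bfa)-\log(1-e^{-\beta})$, the coefficient matching that forces $\omega=0$ and hence pins down $\alpha_p$, and the appeal to Theorem~\ref{thm:main-technical-result} are all correct and agree with the paper's computation~\eqref{eq:F-p-expansion}. The problem is that you have deferred the actual analytic core of the proof. Showing that $\sigma$ globally minimizes $\tau\mapsto \frac{1}{\alpha_p}\mathcal E_p(\tau)-D(\tau\|\sigma)$ over thermal states is not a ``routine convexity or second-derivative check'': a second-derivative check only gives local minimality, and convexity of $\Upsilon(\tau_y)$ in $y$ is neither obvious nor asserted with any justification. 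Differentiating in $y$ also produces logarithmic terms from $\partial_y\, d(y^2\|x^2)$ that make a global sign analysis awkward. The paper handles this by the opposite slicing: it fixes the input parameter $y$ and differentiates $\phi(x,y)$ in the \emph{reference} parameter $x$ (Appendix~\ref{app:comp-derivative}), where the entropy term has a rational derivative, and then shows by an explicit sign analysis using $\log t<t-1$ that $\partial_x\phi$ is negative for $x<y$ and positive for $x>y$, so the global minimum in $x$ is at $x=y$ where $\phi$ vanishes. Some argument of this global character is indispensable and is missing from your proposal.

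Your optimality argument is also incorrect as stated. The limit $\lim_{y\to e^{-\beta}}\mathcal E_p(\tau_y)/D(\tau_y\|\sigma)$, i.e.\ perturbing $\tau$ around $\sigma$ along the thermal family, does \emph{not} recover $\alpha_p$: a second-order expansion of $\phi(x,y)$ at $y=x$ gives $\partial_y^2\phi(x,x)=2e^{\beta/2}\big(\alpha_p^{-1}-\sinh(\beta/2)^{-1}\big)>0$, and the ratio tends to the spectral gap $\sinh(\beta/2)$, which is strictly larger than $\alpha_p$ for all $1\le p\le 2$ and $\beta>0$. The infimum defining $\alpha_p$ is instead attained in the limit $y\to 1^{-}$ (infinitely hot thermal inputs), where $\mathcal E_p(\tau_y)\sim \frac{p\hatp}{4(1-y^2)}e^{\beta/2}(1-e^{-\beta/p})(1-e^{-\beta/\hatp})$ and $D(\tau_y\|\sigma)\sim\frac{\beta}{1-y^2}$; this is the limit the paper uses. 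You should replace your local expansion around $\sigma$ by this high-temperature limit, and supply a genuine global-minimality argument (for instance the paper's $x$-derivative sign analysis) for the thermal-state inequality.
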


\begin{proof}
We note that $\sigma^s\bfa = e^{s\beta} \bfa \sigma^s$ for every $s$, and $\sigma$ commutes with $\bfa^\dagger\bfa$. Then, for a single-mode state $\rho$ we have
\begin{align}\label{eq:F-p-expansion}
\mathcal E_p&(\rho) =  \frac{p\hatp}{4}\Big\langle  \Gamma_\sigma^{-\frac 1\hatp}\big(\rho^{\frac 1\hatp}\big) , \cL^*\circ\Gamma_\sigma^{-\frac 1p}\big(\rho^{\frac 1 p}\big)   \! \Big\rangle_{\sigma} \nonumber \\
& = \frac{p\hatp}{4} \tr\Big(  \rho^{\frac1\hatp} \Gamma_\sigma^{\frac1p} \circ \cL^*\circ\Gamma_\sigma^{-\frac1p} (\rho^{\frac1p}) \Big) \nonumber \\
& = \frac{p\hatp}{4} \bigg(\!e^{-\beta/2}\Big[ \tr\big(\rho\,\bfa\,\bfa^\dagger\big) - \tr\big( \rho^{\frac1\hatp} \sigma^{\frac1{2p}} \bfa\, \sigma^{-\frac1{2p}}  \rho^{\frac1p} \sigma^{-\frac1{2p}}\bfa^\dagger  \sigma^{\frac1{2p}}  \big)   \Big] \nonumber  \\
&\qquad\quad\qquad+ e^{\beta/2}\Big[ \tr\big(\rho\,\bfa^\dagger\bfa\big) - \tr\big( \rho^{\frac1\hatp} \sigma^{\frac1{2p}} \bfa^\dagger \sigma^{-\frac1{2p}} \rho^{\frac1p} \sigma^{-\frac1{2p}}\bfa\,  \sigma^{\frac1{2p}}  \big)   \Big]
\!\bigg) \nonumber \\
& = \frac{p\hatp}{4} \bigg(\!e^{-\beta/2}\Big[ \tr\big(\rho\,\bfa\,\bfa^\dagger\big) - e^{\frac\beta p} \tr\big( \rho^{\frac1\hatp}  \bfa\, \rho^{\frac1p} \bfa^\dagger    \big)   \Big] + e^{\beta/2}\Big[ \tr\big(\rho\,\bfa^\dagger\bfa\big) - e^{-\frac\beta p}\tr\big( \rho^{\frac1\hatp}  \bfa^\dagger  \rho^{\frac1p} \bfa    \big)   \Big]
\!\bigg) \nonumber \\
& = \hatp \Big( \nu'_0\big\langle \cL_0(\rho^{1/p}), \rho^{1/\hatp} \big\rangle + \nu'_1\big\langle \cL_1\big(\rho^{1/p}\big), \rho^{1/\hatp} \big\rangle   \Big) + \omega'\tr\big(\rho\,\bfa^\dagger\bfa\big) + \frac{p\hatp}{4}\Big(e^{-\beta/2} - e^{-\big(\frac 12 -\frac 1p\big)\beta}\Big),
\end{align}
where $\cL_0, \cL_1$ are given in~\eqref{eq:L0-L1} and
\begin{align*}
\nu'_0 = \frac p4 e^{\big(\frac 12 -\frac 1p\big)\beta}, \qquad\qquad \nu'_1=\frac p4e^{-\big(\frac 12 -\frac 1p\big)\beta}, 
\end{align*}
\begin{align*}
\omega'=\frac{p\hatp}{4}\Big(e^{\beta/2} + e^{-\beta/2} -  e^{\big(\frac 12 -\frac 1p\big)\beta}-e^{-\big(\frac 12 -\frac 1p\big)\beta} \Big)=\frac{p\hatp}{4}e^{\beta/2}\big(1-e^{-\beta/p}\big)\big(1-e^{-\beta/\hatp}\big).
\end{align*}
On the other hand, since $\sigma = (1-e^{-\beta}) e^{-\beta\bfa^\dagger \bfa}$, we have 
$$D(\rho\| \sigma) =\tr(\rho\log \rho) - \tr(\rho\log \sigma) = -S(\rho) +\beta\tr(\rho\,\bfa^\dagger \bfa) - \log(1-e^{-\beta}).$$
Putting these together, we find that for $\alpha_p$ given by~\eqref{eq:alpha-p} we have
$$\frac{1}{\alpha_p}\mathcal E_p(\rho)- D(\rho\| \sigma) -\log(1-e^{-\beta})-\frac{p\hatp}{4}\Big(e^{-\beta/2} - e^{-\big(\frac 12 -\frac 1p\big)\beta}\Big)=\Upsilon(\rho),$$
where $\Upsilon(\rho)$ is given in~\eqref{eq:def-Upsilon}
with $\nu_0 = \frac{\nu'_0}{\alpha_p}$, $\nu_1=\frac{\nu'_1}{\alpha_p}$ and $\omega=\frac{\omega'}{\alpha_p}-\beta=0$.   
Therefore, to prove the theorem we can use Theorem~\ref{thm:main-technical-result} to conclude that
it suffices to show that for any thermal state $\tau$ we have 
\begin{align}\label{eq:p-LSI-proof}
\alpha_p D(\tau\| \sigma) \leq  \mathcal E_p(\tau),
\end{align}
where $\alpha_p$ is given by~\eqref{eq:alpha-p}, and it is the best possible such constant.

\medskip

By the above computations we have 
\begin{align*}
 \mathcal E_p(\tau) & = \frac{p\hatp}{4} \bigg(\!e^{-\beta/2}\Big[ \tr(\tau\bfa\,\bfa^\dagger) - e^{\frac\beta p} \tr\big( \tau^{\frac1\hatp}  \bfa\, \rho^{\frac1p} \bfa^\dagger    \big)   \Big] + e^{\beta/2}\Big[ \tr(\tau\bfa^\dagger\bfa) - e^{-\frac\beta p}\tr\big( \tau^{\frac1\hatp}  \bfa^\dagger  \tau^{\frac1p} \bfa    \big)   \Big]\!\bigg)\\
 & = \frac{p\hatp}{4} \bigg(\!  \big(e^{\beta /2}+e^{-\beta/2}\big) \tr(\tau\bfa^\dagger \bfa)  +e^{-\beta/2}   - e^{\beta\big(\frac1 p-\frac 12\big)} \tr\big( \tau^{\frac1\hatp}  \bfa \tau^{\frac1p} \bfa^\dagger    \big)   - e^{\beta\big(\frac1 {\hatp} - \frac 12\big)}\tr\big( \tau^{\frac1\hatp}  \bfa^\dagger  \tau^{\frac1p} \bfa    \big) \!  \bigg).
\end{align*}
Here, for convenience we use a new parametrization for thermal states:
$$\tau = (1-y^2)\sum_{n} y^{2n}\ketbra{n}{n}.$$
%Then
Using this, we have
\begin{align*}
&\tr(\tau\bfa^\dagger \bfa) = \frac{y^2}{1-y^2},  \qquad\qquad \tr(\tau\log \tau) = \frac{y^2}{1-y^2}\log y^2+ \log(1-y^2),\\
&\tr\big( \tau^{\frac1 p}  \bfa^\dagger  \tau^{\frac1\hatp} \bfa    \big)= \frac{y^{\frac 2p}}{1-y^2} , \qquad \tr\big( \tau^{\frac1\hatp}  \bfa^\dagger  \tau^{\frac1p} \bfa    \big)  = \frac{y^{\frac 2\hatp}}{1-y^2}.
\end{align*}
Therefore, we get
\begin{align*} 
\mathcal E_p(\tau)  & = \frac{p\hatp}{4(1-y^2)} \bigg( \! \Big(e^{\beta /2}+e^{-\beta/2}\Big) y^2 +e^{-\beta/2}(1-y^2)   - e^{\beta\big(\frac1 p-\frac 12\big)} y^{\frac 2p}   - e^{\beta\big(\frac1 {\hatp} - \frac 12\big)}y^{\frac 2\hatp}\bigg)\\
&= \frac{p\hatp}{4(1-y^2)}  \bigg(\!  e^{\beta /2} y^2+e^{-\beta/2}   - e^{\beta\big(\frac1 p-\frac 12\big)} y^{\frac 2p}   - e^{\beta\big(\frac1 {\hatp} - \frac 12\big)}y^{\frac 2\hatp}\bigg)\\
&= \frac{p\hatp}{4(1-y^2)}  e^{\beta/2} \Big( y^{\frac 2p}-e^{-\frac\beta p} \Big)\Big(  y^{\frac 2\hatp}-e^{-\frac\beta {\hatp} }\Big).
\end{align*}
We also have
\begin{align*}
D(\tau\| \sigma)& = \tr(\tau\log \tau) + \beta\tr(\tau\bfa^\dagger \bfa) - \log(1-e^{-\beta})\\
& = \frac{1}{1-y^2}\Big( y^2\log y^2 + (1-y^2)\log(1-y^2) +\beta y^2 -(1-y^2)\log(1-e^{-\beta})  \Big)\\
& = \frac{1}{1-y^2} d(y^2\| e^{-\beta}),
\end{align*}
where $d(y^2\|x^2) = y^2\log y^2 + (1-y^2)\log(1-y^2) -y^2\log x^2-(1-y^2)\log (1-x^2) $ is the binary relative entropy function. 
Hence,~\eqref{eq:p-LSI-proof} is equivalent to $\phi(x, y)\geq 0$ for all $0<x, y<1$ where $x^2=e^{-\beta}$ and 
\begin{align*}
\phi(x, y) & =  (1-y^2)\frac{\mathcal E_p(\tau)}{\alpha_p} -  (1-y^2)D(\tau\| \sigma)  \\
&= -\frac{\big( x^{\frac 2p}-y^{\frac2 p} \big)\big(  x^{\frac 2\hatp}-y^{\frac2 {\hatp} }\big)\log x^2}{\big(1-x^{\frac 2p}\big)\big(1-x^{\frac 2\hatp}\big)} - d(y^2\| x^2).
\end{align*}
Now the idea is to fix $y$ and think of $\phi(x, y)$ as a function of $x$. It is shown in Appendix~\ref{app:comp-derivative} that 
\begin{align}\label{eq:phi-derivative}
 \frac{\dd}{\dd x}\phi(x, y)  = &\,   \frac{2x\Big(\!  \big(1- x^{\frac 2\hatp}\big)\big(1-y^{\frac 2\hatp}\big) \big(x^{\frac 2p} - y^{\frac 2p}\big)   + \big(1- x^{\frac 2p}\big)\big(1-y^{\frac 2p}\big) \big(x^{\frac 2\hatp} - y^{\frac 2\hatp}\big) \!  \Big)}{(1-x^2)\big(1-x^{\frac 2p}\big)\big(1-x^{\frac 2\hatp}\big)} \nonumber\\
&~ - \frac{2x^{\frac 2p}\big(1-y^{\frac 2p}\big)\big(x^{\frac 2\hatp} - y^{\frac 2\hatp}\big)}{x\big(1-x^{\frac 2p}\big)\big(1-x^{\frac 2\hatp}\big)}  \bigg(  \frac{\log x^2}{p(1-x^{\frac 2p})}+1 \bigg)  \nonumber \\
&~ - \frac{2x^{\frac 2\hatp}\big(1-y^{\frac 2\hatp}\big)\big(x^{\frac 2p} - y^{\frac 2p}\big)}{ x\big(1-x^{\frac 2p}\big)\big(1-x^{\frac 2\hatp}\big)} \bigg(  \frac{\log x^2}{\hatp(1-x^{\frac 2\hatp})}+1 \bigg) .
\end{align}
We argue that this derivative vanishes on the interval $(0,1)$ only if $x=y$. To this end, we use $\log t< t-1$ for $1\neq t\in \big\{x^{\frac 2p}, x^{\frac 2\hatp}\big\}$ to conclude that for $0<x<1$,
$$-\bigg(\frac{\log x^2}{p(1-x^{\frac 2p})}+1\bigg) , -\bigg(\frac{\log x^2}{\hatp(1-x^{\frac 2\hatp})}+1\bigg)>0.$$
Therefore, $ \frac{\dd}{\dd x}\phi(x, y)$ is non-zero for $0<x\neq y<1$ and its sign depends on whether $x>y$ or $x<y$.
In fact, if $0<x_1<y<x_2<1$, then
$$ \frac{\dd}{\dd x}\phi(x, y)  \Big|_{x=x_1} <  \frac{\dd}{\dd x}\phi(x, y)  \Big|_{x=y}=0 < \frac{\dd}{\dd x}\phi(x, y)  \Big|_{x=x_2}.$$
Thus, the minimum of $\phi(x, y)$, as a function of $x$, is achieved at $x=y$ and we have $\phi(y, y)=0$. This means that $\phi(x, y)\geq 0$ for all $0<x, y<1$, and~\eqref{eq:p-LSI-proof} holds for any thermal state $\tau$. Also, the limiting case of $y\to 1^-$ confirms that the constant $\alpha_p$ in~\eqref{eq:p-LSI-proof} is optimal.

\end{proof}

We remark that some authors define the quantum Ornstein-Uhlenbeck semigroup by considering arbitrary parameters $\nu_0, \nu_1$ satisfying $\nu_1>\nu_0>0$, while here we study only the case of $\nu_0=\nu_1^{-1}=e^{-\beta/2}$. Nevertheless, by a rescaling argument all the log-Sobolev constants of these semigroups can also be computed using Theorem~\ref{thm:p-LSI}.

Extending the definition of the weighted $2$-norm, for any $p\geq 1$ we may define
\begin{align}\label{eq:w-p-norm}
\|X\|_{p,  \sigma} = \tr\Big( \Big|  \Gamma_{ \sigma}^{\frac{1}{p}}(X) \Big|^{p} \Big)^{\frac 1p}.
\end{align}
We let $L_p(\sigma)$ to be the closure of  the space of bounded operators under this norm.

The following corollary is a consequence of the above theorem, and~\cite[Theorem 3.8]{OZ99} (see also~\cite[Theorem 11]{BDR20}).

\begin{corollary}
Let $\Phi_{t} = e^{-t\cL}$ where $\cL$ is given by~\eqref{eq:L-OU}. Also, let $\Phi^*_{t} = e^{-t\cL^*}$.  
Then, for any $1< q\leq p<  +\infty$ and operator $X\in L_q(\sigma)$ we have
$$\big\|\Phi^*_{t}(X)\big\|_{p,  \sigma}\leq  \big\|X\big\|_{q,\sigma}, \qquad \quad \forall t\geq  \frac{1}{4\alpha_{2}} \log \frac{p-1}{q-1},$$
where  $\alpha_2= \frac{4}{\beta}\sinh^2(\beta/4)$.
\end{corollary}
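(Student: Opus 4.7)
The plan is to reduce the hypercontractivity bound to the optimal $2$-log-Sobolev inequality already supplied by Theorem~\ref{thm:p-LSI}, and then invoke the standard equivalence between $2$-log-Sobolev inequalities and $L_q \to L_p$ contractivity for GNS-symmetric quantum Markov semigroups, as stated in~\cite[Theorem~11]{BDR20}.

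First I would specialize the formula~\eqref{eq:alpha-p} to $p = \hatp = 2$, obtaining
$$\alpha_2 = \frac{1}{\beta}e^{\beta/2}\bigl(1-e^{-\beta/2}\bigr)^2 = \frac{1}{\beta}\bigl(e^{\beta/4}-e^{-\beta/4}\bigr)^2 = \frac{4}{\beta}\sinh^2(\beta/4),$$
which matches the constant appearing in the statement of the corollary. With this explicit $2$-log-Sobolev constant in hand, the remaining content of the proof is a direct black-box application of the quantum Gross differentiation machinery: along the curve $p(t) = 1 + (q-1)e^{4\alpha_2 t}$, one computes $\tfrac{\dd}{\dd t}\|\Phi_t^*(X)\|_{p(t),\sigma}$, and the inequality $\alpha_2\, D(\rho\|\sigma) \leq \mathcal E_2(\rho)$ supplied by Theorem~\ref{thm:p-LSI} is exactly what forces this derivative to be non-positive. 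Integrating from $0$ to $t_0 = \tfrac{1}{4\alpha_2}\log\tfrac{p-1}{q-1}$ and then appealing to the monotonicity of $\|\Phi_t^*(X)\|_{p,\sigma}$ in $t$ for fixed $p$ yields the claimed bound.

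The main obstacle, which is handled by the cited theorem and not reproved here, is the regularity needed to run Gross's differentiation argument in the unbounded, infinite-dimensional setting of the bosonic Ornstein-Uhlenbeck semigroup: differentiability of the weighted $p(t)$-norm in $t$, existence of a suitable core of operators on which the Dirichlet form is finite, and justification of the interchange of derivative and trace. These technical issues are precisely what~\cite{OZ99,KT13,BDR20} address for the class of GNS-symmetric quantum Markov semigroups to which $\Phi_t$ belongs, so no ad hoc analysis beyond plugging the $\alpha_2$ computed above into~\cite[Theorem~11]{BDR20} is required.
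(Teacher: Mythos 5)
Your proposal is correct and follows essentially the same route as the paper, which likewise presents this corollary as a direct consequence of Theorem~\ref{thm:p-LSI} (giving $\alpha_2=\tfrac{4}{\beta}\sinh^2(\beta/4)$) combined with the standard log-Sobolev-to-hypercontractivity implication of~\cite[Theorem~11]{BDR20}. Your explicit verification of the constant and your sketch of the Gross differentiation argument along $p(t)=1+(q-1)e^{4\alpha_2 t}$ simply spell out what the paper leaves implicit.
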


\begin{proof}
It is well-known that it suffices restrict to operators that are positive semidefinite (see, e.g.,~\cite{BDR20}). For $p\geq q$ define $t(p):= \frac{1}{4\alpha_2} \log \frac{p-1}{q-1}$ and note that $t(q)=0$. For a positive semidefinite operator $X$ let $X_p:= \Phi_{t(p)}^*(X)$ and define $f(p):=\|X_p\|_{p, \sigma} - \|X\|_{q, \sigma}$. Our goal is to show that  $f(p)\leq 0$ for any $p\geq q$. To this end, since $f(q)=0$, it suffices to verify that $f'(p)\leq  0$. Computing the derivative (see the details in the proof of~\cite[Theorem 11]{BDR20}) we find that
$$f'(p)=\frac{1}{p^2}\|X_p\|_{p, \sigma}^p \Big( D(\rho_p\| \sigma) - \frac{1}{\alpha_2}\cE_p(\rho_p)\Big),$$
where 
$$\rho_p := \frac{1}{\|X_p\|_{p, \sigma}^p} \Gamma_\sigma^{1/p}(X_p)^p.$$
It is shown in Appendix~\ref{app:alpha2-min} that  $ \alpha_p\geq \alpha_2$ for all $p$. Then, the negativity of $f'(p)$ follows from the log-Sobolev inequality of Theorem~\ref{thm:p-LSI}.

For the above argument we need to make sure that $\Phi_{t}^*(X)$ is a well-defined operator. To this end, we notice that  the quantum Ornstein-Uhlenbeck semigroup satisfies the \emph{Feller property} with respect to the algebra of compact operators~\cite[Theorem 5.1]{Cipriani+2000}. In particular, $\Phi_t^*(X)$ is a well-defined compact operator if $X$ is compact. Thus, as also done in the proof of~\cite[Theorem 3.8]{OZ99}, in the above argument we may restrict to compact operators $X$, and after proving the inequality for such an operator, generalize to arbitrary  $X\in L_q(\sigma)$ by a continuity argument. We also note that since $\sigma$ is Gaussian, when $X$ is compact and then bounded, the corresponding density operator $\rho_p$ in the above argument has a finite mean photon number. Thus, the log-Sobolev inequality of Theorem~\ref{thm:p-LSI} can indeed be employed.

\end{proof}

\subsection{Log-Sobolev inequality for multimode states}\label{subsec:LSI-multimode}

It is well-known that log-Sobolev constants for classical Markov semigroups satisfy the tensorization property, meaning that for generators $\cK_1, \cK_2$ of two classical Markov semigroups we have $\alpha_p(\cK_1\otimes  I +  I\otimes \cK_2)= \min\{\alpha_p(\cK_1), \alpha_p(\cK_2)\}$, where $\cK_1\otimes \mathcal I + \mathcal I\otimes \cK_2$ is the generator of the tensor product semigroup $\{e^{-t\cK_1}\otimes e^{-t\cK_2}:\, t\geq 0\}$. This tensorization property is known only for some special quantum Markov semigroups; see~\cite{BDR20} and references therein for more details. Thus, a natural question is whether the tensorization property holds for the quantum Ornstein-Uhlenbeck semigroup. To explore this problem we first develop some notations.

For parameters $\beta_1, \dots, \beta_m>0$ let 
$$\widehat \cL = \cL_1+\cdots + \cL_m,$$
be the generator of an $m$-mode quantum Markov semigroup where
\begin{align}\label{eq:L-OU}
\cL_j(\rho) = e^{-\beta_j/2}\Big( \frac 12 \{\bfa_j\bfa_j^\dagger, \rho\} - \bfa^\dagger_j \rho\bfa_j\Big) +  e^{\beta_j/2}\Big(  \frac 12 \{\bfa_j^\dagger\bfa_j, X\} - \bfa_j \rho\bfa_j^\dagger\Big).
\end{align}
Then, the quantum channels
$$\widehat \Phi_t = e^{-t\widehat \cL} = e^{-t\cL_1}\otimes \cdots \otimes e^{-t\cL_m},$$
form an $m$-mode semigroup. The fixed point of this semigroup is 
$$\widehat \sigma = \sigma_{\beta_1}\otimes \cdots\otimes \sigma_{\beta_m},$$
and its corresponding Dirichlet form is equal to  
$$ {\mathcal E}_p(\rho) = \frac{p\hatp}{4} \Big\langle  \Gamma_{\widehat \sigma}^{-\frac 1\hatp}\big(\rho^{\frac 1\hatp}\big) , \widehat \cL^*\circ\Gamma_{\widehat \sigma}^{-\frac 1p}\big(\rho^{\frac 1 p}\big)    \Big\rangle_{\widehat \sigma}=\frac{p\hatp}{4} \sum_{j=1}^m \Big\langle  \Gamma_{\widehat \sigma}^{-\frac 1\hatp}\big(\rho^{\frac 1\hatp}\big) , \cL_j^*\circ\Gamma_{\widehat \sigma}^{-\frac 1p}\big(\rho^{\frac 1 p}\big)    \Big\rangle_{\widehat \sigma}.$$
Now, as in the single mode case we are interested in the $p$-log-Sobolev constant
$$\widehat \alpha_p = \inf_{\rho}  \frac{{\mathcal E}_p(\rho)}{D(\rho\| \widehat\sigma)}.$$

Assuming that the optimal states in the above optimization problem are Gaussian, the following corollary is an evidence for the tensorization property. 

\begin{corollary}\label{cor:LSI-Gaussian}
Suppose that $\beta_1=\cdots=\beta_m= \beta$. Then, for any $m$-mode Gaussian state $\rho$ we have
$$\alpha_{p} D(\rho\| \widehat \sigma)\leq \widehat{\mathcal E}_p(\rho),$$
where $\alpha_{p} = \frac{p\hatp}{4\beta}  e^{\beta/2}\big(1-e^{-\beta/p}\big)\big(1-e^{-\beta/\hatp}\big)$.
\end{corollary}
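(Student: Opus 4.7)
The plan is to reduce the multimode $p$-log-Sobolev inequality to the meta log-Sobolev inequality of Theorem~\ref{thm:main-technical-result-multimode-Gaussian}(b), mirroring the single-mode argument of Theorem~\ref{thm:p-LSI} mode by mode. Since all $\beta_j=\beta$, the fixed point factorizes as $\widehat\sigma = \sigma^{\otimes m}$, and the weight operator $\Gamma_{\widehat\sigma}^{s}$ acts on each tensor factor independently. For each $j$, the modular relations $\sigma_j^s \bfa_j = e^{s\beta}\bfa_j\sigma_j^s$ (with $\sigma_k$ commuting with $\bfa_j$ for $k\neq j$) are exactly what drove the single-mode computation in~\eqref{eq:F-p-expansion}, so the same algebra goes through one mode at a time.

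Concretely, I would first expand
\[
\widehat{\mathcal E}_p(\rho) = \frac{p\hatp}{4}\sum_{j=1}^m \tr\!\Big(\rho^{1/\hatp}\,\Gamma_{\widehat\sigma}^{1/p}\!\circ\cL_j^*\circ\Gamma_{\widehat\sigma}^{-1/p}(\rho^{1/p})\Big),
\]
and apply the same conjugation identities used in the proof of Theorem~\ref{thm:p-LSI}. Since only the $j$-th tensor factor of $\widehat\sigma$ fails to commute with $\bfa_j$, each summand reproduces the single-mode expression~\eqref{eq:F-p-expansion} verbatim with $\bfa$ replaced by $\bfa_j$. Collecting the terms, I obtain
\[
\widehat{\mathcal E}_p(\rho) = \hatp\sum_{j=1}^m\Big(\nu'_0\langle\cL_{0,j}(\rho^{1/p}),\rho^{1/\hatp}\rangle + \nu'_1\langle\cL_{1,j}(\rho^{1/p}),\rho^{1/\hatp}\rangle\Big) + \omega'\tr(\rho H_m) + m\cdot\frac{p\hatp}{4}\Big(e^{-\beta/2}-e^{-(\frac12-\frac1p)\beta}\Big),
\]
with the same $\nu'_0,\nu'_1,\omega'$ as in the single-mode case. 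On the other hand, since $\widehat\sigma = \sigma^{\otimes m}$,
\[
D(\rho\|\widehat\sigma) = -S(\rho) + \beta\tr(\rho H_m) - m\log(1-e^{-\beta}).
\]

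With the same choice $\omega'/\alpha_p-\beta = 0$ that worked for $m=1$, the $\tr(\rho H_m)$ terms cancel, and after dividing $\widehat{\mathcal E}_p(\rho)$ by $\alpha_p$ and rearranging the remaining pieces, one arrives at the identity
\[
\frac{1}{\alpha_p}\widehat{\mathcal E}_p(\rho) - D(\rho\|\widehat\sigma) - m\log(1-e^{-\beta}) - m\cdot\frac{p\hatp}{4\alpha_p}\Big(e^{-\beta/2}-e^{-(\frac12-\frac1p)\beta}\Big) = m\cdot \Upsilon_m(\rho),
\]
where $\Upsilon_m$ is defined as in~\eqref{eq:def-Upsilon-m} with parameters $\nu_0 = \nu'_0/\alpha_p$, $\nu_1=\nu'_1/\alpha_p$, and $\omega = 0$. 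The remark after Theorem~\ref{thm:main-technical-result-multimode-Gaussian} observes that Williamson's theorem allows any multimode Gaussian state to be written as $U_{\rm G}\rho_{\diag}U_{\rm G}^\dagger$ with $\rho_{\diag}$ a tensor product of single-mode thermal states, hence diagonal in the Fock basis; so condition (b) of that theorem applies and yields $\Upsilon_m(\rho)\geq \eta_\thermal$.

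To close the argument, I would invoke the single-mode computation inside the proof of Theorem~\ref{thm:p-LSI}: applying that same identity to a single-mode thermal state and using the optimality verified there (the limit $y\to 1^-$ of $\phi(x,y)$) shows that
\[
\eta_\thermal = -\log(1-e^{-\beta}) - \frac{p\hatp}{4\alpha_p}\Big(e^{-\beta/2}-e^{-(\frac12-\frac1p)\beta}\Big).
\]
Substituting this into the displayed identity gives $\alpha_p D(\rho\|\widehat\sigma)\leq \widehat{\mathcal E}_p(\rho)$, as desired. The only genuinely substantive step is the multimode meta log-Sobolev bound, which is already established; the remainder is bookkeeping that parallels the single-mode proof, and the main thing to be careful with is that the modular identities $\sigma_j^s\bfa_j = e^{s\beta}\bfa_j\sigma_j^s$ are used only on the $j$-th tensor factor so that no cross-mode factors of $e^{\pm s\beta}$ appear.
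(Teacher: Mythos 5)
Your proposal is correct and follows essentially the same route as the paper, which proves the corollary by combining the multimode meta log-Sobolev inequality (Theorem~\ref{thm:main-technical-result-multimode-Gaussian}, applied to Gaussian states via Williamson's theorem) with the algebraic reduction of $\mathcal E_p$ to $\Upsilon$ carried out in the proof of Theorem~\ref{thm:p-LSI}. The paper leaves this as a one-line "direct consequence," and your mode-by-mode bookkeeping, including the identification of $\eta_\thermal$ from the single-mode optimality computation, is a faithful and correct expansion of that argument.
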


\begin{proof}
This is a direct consequence of Theorem~\ref{thm:main-technical-result-multimode-Gaussian} as well as Theorem~\ref{thm:p-LSI} and its proof. 

\end{proof}

Our next goal is to prove a lower bound on $\widehat \alpha_2$ which probably is not optimal, yet may be useful for some applications. To this end, it is useful to consider a correspondence between density operators $\rho$ and positive semidefinite operators $X$ as follows:
$$X=\Gamma_{\widehat \sigma}^{-1/2}(\rho^{1/2}).$$
Then, we note that $\|X\|_{2, \widehat \sigma}^2 = \tr(\rho)=1$. Indeed, $\Gamma_{\widehat \sigma}^{-1/2}$ is an isometric embedding of the space of Hilbert-Schmidt operators into $L_2(\widehat \sigma)$, the Hilbert space of operators equipped with norm $\|\cdot\|_{2, \widehat \sigma}$. With this correspondence, we have
$$\cE_2(\rho) = \big\langle X, \widehat\cL^*(X)\big\rangle_{\widehat \sigma}.$$
From this equation we realize that the spectrum of $\widehat\cL^*$ is a relevant object in the study of the Dirichlet form. 
Here, we consider $\widehat \cL^*$ as an operator acting on the Hilbert space $L_2(\widehat \sigma)$. 
We note that, since $\widehat \sigma$ is a Gaussian state,  any operator $X$ that is a polynomial of operators $\bfa_j, \bfa_j^\dagger$ belongs to $L_2(\widehat \sigma)$. Letting $\mathcal P$ be the space of these polynomial operators, we find that indeed $\mathcal P \subset L_2(\widehat \sigma)$ is a dense subset. Moreover, by definition, $\widehat \cL^*$ leaves this subspace invariant: $\widehat \cL^*(\mathcal P)\subseteq \mathcal P$. Thus, the domain of $\widehat \cL^*$ contains the dense subset $\mathcal P$~\cite{Cipriani+2000}. In the following proposition, we summarize Theorem~7.2 of~\cite{Cipriani+2000} regarding the spectrum of $\widehat \cL^*$ acting on $L_{2}(\widehat \sigma)$,  and for the sake of completeness present its main proof idea in Appendix~\ref{app:spectrum}.\footnote{In fact in~\cite{Cipriani+2000}, the spectrum of $\Gamma_{\widehat \sigma}^{1/2}\circ \widehat\cL^*\circ \Gamma_{\widehat \sigma}^{-1/2}$ as an operator acting on the space of Hilbert-Schmidt operators is computed.}

\begin{prop}\label{prop:spectrum1} \emph{\cite{Cipriani+2000}}
Consider $\widehat{\cL}^*$ as an operator acting on the Hilbert space $L_2(\widehat \sigma)$  equipped with norm $\|\cdot\|_{2, \widehat\sigma}$.
For $z_1, \dots, z_m\in \mathbb C$ with $|z_j|=1$ define the quadrature operator 
$\bfq_{j,z_j} = \frac {1}{\sqrt 2} (z_j\bfa_j^\dagger  + \bar z_j\bfa_j)$.
Also, for any $j$, let $\{h_{j,k_j}(t):\, k_j\geq 0\}$ be the set of Hermite polynomials specified by
\begin{align}\label{eq:hermit}
e^{st-\frac{\coth(\beta_j/2)}{4}s^2} = \sum_{k_j=0}^\infty \frac{s^{k_j}}{k_j!}h_{j,k_j}(t).
\end{align}
Then, for any tuple $k=(k_1, \dots, k_m)$ of non-negative integers, the operator $V_k:=\bigotimes_{j=1}^m h_{j,k_j}(\bfq_{j,z_j})$ is an eigenvector of $\widehat\cL^*$ with eigenvalue $\sum_{j=1}^m\sinh(\beta_j/2)k_j$. Moreover, these operators (and their appropriate linear combinations) are all the eigenvectors of $\widehat \cL^*$. 
\end{prop}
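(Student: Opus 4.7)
The plan is to reduce to the single-mode case via the tensor structure $\widehat{\cL}^* = \sum_j \cL_j^*$ (each $\cL_j^*$ acting nontrivially only on mode $j$), and then use the generating-function definition of the Hermite polynomials to verify the eigenvalue identity by a direct computation.

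First, since $\cL_j^*$ leaves every tensor factor other than $j$ untouched, $\widehat{\cL}^*(V_k)$ decomposes into $m$ terms, each of which acts as the single-mode generator on one factor $h_{j,k_j}(\bfq_{j,z_j})$. So the theorem reduces to the single-mode identity
\begin{equation*}
\cL^*\big(h_k(\bfq_z)\big) = k\sinh(\beta/2)\, h_k(\bfq_z),
\end{equation*}
where $\cL^* = e^{-\beta/2}\cL_0^* + e^{\beta/2}\cL_1^*$ is the single-mode generator with parameter $\beta$, and $h_k$ is defined by $e^{st - cs^2} = \sum_k \tfrac{s^k}{k!} h_k(t)$ with $c = \coth(\beta/2)/4$.

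To prove the single-mode identity I would introduce $G(s) := e^{s\bfq_z - cs^2} = \sum_k \tfrac{s^k}{k!} h_k(\bfq_z)$ and compute $\cL^*(G(s))$ in closed form. The key algebraic input is that $[\bfq_z, \bfa] = -z/\sqrt{2}$ and $[\bfq_z, \bfa^\dagger] = \bar z/\sqrt{2}$ are scalars, so $e^{s\bfq_z}$ intertwines $\bfa, \bfa^\dagger$ by c-number shifts:
\begin{equation*}
\bfa\, G(s) = G(s)\big(\bfa + sz/\sqrt{2}\big), \qquad \bfa^\dagger G(s) = G(s)\big(\bfa^\dagger - s\bar z/\sqrt{2}\big).
\end{equation*}
A direct expansion of $\cL_0^*(G(s))$ and $\cL_1^*(G(s))$ using these commutation rules yields
\begin{equation*}
\cL_0^*\big(G(s)\big) = -G(s)\Big[\tfrac{s}{2}\bfq_z + \tfrac{s^2}{4}\Big], \qquad \cL_1^*\big(G(s)\big) = G(s)\Big[\tfrac{s}{2}\bfq_z - \tfrac{s^2}{4}\Big].
\end{equation*}
Combining with weights $e^{\mp\beta/2}$ and using $2c\sinh(\beta/2) = \cosh(\beta/2)/2$, together with $\partial_s G(s) = G(s)(\bfq_z - 2cs)$ which follows because $G(s)$ commutes with $\bfq_z$, gives
\begin{equation*}
\cL^*\big(G(s)\big) = s\sinh(\beta/2)\, \partial_s G(s).
\end{equation*}
Matching coefficients of $s^k$ on both sides yields the eigenvalue relation.

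For completeness—that the $V_k$ and their linear combinations exhaust the spectral decomposition of $\widehat{\cL}^*$—the approach is to note that as $z$ ranges over the unit circle and $s$ over $\mathbb{C}$ (and the modes vary), the operators $e^{s\bfq_z}$ essentially parametrize the Weyl displacement operators, whose linear span is dense in the weighted Hilbert-Schmidt space $L^2(\widehat\sigma)$ on which $\widehat{\cL}^*$ is self-adjoint. Analyticity of $s \mapsto G(s)$ then implies that the Taylor coefficients $\{V_k\}$ span the same closed subspace, and self-adjointness plus discreteness of the spectrum of $\widehat{\cL}^*$ gives completeness. I expect the algebraic identity above to be the straightforward part; the main obstacle is making this density/self-adjointness argument fully rigorous—specifying the right completion, verifying essential self-adjointness on an appropriate core, and handling the unbounded operators $\bfq_z$ so that no spectral mass escapes the discrete decomposition.
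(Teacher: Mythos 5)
Your proposal is correct and follows essentially the same route as the paper's Appendix~\ref{app:spectrum}: reduce to a single mode, use the scalar commutators of $\bfq_z$ with $\bfa,\bfa^\dagger$ to obtain the closed form $\cL^*\big(G(s)\big)=\sinh(\beta/2)\,s\,\partial_s G(s)$ for the generating function $G(s)=e^{s\bfq_z-\frac{\coth(\beta/2)}{4}s^2}$, and match powers of $s$. The only difference is cosmetic and lies in the completeness step, where the paper simply asserts that the $h_k(\bfq_z)$ span all polynomials in $\bfa,\bfa^\dagger$ while you pass through density of the Weyl operators; both are equally terse sketches of the same standard density fact.
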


\medskip

For any $\ell=(\ell_1, \dots, \ell_m) \in \mathbb Z^m$, let $\mathcal F_\ell\subset L_2(\widehat \sigma)$ be the space of operators $X$ such that $\|X\|_{2, \widehat \sigma}<+\infty$ and for any $n=(n_1, \dots, n_m)$
their matrix entries in the Fock basis satisfy
$$\bra{n_1, \dots, n_m} X\ket{n_1+\ell'_1, \dots, n_m+\ell'_m}=0, \qquad \forall \ell'\neq \ell.$$
Then, any operator $X$ with $\|X\|_{2, \widehat \sigma}<+\infty$ can be decomposed as
\begin{align}\label{eq:diag-decomposition}
X= \sum_{\ell\in \mathbb Z^m} X_\ell, \qquad \quad X_\ell \in \mathcal F_\ell.
\end{align}
Following~\cite{CarboneSasso2008} we call~\eqref{eq:diag-decomposition} the \emph{diagonal decomposition} of $X$. A crucial property of the diagonal decomposition is that $\cF_\ell$'s are orthogonal subspaces, and $\langle X_\ell, X_{\ell'}\rangle_{\widehat \sigma}=0$ if $\ell\neq \ell'$. This is a consequence of the fact that $\widehat \sigma$ is diagonal in the Fock basis.

\begin{corollary} \textup{\cite{Cipriani+2000} } \label{cor:spectrum}
For any $\ell\in \mathbb Z^m$ the subspace $\mathcal F_\ell$ is invariant under $\widehat \cL^*$, and the eigenvalues of $\widehat \cL^*$ restricted to $\mathcal F_\ell$ are contained in $\big\{\sum_j \sinh(\beta_j/2)k_j:~ k_j\geq|\ell_j|    \big\}$. In particular, for any $X_\ell \in \mathcal F_\ell$ we have 
$$\bigg(\sum_{j=1}^m  \sinh(\beta_j/2)|\ell_j|  \bigg) \|X_\ell\|_{2, \widehat \sigma}\leq \big\langle X_\ell, \widehat \cL^*(X_\ell)\big\rangle_{2, \widehat \sigma}.$$
\end{corollary}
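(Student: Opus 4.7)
The plan is to prove the two claims in sequence — first the invariance $\widehat\cL^*(\mathcal F_\ell)\subseteq \mathcal F_\ell$, then the spectral inclusion and its Poincar\'e-type consequence — by combining a phase-covariance argument with the explicit spectrum supplied by Proposition~\ref{prop:spectrum1}.

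For the invariance I would argue that each $\cL_j^*$ is manifestly phase-covariant in the $j$-th mode: conjugation by $U_{j,\theta}=e^{i\theta\bfa_j^\dagger\bfa_j}$ leaves $\{\bfa_j\bfa_j^\dagger,X\}$ and $\{\bfa_j^\dagger\bfa_j,X\}$ fixed and sends $\bfa_j X\bfa_j^\dagger$ and $\bfa_j^\dagger X\bfa_j$ to themselves, because the phase factors picked up by $\bfa_j$ and $\bfa_j^\dagger$ cancel. Equivalently, computing $\bra{n}\cL_j^*(X)\ket{n+\ell'}$ term by term in the Fock basis shows that every contribution involves a single matrix entry of $X$ with the same shift $\ell'$, so if $X\in\mathcal F_\ell$ then $\widehat\cL^*(X)\in\mathcal F_\ell$ as well. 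Put differently, $\mathcal F_\ell$ is the joint eigenspace of the commuting superoperators $X\mapsto U_{j,\theta}\,X\, U_{j,\theta}^\dagger$ with eigenvalues $e^{i\theta\ell_j}$, and $\widehat\cL^*$ commutes with each of them.

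For the spectral statement, Proposition~\ref{prop:spectrum1} says the eigenvectors of $\widehat\cL^*$ are (linear combinations of) operators $V_k=\bigotimes_j h_{j,k_j}(\bfq_{j,z_j})$ with eigenvalues $\sum_j\sinh(\beta_j/2)k_j$. Since $h_{j,k_j}$ is a polynomial of degree $k_j$ and $\bfq_{j,z_j}$ is a linear combination of $\bfa_j$ and $\bfa_j^\dagger$, the operator $h_{j,k_j}(\bfq_{j,z_j})$ is a noncommutative polynomial of total degree at most $k_j$ in $\bfa_j,\bfa_j^\dagger$; consequently the matrix entries $\bra{n}h_{j,k_j}(\bfq_{j,z_j})\ket{n+\ell'_j}$ vanish whenever $|\ell'_j|>k_j$. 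Let $P_\ell$ denote the orthogonal projection onto $\mathcal F_\ell$ with respect to $\langle\cdot,\cdot\rangle_{\widehat\sigma}$. The previous step gives $[P_\ell,\widehat\cL^*]=0$, so $P_\ell V_k$ is either zero or again an eigenvector of $\widehat\cL^*$ with the same eigenvalue, this time lying in $\mathcal F_\ell$. The degree observation forces $P_\ell V_k=0$ unless $k_j\geq|\ell_j|$ for every $j$, which yields the desired spectral inclusion. The Poincar\'e inequality then follows from the spectral theorem: $\widehat\cL^*$ is self-adjoint and positive semidefinite with respect to $\langle\cdot,\cdot\rangle_{\widehat\sigma}$, and the infimum of $\sum_j\sinh(\beta_j/2)k_j$ subject to $k_j\geq|\ell_j|$ is $\sum_j\sinh(\beta_j/2)|\ell_j|$, attained at $k_j=|\ell_j|$.

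The main obstacle is technical rather than conceptual: to apply the spectral theorem rigorously on this infinite-dimensional space one must verify that the family $\{P_\ell V_k:\, k_j\geq|\ell_j|\}$ is complete in $\mathcal F_\ell$ (not merely a set of eigenvectors) and handle domain issues associated with the unbounded generator $\widehat\cL^*$. These should follow from the completeness statement in Proposition~\ref{prop:spectrum1} together with a standard density argument on operators with $\|X_\ell\|_{2,\widehat\sigma}<+\infty$, but care is needed to make sure that the Rayleigh-quotient lower bound extends from finite eigenvector expansions to all such $X_\ell$.
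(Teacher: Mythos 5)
Your proposal is correct and follows essentially the same route as the paper: project the eigenvectors $V_k$ of Proposition~\ref{prop:spectrum1} onto $\mathcal F_\ell$ (the paper writes this as the diagonal decomposition $V_k=\sum_\ell V_{k,\ell}$, which is exactly your $P_\ell V_k$), use the degree bound on the Hermite polynomials to kill the components with $|\ell_j|>k_j$, and invoke completeness of the $V_k$ to conclude that the surviving projections span $\mathcal F_\ell$. Your phase-covariance packaging of the invariance step is a slightly more explicit version of what the paper dismisses as "easily verified," but it is not a different argument.
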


\begin{proof}
The fact that $\cL^*(\cF_\ell)\subseteq \cF_\ell$ is easily verified using the definition of $\widehat \cL^*$. Therefore, considering the diagonal decomposition $V_k= \sum_\ell V_{k, \ell}$ of the eigenvector $V_k$ defined in Proposition~\ref{prop:spectrum1}, we find that if $V_{k, \ell}\neq 0$, then it is also an eigenvector of $\widehat \cL^*$ with the same eigenvalue as that of $V_k$. On the other hand, by definition, $V_k$ is a polynomial of degree $k_j$ in terms of $\bfa_j$ and $\bfa_j^\dagger$. This means that we have $V_{k, \ell}=0$ if $|\ell_j|>k_j$ for some  $j$. We also note that since by Proposition~\ref{prop:spectrum1}, the closure of the span of $V_k$'s is the whole space $L_2(\widehat \sigma)$, the closure of the span of operators $V_{k, \ell}$ equals $\mathcal F_\ell$.  Indeed, $\mathcal F_\ell= \overline{\text{span}\{V_{k, \ell}:\, k_j\geq |\ell_j|, \forall j\}}$. Putting these together the desired result is implied.

\end{proof}

\medskip
To state our next lemma it is convenient to extend the definition of the entropy function for operators that are not necessarily normalized. For any positive operator $X=\Gamma_{\widehat \sigma}^{-1/2}(\rho^{1/2})$ where $\rho$ is a density operator, we define 
$$\Ent_{2, \widehat \sigma}(X) = D(\rho\| \widehat \sigma).$$
Extending this definition to non-normalized operators satisfying $\|X\|_{2, \widehat \sigma}<+\infty$, we let 
$$\Ent_{2, \widehat\sigma}(X) = \|X\|_{2, \widehat \sigma}^2 \,\Ent_{2, \widehat\sigma}\Big(\frac{X}{\|X\|_{2, \widehat\sigma}}\Big).$$
The significance of this entropy function is in its connection to the derivative of $p\mapsto \|X\|_{p, \widehat \sigma}$ at $p=2$ for which we refer to~\cite{OZ99, KT13, BDR20}.

\begin{lemma}\label{lem:entropic-ineq}
Let $X$ be a positive operator satisfying  $\tr\big(\rho H_m \big)<+\infty$ where   $ \rho= \Gamma_{\widehat \sigma}^{1/2}(X)^2 $ and $H_m=\sum_{j=1}^m \bfa_j^\dagger \bfa_j$ is the $m$-mode number operator.
Then, for any vector $w=(w_\ell)_\ell$ of positive real numbers we have
\begin{align}\label{eq:entropic-ineq}
\Ent_{2, \widehat \sigma}(X) \leq \sum_{\ell} \big( \log \|w\|_2^2 - \log w_{\ell}^2\big) \|X_\ell\|_{2, \widehat \sigma}^2 + \sum_\ell  \Ent_{2, \widehat \sigma}\big(I_{2,2}(X_\ell)\big) ,
\end{align}
where $X_\ell$'s are given by the diagonal decomposition $X= \sum_\ell X_\ell$ with $X_\ell\in \mathcal F_\ell$. Also, $I_{2,2}(X_\ell) = \Gamma_{\widehat \sigma}^{-1/2}\big(\big|\Gamma_{\widehat \sigma}^{1/2}(X_\ell) \big|\big)$ and  
 $\|w\|_2^2 = \sum_\ell w_\ell^2$.
\end{lemma}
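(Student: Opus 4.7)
The strategy is to decompose $\Ent_{2,\widehat\sigma}(X)=D(\rho\|\widehat\sigma)$, where $\rho=|\Gamma_{\widehat\sigma}^{1/2}(X)|^2$, using the Fock-basis conditional expectation, and then to bound each piece separately. Set $Y=\Gamma_{\widehat\sigma}^{1/2}(X)$, $Y_\ell=\Gamma_{\widehat\sigma}^{1/2}(X_\ell)$, $\rho_\ell=Y_\ell^\dagger Y_\ell$, and let $T(\rho)=\sum_n \ket{n}\bra{n}\rho\ket{n}\bra{n}$ be the Fock pinching. A key preliminary observation is that self-adjointness of $X$ forces $Y_{-\ell}=Y_\ell^\dagger$, so expanding $\rho=Y^2=\sum_{\ell,\ell'}Y_\ell Y_{\ell'}$ and keeping only the Fock-diagonal ($\ell+\ell'=0$) terms yields
\[
T(\rho) \;=\; \sum_\ell Y_\ell Y_{-\ell} \;=\; \sum_\ell Y_\ell^\dagger Y_\ell \;=\; \sum_\ell \rho_\ell,
\]
where the middle equality follows by relabelling $\ell\mapsto-\ell$.

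Since $T$ is the conditional expectation onto the abelian algebra of Fock-diagonal operators and $\widehat\sigma$ lies in that algebra, Petz's theorem delivers the exact splitting
\[
D(\rho\|\widehat\sigma)\;=\;D(\rho\|T(\rho))+D(T(\rho)\|\widehat\sigma).
\]
For the second summand I would apply joint convexity of $D(\cdot\|\widehat\sigma)$ to the convex combination $T(\rho)=\sum_\ell c_\ell(\rho_\ell/c_\ell)$ with $c_\ell=\|X_\ell\|_{2,\widehat\sigma}^2$, obtaining
\[
D(T(\rho)\|\widehat\sigma)\;\le\;\sum_\ell c_\ell\, D(\rho_\ell/c_\ell\|\widehat\sigma)\;=\;\sum_\ell\Ent_{2,\widehat\sigma}\bigl(I_{2,2}(X_\ell)\bigr),
\]
which reproduces the second term on the right-hand side of the lemma.

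It therefore remains to establish the weighted-pinching bound
\[
D(\rho\|T(\rho))\;\le\;\sum_\ell c_\ell\bigl(\log\|w\|_2^2-\log w_\ell^2\bigr).
\]
My starting point is the operator Cauchy--Schwarz inequality: writing $Y=\sum_\ell w_\ell(Y_\ell/w_\ell)$ and applying $\|\sum_\ell w_\ell v_\ell\|^2\le\|w\|_2^2\sum_\ell\|v_\ell\|^2/w_\ell^2$ pointwise to $v_\ell=Y_\ell|\psi\rangle$ gives
\[
\rho\;\le\;\|w\|_2^2\sum_\ell\rho_\ell/w_\ell^2\;=:\;\|w\|_2^2\,N.
\]
Operator monotonicity of $\log$ together with Fock-diagonality of $N$ (which makes $\tr(\rho\log N)=\tr(T(\rho)\log N)$) then reduces the estimate to a classical one involving the Fock-basis matrix elements $\mu_{n,\ell}=|(Y_\ell)_{n,n+\ell}|^2$, which satisfy $\sum_n\mu_{n,\ell}=c_\ell$ and the symmetry $\mu_{n,\ell}=\mu_{n+\ell,-\ell}$ inherited from $Y_{-\ell}=Y_\ell^\dagger$. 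As a sanity check, the target bound specialises to the standard $\log L$ pinching inequality when $w_\ell\equiv 1$ and to the Shannon entropy $H(c)=-\sum_\ell c_\ell\log c_\ell$ of the shift distribution when $w_\ell^2=c_\ell$.

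The main obstacle is this final classical step: a naive log-sum inequality produces only the looser bound $\log(\|w\|_2^2\,\tr N)$, which by concavity of $\log$ strictly dominates $\sum_\ell c_\ell(\log\|w\|_2^2-\log w_\ell^2)$. Bridging that gap, I expect, requires exploiting the same symmetry $\mu_{n,\ell}=\mu_{n+\ell,-\ell}$ that produced $T(\rho)=\sum_\ell\rho_\ell$, in order to interchange the $n$- and $\ell$-summations so that each shift $\ell$ contributes with its own weight $c_\ell$ rather than being amalgamated into the single quantity $\tr N$.
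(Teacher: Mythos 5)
Your architecture is genuinely different from the paper's, and its first half is sound. The paper proves the lemma in one stroke by complex interpolation: it bounds the multiplier maps $T_z(X)=\sum_\ell w_\ell^z X_\ell$ as isometries from $(2,2,\widehat\sigma)$ to $(2,\widehat\sigma)$ on the line $\mathrm{Re}\,z=0$ and as maps of norm at most $\|w\|_2$ from $(2,\infty,\widehat\sigma)$ to $(\infty,\widehat\sigma)$ on the line $\mathrm{Re}\,z=1$, interpolates, and differentiates the resulting family of norm inequalities at $p=2$, where it is saturated; the entropy terms arise as derivatives of the weighted $p$-norms. In your alternative, the identity $T(\rho)=\sum_\ell\rho_\ell$ (using $Y_{-\ell}=Y_\ell^\dagger$), the exact splitting $D(\rho\|\widehat\sigma)=D(\rho\|T(\rho))+D(T(\rho)\|\widehat\sigma)$ (immediate since $\widehat\sigma$ and $T(\rho)$ are both Fock-diagonal), and the joint-convexity estimate $D(T(\rho)\|\widehat\sigma)\le\sum_\ell\Ent_{2,\widehat\sigma}\big(I_{2,2}(X_\ell)\big)$ are all correct and do reproduce the second term of the lemma.

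The genuine gap is the remaining bound $D(\rho\|T(\rho))\le\sum_\ell c_\ell\big(\log\|w\|_2^2-\log w_\ell^2\big)$ with $c_\ell=\|X_\ell\|_{2,\widehat\sigma}^2$, and the route you sketch for it is a dead end, not merely unfinished. From $\rho\le\|w\|_2^2 N$ with $N=\sum_\ell w_\ell^{-2}\rho_\ell$ and operator monotonicity of $\log$ you obtain exactly $D(\rho\|N)\le\log\|w\|_2^2$, and since $N$ and $T(\rho)$ are both diagonal one has the identity $D(\rho\|T(\rho))=D(\rho\|N)-D(T(\rho)\|N)$; so your chain closes if and only if $D(T(\rho)\|N)\ge\sum_\ell c_\ell\log w_\ell^2$. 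But writing $t_n=\sum_\ell\mu_{n,\ell}$ and $N_n=\sum_\ell\mu_{n,\ell}w_\ell^{-2}$, the log-sum inequality gives the \emph{reverse}: $t_n\log(t_n/N_n)\le\sum_\ell\mu_{n,\ell}\log w_\ell^2$ for each $n$, with equality only when $w_\ell$ is constant on the support of each row $\mu_{n,\cdot}$. Concretely, take $\rho$ supported on the first two Fock levels with $\rho^{1/2}=\tfrac12\big(\ketbra{0}{0}+\ketbra{0}{1}+\ketbra{1}{0}+\ketbra{1}{1}\big)$, so that $\mu_{n,\ell}=\tfrac14$ for the four admissible pairs, $c_0=\tfrac12$, $c_{\pm1}=\tfrac14$; choosing $w_0=1$, $w_{\pm1}=\eps$ gives $D(T(\rho)\|N)=\log\tfrac{2\eps^2}{1+\eps^2}<\log\eps=\sum_\ell c_\ell\log w_\ell^2$ for every $\eps\neq1$. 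The symmetry $\mu_{n,\ell}=\mu_{n+\ell,-\ell}$ you hope to exploit holds automatically for any self-adjoint $Y$, in particular here, so it cannot rescue the argument. This route therefore tops out at the looser bound $\log\big(\|w\|_2^2\sum_\ell c_\ell w_\ell^{-2}\big)$ you already identified. What is actually needed is the strictly stronger statement whose essential content (take $w_\ell^2=c_\ell$) is $S(T(\rho))-S(\rho)\le -\sum_\ell c_\ell\log c_\ell$, i.e., a pinching inequality in which the number of blocks is replaced by the entropy of the weight distribution; I see no elementary substitute for the interpolation argument that supplies it in the paper.
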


\begin{proof}
 Fix a finite subset $S\subset \mathbb Z^m$ satisfying $-\ell \in S$ for all $\ell \in S$. For some technical reason, in the following we restrict to the subspace of operators $X$ satisfying $X_\ell=0$  if $\ell\notin S$. Later, we will relax this assumption.

For any $z\in \mathbb C$ define the map $T_z$ by
$$T_z(X) = \sum_{\ell} w_\ell^zX_\ell,$$
where as before $X=\sum_\ell X_\ell$ is the diagonal decomposition of $X$. Note that by the above assumption this sum is indeed a finite sum and over $\ell\in S$.   
By the orthogonality of the subspaces $\mathcal F_\ell$ for any $t\in \mathbb R$ we have
\begin{align}\label{eq:norm-it}
\big\|T_{it}(X)\big\|_{2, \widehat \sigma}^2 = \Big\|\sum_\ell  w_{\ell}^{it} X_\ell\Big\|_{2, \widehat \sigma}^2=
\sum_\ell |w_{\ell}^{it}|^2\cdot \|X_\ell\|_{2, \widehat \sigma}^2=
\sum_\ell \|X_\ell\|_{2, \sigma}^2.
\end{align}
Next, recall the definition of norm $\|\cdot\|_{p, \widehat \sigma}$ given by~\eqref{eq:w-p-norm}. Using the triangle and Cauchy-Schwarz inequalities, for any $t\in \mathbb R$ we have
\begin{align}\label{eq:norm-1+it}
\big\|T_{1+it}(X)\big\|_{\infty, \widehat \sigma}^2 & =\Big\|\sum_\ell w_{\ell}^{1+it} X_\ell \Big\|_{\infty, \widehat \sigma}^2 \nonumber\\
&\leq \Big( \sum_\ell |w_{\ell}^{1+it}|\cdot \|X_k\|_{\infty, \widehat  \sigma}\Big)^2\nonumber\\
&\leq \Big(\sum_\ell w_{\ell}^2\Big)\Big(  \sum_{\ell} \|X_\ell\|_{\infty, \widehat \sigma}^2 \Big).
\end{align}
For any $1\leq p\leq +\infty$ define
$$\|X\|_{2, p, \widehat \sigma} := \Big( \sum_\ell \|X_\ell\|^2_{p,\widehat \sigma} \Big)^{1/2}.$$
It is well-known and can be easily verified that $\|\cdot\|_{2, p, \widehat \sigma}$ satisfies the triangle inequality and is really a norm. Moreover, these norms form an \emph{interpolation family}~\cite{Pisier98}. 
Now, with this notation,~\eqref{eq:norm-it} and~\eqref{eq:norm-1+it} imply that
$$\|T_{it}\|_{(2, 2, \widehat \sigma)\to (2, \widehat \sigma)}=1, \qquad \|T_{1+it}\|_{(2, \infty, \widehat \sigma)\to (\infty, \widehat \sigma)}\leq \|w\|_2, \qquad \quad \forall t\in \mathbb R.$$
 Therefore, by the interpolation inequality~\cite{Lunardi2018}, we have
\begin{align*}
\|T_{1-2/p}\|_{(2, p, \sigma)\to (p, \sigma)}\leq \|w\|_2^{1-2/p}, \qquad \quad \forall ~2\leq p\leq +\infty.
\end{align*}
This means that 
\begin{align*}
\Big\|\sum_\ell w_{\ell}^{1-2/p} X_\ell \Big\|_p \leq \|w\|_2^{1-2/p}\cdot \Big(\sum_\ell \|X_\ell \|_p^2\Big)^{1/2}.
\end{align*}
We note that this inequality turns into an equality for $p=2$. Therefore, we must have  
$$\frac{\dd}{\dd p}\bigg(\Big\|\sum_\ell w_{\ell}^{1-2/p} X_\ell \Big\|_p \bigg)\bigg|_{p=2} \leq \frac{\dd}{\dd p}\bigg(\|w\|_2^{1-2/p}\cdot \Big(\sum_\ell \|X_\ell \|_p^2\Big)^{1/2} \bigg)\bigg|_{p=2}.
$$
Computing the derivative of both sides by~\cite[Proposition 3]{BDR20} and using the fact that $X_{\ell}^\dagger = X_{-\ell}$, the desired inequality is obtained.

Now we relax the assumption that $X_\ell$ is non-zero only for finitely many $\ell$. Assume that $X=  \|X\|_{2, \widehat \sigma}  \Gamma_{\widehat \sigma}^{-1/2}\big(\rho^{1/2}\big)$ where $\rho$ is a quantum state with finite mean photon number, in which case 
$$\Ent_{2, \widehat \sigma}(X) = \|X\|^2_{2, \widehat \sigma} D( \rho\| \widehat \sigma) =  -\|X\|^2_{2, \widehat \sigma}\big(S(\rho) + \tr(\rho \log \widehat \sigma)\big).$$ 
Let $\Pi_k$ be the projection on the span of basis vectors $\ket{n_1, \dots, n_m}$ satisfying $n_1+\cdots + n_m\leq k$. Let $X^{(k)}  = \Pi_k X \Pi_k$ and consider the quantum state $\rho^{(k)}=\|X^{(k)}\|^{-2}_{2, \widehat \sigma} \Gamma_{\widehat \sigma}^{-1/2}\big(X^{(k)}\big)^2$.
Observe that  $X_\ell^{(k)}$ is non-zero only for finitely many values of $\ell$. Then, by the above argument~\eqref{eq:entropic-ineq} holds for $X^{(k)}$. Taking the limit of $k\to +\infty$, clearly $\big\|X_\ell^{(k)}\big\|_{2, \widehat\sigma}$ tends to $\|X_\ell\|_{2, \widehat\sigma}$, which also implies that the $2$-norm of  $X^{(k)}$ tends to that of $X$. For the entropy terms, it can be verified that the mean photon number of $\rho^{(k)}$ tends to that of $\rho$. This implies that $\tr\big(\rho^{(k)} \log \widehat \sigma\big)$ tends to $\tr(\rho \log \widehat \sigma)$. 
Next, using the assumption $\tr(\rho H_m)<+\infty$, we can apply the continuity bound for the entropy function~\cite[Lemma 18]{winter2016tight} to conclude that $S\big(\rho^{(k)}\big)$ tends to $S(\rho)$. Therefore, 
$\Ent_{2, \widehat \sigma}\big(X^{(k)}\big)\to \Ent_{2, \widehat \sigma}(X)$ and similarly $\Ent_{2, \widehat \sigma}\big(I_{2,2}\big(X^{(k)}_\ell\big) \to\Ent_{2, \widehat \sigma}\big(I_{2,2}(X_\ell)$ as $k\to +\infty$. Putting these together the desired inequality for $X$ is implied.

\end{proof}

We remark that Lemma~\ref{lem:entropic-ineq} holds beyond the diagonal decomposition considered above and works for any decomposition of the space of operators into orthogonal subspaces. Thus, this lemma is of independent interest  and may find other applications.

\begin{theorem}\label{thm:multimode-LSI}
For any $\beta_1, \dots, \beta_m>0$ we have 
\begin{align*}
\widehat \alpha_{2} \geq \bigg(\frac{2+ \log(2m+1)}{\sinh(\beta_{\min}/2)} + \frac{1}{\alpha_{2, \min}}\bigg)^{-1},
\end{align*}
where $\beta_{\min} = \min_j \beta_j$ and $\alpha_{2, \min} = \min_j  \frac{4\sinh^2(\beta_j/4)}{\beta_j}  = \frac{4\sinh^2(\beta_{\min}/4)}{\beta_{\min}}$.
\end{theorem}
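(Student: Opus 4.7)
The plan is to reduce the multimode problem to the single-mode result of Theorem~\ref{thm:p-LSI} along the ``diagonal'' part of $X=\Gamma_{\widehat\sigma}^{-1/2}(\rho^{1/2})$, at a penalty that controls the off-diagonal blocks via the spectral gap from Corollary~\ref{cor:spectrum}. Since $\|X\|_{2,\widehat\sigma}^2=1$, the target inequality is equivalent to $\widehat\alpha_2\,\Ent_{2,\widehat\sigma}(X)\leq \langle X,\widehat\cL^*(X)\rangle_{\widehat\sigma}$. I would decompose $X=\sum_\ell X_\ell$ as in~\eqref{eq:diag-decomposition} and apply Lemma~\ref{lem:entropic-ineq} with the geometric weights $w_\ell^2=r^{\|\ell\|_1}$, where $\|\ell\|_1:=\sum_j|\ell_j|$ and $r\in(0,1)$ will be optimized. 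Since $\|w\|_2^2=\prod_j\sum_{k\in\mathbb Z}r^{|k|}=((1+r)/(1-r))^m$, the lemma gives
\begin{align*}
\Ent_{2,\widehat\sigma}(X)\leq m\log\tfrac{1+r}{1-r}+\log(1/r)\sum_\ell\|\ell\|_1\|X_\ell\|_{2,\widehat\sigma}^2+\sum_\ell\Ent_{2,\widehat\sigma}\bigl(I_{2,2}(X_\ell)\bigr).
\end{align*}

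The middle ``penalty'' term is estimated using $\|\ell\|_1\leq\sinh(\beta_{\min}/2)^{-1}\sum_j\sinh(\beta_j/2)|\ell_j|$ together with Corollary~\ref{cor:spectrum} on each block, which bounds it by $\log(1/r)/\sinh(\beta_{\min}/2)\cdot\langle X,\widehat\cL^*(X)\rangle_{\widehat\sigma}$. The last term I would handle by observing that every $I_{2,2}(X_\ell)$ is positive and diagonal in the Fock basis, since $X_\ell^\dagger X_\ell$ is diagonal when $X_\ell\in\cF_\ell$ and $\widehat\sigma$ commutes with all number operators. On the subspace of Fock-diagonal operators, the semigroup $\widehat\Phi_t$ reduces to a product of classical single-site birth--death chains on $\mathbb N$; by the classical tensorization property of 2-log-Sobolev constants (see, e.g.,~\cite{Mossel+2013}) together with the exact single-mode constant $\alpha_2$ from Theorem~\ref{thm:p-LSI}, I obtain
$$\alpha_{2,\min}\sum_\ell\Ent_{2,\widehat\sigma}\bigl(I_{2,2}(X_\ell)\bigr)\leq \sum_\ell\bigl\langle I_{2,2}(X_\ell),\widehat\cL^*(I_{2,2}(X_\ell))\bigr\rangle_{\widehat\sigma}\leq \langle X,\widehat\cL^*(X)\rangle_{\widehat\sigma},$$
where the last inequality is a block-wise ``positivization'' estimate in the spirit of~\cite{CarboneSasso2008}, asserting that replacing $X_\ell$ by $I_{2,2}(X_\ell)$ does not increase the Dirichlet form summed over blocks.

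Choosing $r=1/(2m+1)$ gives $\log(1/r)=\log(2m+1)$, while $m\log((1+r)/(1-r))=m\log(1+1/m)\leq 1$; combining the three bounds yields
$$\Ent_{2,\widehat\sigma}(X)\leq 1+\Bigl(\tfrac{\log(2m+1)}{\sinh(\beta_{\min}/2)}+\tfrac{1}{\alpha_{2,\min}}\Bigr)\langle X,\widehat\cL^*(X)\rangle_{\widehat\sigma}.$$
To promote this to a genuine log-Sobolev inequality, the residual additive constant must be absorbed into the Dirichlet-form coefficient. I expect to do this via a Poincar\'e--Rothaus argument using the spectral gap of $\widehat\cL^*$ on the orthogonal complement of the constants, which by Corollary~\ref{cor:spectrum} is at least $\sinh(\beta_{\min}/2)$: comparing $\Ent$ with variance and then bounding variance by $\sinh(\beta_{\min}/2)^{-1}\langle X,\widehat\cL^*(X)\rangle_{\widehat\sigma}$, the ``$1$'' contributes an extra $2/\sinh(\beta_{\min}/2)$ to the coefficient, producing the advertised constant $(2+\log(2m+1))/\sinh(\beta_{\min}/2)+1/\alpha_{2,\min}$.

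The two main obstacles I anticipate are (i) establishing the block-wise positivization inequality $\sum_\ell\langle I_{2,2}(X_\ell),\widehat\cL^*(I_{2,2}(X_\ell))\rangle_{\widehat\sigma}\leq\langle X,\widehat\cL^*(X)\rangle_{\widehat\sigma}$, the quantum analogue of the classical fact that taking absolute values within each block does not increase the Dirichlet form, which will likely require working out the explicit action of $\cL_0,\cL_1$ on each $\cF_\ell$ and exploiting cancellations between the raising and lowering contributions; and (ii) carrying out the absorption of the additive constant cleanly with the correct factor. These are precisely the two places where the paper advertises a twofold improvement over~\cite{CarboneSasso2008}: using the exact single-mode constant $\alpha_2$ from Theorem~\ref{thm:p-LSI} in place of a bound, and obtaining a sharper penalty through the choice of $r$.
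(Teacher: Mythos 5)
Your proposal follows essentially the same route as the paper: the same geometric weights $w_\ell=e^{-c|\ell|/2}$ in Lemma~\ref{lem:entropic-ineq}, the same spectral-gap bound from Corollary~\ref{cor:spectrum} for the penalty term, the same reduction of the diagonal blocks $I_{2,2}(X_\ell)$ to the classically tensorized single-mode constant from Theorem~\ref{thm:p-LSI}, the same choice $c=\log(2m+1)$ (your $r=1/(2m+1)$), and the same Rothaus-type absorption of the additive constant via the spectral gap. Neither of the obstacles you flag is a genuine gap: the block-wise positivization inequality is exactly \cite[Lemma 23]{BDR20} applied to the pairs $(X_\ell,X_{-\ell})$ using $X_\ell^\dagger=X_{-\ell}$, and the absorption step is \cite[Theorem 4.2]{OZ99}, which converts the residual additive constant $d\leq 1$ into $(d+1)/\sinh(\beta_{\min}/2)\leq 2/\sinh(\beta_{\min}/2)$, exactly as you predict.
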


\begin{proof}
We need to show that 
\begin{align}\label{eq:multimode-LSI-bound}
\Ent_{2, \widehat \sigma}(X) \leq \bigg(\frac{2+ \log(2m+1)}{\sinh(\beta_{\min}/2)} + \frac{1}{\alpha_{2, \min}}\bigg) \big\langle X, \widehat \cL^*(X)\big\rangle_{\widehat \sigma},
\end{align}
for any $X$ satisfying $\tr\big(\Gamma_{\widehat \sigma}^{1/2}(X)^2 H_m\big)<+\infty$.
We use Lemma~\ref{lem:entropic-ineq} for the choice of $w_\ell=e^{-\frac c2 |\ell|}$ where $c>0$ and $|\ell| = \sum_{j=1}^m |\ell_j|$. We have $\log\|w\|_2^2 - \log w_\ell^2 = c|\ell| +m\log\frac{e^c+1}{e^c-1}$ and 
\begin{align}\label{eq:proof-m-LSI-1}
\Ent_{2, \widehat \sigma}(X) \leq  m\log\frac{e^c+1}{e^c-1}\|X\|_{2, \widehat \sigma}^2 + c\sum_{\ell} |\ell|\cdot\|X_\ell\|_{2, \widehat \sigma}^2  + \sum_\ell  \Ent_{2, \widehat \sigma}(I_{2,2}(X_\ell)).
\end{align}
Now, since $X_\ell\in \mathcal F_\ell$, by Corollary~\ref{cor:spectrum} we have  
$$\sum_\ell |\ell| \cdot \|X_\ell\|_{2, \widehat \sigma}^2\leq  \frac{1}{\sinh(\beta_{\min}/2)}  \sum_\ell \big\langle X_\ell, \widehat \cL^*(X_\ell) \big\rangle_{\widehat \sigma} = \frac{1}{\sinh(\beta_{\min}/2)} \big\langle X, \widehat \cL^*(X)\big\rangle_{\widehat \sigma}.$$
Next, using the fact that $X_\ell\in \mathcal F_\ell$, it is easily verified that the operator $I_{2,2}(X_\ell)$ is diagonal. On the other hand, the restriction of $\widehat \cL^*$ to diagonal operators is essentially a classical Markov semigroup for which the tensorization property holds. Thus, the $2$-log-Sobolev constant of $\widehat \cL^*$ restricted to diagonal operators equals the minimum of the $2$-log-Sobolev constants of $\cL^*_j$'s restricted to (single-mode) diagonal operators. The latter quantity is computed in Theorem~\ref{thm:p-LSI}. Putting these together we conclude that
$$\alpha_{2, \min} \Ent_{2, \widehat \sigma}(I_{2, 2}(X_\ell)) \leq \big\langle I_{2, 2}(X_\ell), \widehat \cL^*\big(I_{2, 2}(X_\ell)\big)\big\rangle_{\widehat \sigma}.$$
Using the above two inequalities in~\eqref{eq:proof-m-LSI-1} we find that 
\begin{align*}
\Ent_{2, \widehat \sigma}(X)  & \leq m\log\frac{e^c+1}{e^c-1} \|X\|_{2, \widehat \sigma}^2 + \frac{c}{\sinh(\beta_{\min}/2)}\big\langle X, \widehat \cL^*(X)\big\rangle_{\widehat \sigma} + \frac{1}{\alpha_{2, \min}}\sum_\ell  \big\langle I_2(X_\ell), \widehat \cL^*\big(I_2(X_\ell)\big)\big\rangle_{\widehat \sigma}.
\end{align*} 
It is shown in~\cite[Lemma 23]{BDR20} that for any operator $Y$, 
$$\big\langle I_{2,2}(Y),\widehat \cL^*(I_{2,2}(Y))\big\rangle_{\widehat \sigma} + \big\langle I_{2,2}(Y^\dagger), \widehat \cL^*(I_{2,2}(Y^\dagger))\big\rangle_{\widehat \sigma}\leq \big \langle Y, \widehat \cL^*(Y)\big\rangle_{\widehat \sigma}+\big\langle Y^\dagger, \widehat \cL^*(Y^\dagger)\big\rangle_{\widehat \sigma}.$$
Therefore, using $X_\ell^\dagger= X_{-\ell}$, the above inequalities imply 
\begin{align*}
\Ent_{2, \widehat \sigma}(X)  & \leq m\log\frac{e^c+1}{e^c-1} \|X\|_{2, \widehat \sigma}^2 + \frac{c}{\sinh(\beta_{\min}/2)}\langle X, \widehat \cL^*(X)\rangle_{\widehat \sigma} +  \frac{1}{\alpha_{2, \min}}\sum_\ell  \big\langle X_\ell, \widehat \cL^*(X_\ell)\big\rangle_{\widehat \sigma}\\
& =m \log\frac{e^c+1}{e^c-1} \|X\|_{2, \widehat \sigma}^2 +\Big(\frac{c}{\sinh(\beta_{\min}/2)} + \frac{1}{\alpha_{2, \min}} \Big)\big\langle X, \widehat \cL^*(X)\big\rangle_{\widehat \sigma}. 
\end{align*}
Next, by Proposition~\ref{prop:spectrum1} the generator $\widehat\cL^*$ has the spectral gap $\sinh(\beta_{\min}/2)$. Thus,  by~\cite[Theorem 4.2]{OZ99} the above inequality implies
\begin{align*}
\Ent_{2, \widehat \sigma}(X) & \leq \bigg( \frac{1+m\log\frac{e^c+1}{e^c-1} }{\sinh(\beta_{\min}/2)} +\frac{c}{\sinh(\beta_{\min}/2)} + \frac 1{\alpha_{2, \min}} \bigg)\big\langle X, \widehat \cL^*(X)\big\rangle_{\widehat \sigma}. 
\end{align*}
Finally, letting $c=\log(2m+1 )$ and using $\log(1+\frac 1m)\leq \frac 1m$ the desired inequality~\eqref{eq:multimode-LSI-bound} is obtained.

\end{proof}

%************************************************************************
\section{Proof of the CMOE conjecture}\label{sec:CMOE}

In this section, we present an alternative proof of the CMOE conjecture for single-mode phase-covariant Gaussian channels, which is first established in~\cite{DePalma+2017PRL, PalmaTrevisanGiovannettt2017Attenuator}.

\begin{theorem}\label{thm:CMOE} \emph{[CMOE Conjecture]}
For any single-mode phase-covariant Gaussian channel $\Phi$, and any quantum state $\rho$ with finite mean photon number we have
\begin{align}\label{eq:CMOE}
S\big(\Phi(\rho)\big) \geq S\big(\Phi(\tau)\big),
\end{align}
where $\tau$ is a single-mode thermal state satisfying $S(\rho) = S(\tau)$.
\end{theorem}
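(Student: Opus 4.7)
My plan is to proceed via an entropy-production comparison along the semigroup orbit, using Theorem~\ref{thm:main-technical-result} at $p=1$ as the main tool. By Section~\ref{sec:phase-cov-channels}, any single-mode phase-covariant Gaussian channel admits an embedding $\Phi = \Phi_{t_0}$ in a semigroup $\Phi_t = e^{-t\cL}$ with Lindbladian of the form~\eqref{eq:LG-general form}, and these semigroups preserve thermal states. Hence the orbit $\tau_t := \Phi_t(\tau)$ of the matched thermal state $\tau$ (with $S(\tau) = S(\rho)$) remains thermal for all $t$ and is parameterized by a single scalar. Writing $\rho_t := \Phi_t(\rho)$, the identity $\frac{d}{dt}S(\rho_t) = \tr\bigl(\cL(\rho_t)\log\rho_t\bigr)$ together with $S(\rho_0) = S(\tau_0)$ yields
\begin{equation*}
S(\Phi(\rho)) - S(\Phi(\tau)) = \int_0^{t_0}\!\Bigl[\tr\bigl(\cL(\rho_s)\log\rho_s\bigr) - \tr\bigl(\cL(\tau_s)\log\tau_s\bigr)\Bigr]\,ds,
\end{equation*}
so CMOE reduces to controlling the entropy-production gap along the orbit.

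At $p=1$, Theorem~\ref{thm:main-technical-result} asserts that for every state $\eta$ and every $\omega \geq 0$,
\begin{equation*}
\tr\bigl(\cL(\eta)\log\eta\bigr) + \omega\,\tr\bigl(\eta\bfa^\dagger\bfa\bigr) + S(\eta) \geq \eta_{\thermal}(\omega),
\end{equation*}
with equality on the thermal state $\tau^*_\omega$ minimizing $\Upsilon_\omega$. Applied to $\eta = \rho_s$ with $\omega(s) \geq 0$ chosen so that $\tau^*_{\omega(s)}$ has entropy $S(\rho_s)$, and invoking the fact that thermal states minimize the mean photon number at a prescribed entropy, this produces a pointwise lower bound
\begin{equation*}
\tr\bigl(\cL(\rho_s)\log\rho_s\bigr) \geq h\bigl(S(\rho_s)\bigr) + \omega(s)\bigl[\tr(\tau^*_{\omega(s)}\bfa^\dagger\bfa) - \tr(\rho_s\bfa^\dagger\bfa)\bigr],
\end{equation*}
where $h(s)$ denotes the entropy-production rate $\tr\bigl(\cL(\tau')\log\tau'\bigr)$ for $\tau'$ the thermal state of entropy $s$; note that the bracket on the right is non-positive.

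The argument is then closed by an ODE comparison between $S(\rho_t)$ and $S(\tau_t)$. Since $\tau_t$ is thermal, its entropy satisfies the autonomous scalar ODE $\frac{d}{dt}S(\tau_t) = h(S(\tau_t))$; and for any state the mean photon number obeys the explicit linear ODE $\frac{d}{dt}\tr(\rho_t\bfa^\dagger\bfa) = (\nu_0 - \nu_1)\tr(\rho_t\bfa^\dagger\bfa) + \nu_0$, depending only on its initial value. Consequently the photon-number excess between $\rho_t$ and the entropy-matched thermal evolves by an explicit exponential factor. Combining this with the pointwise bound and integrating via a Gr\"onwall-type argument with an appropriate integrating factor coupling entropy and photon number should propagate the initial equality $S(\rho) = S(\tau)$ into the desired $S(\Phi(\rho)) \geq S(\Phi(\tau))$. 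The principal obstacle is the \emph{sign} of the correction in the pointwise bound: since $\omega(s) \geq 0$ and the photon-number excess is non-negative, the correction pushes the differential inequality in the wrong direction, and taming it---by exploiting the explicit exponential control of the excess, possibly in combination with the passive-state reduction from part~(i) of the proof of Theorem~\ref{thm:main-technical-result}---will be the most delicate step.
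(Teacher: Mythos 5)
Your skeleton — embed $\Phi$ in a semigroup, apply the $p=1$ meta log-Sobolev inequality pointwise along the orbit, and integrate the resulting differential inequality — is exactly the paper's strategy. But the specific way you instantiate Theorem~\ref{thm:main-technical-result} creates a gap that you correctly flag and do not close, and which I do not believe can be closed in the form you propose. By tuning the photon-number weight $\omega(s)$ so that the minimizing thermal state $\tau^*_{\omega(s)}$ matches $S(\rho_s)$, you are forced to carry the correction $\omega(s)\bigl[\tr(\tau^*_{\omega(s)}\bfa^\dagger\bfa)-\tr(\rho_s\bfa^\dagger\bfa)\bigr]\le 0$, and to reach entropies below that of the $\omega=0$ minimizer you need $\omega(s)$ bounded away from zero. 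The photon-number excess of $\rho_s$ over the entropy-matched thermal state is not controlled by the entropy (it is only bounded via the initial photon number of $\rho$), so a Gr\"onwall argument can at best yield $S(\Phi(\rho))\ge S(\Phi(\tau)) - (\text{error})$ with an error depending on $\tr(\rho\,\bfa^\dagger\bfa)$, not the clean inequality. Neither the exponential ODE for the photon number nor the diagonal reduction from part (i) removes this defect.

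The fix is to use a different one-parameter family of inequalities from the same Theorem~\ref{thm:main-technical-result}: keep $\omega=0$ and instead rescale $\nu_0,\nu_1$ by $1/\alpha$, which is equivalent to placing the tunable multiplier $\alpha>0$ on the \emph{entropy} term. This is what the paper does (Theorem~\ref{thm:entropy-derivative-optimization}): one shows by an elementary monotonicity argument (the function $g(x)=\nu_1-\nu_0+\frac{(1-x)(\nu_1 x-\nu_0)}{x\log x}$ decreases from $+\infty$ to $0$ on $(0,1)$, after reducing to $\nu_0>0$ by continuity) that for every thermal state $\tau_x$ there is $\alpha=g(x)>0$ making $\tau_x$ the global minimizer of $\rho\mapsto\frac{\dd}{\dd t}S(\Phi_t(\rho))\big|_{t=0}+\alpha S(\rho)$. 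Applying this along the orbit with $\alpha_t$ matched to the thermal state $\tau_{x_t}=\Phi_t(\tau)$ gives
\begin{equation*}
\frac{\dd}{\dd t}S\big(\Phi_t(\rho)\big)+\alpha_t S\big(\Phi_t(\rho)\big)\;\ge\;\frac{\dd}{\dd t}S(\tau_{x_t})+\alpha_t S(\tau_{x_t}),
\end{equation*}
with no photon-number term at all; integrating from the last time $t_0$ at which the two entropies agree immediately contradicts the assumption $S(\Phi_{t_1}(\rho))<S(\Phi_{t_1}(\tau))$. I recommend you replace your $\omega$-tuning by this $\alpha$-tuning; the rest of your outline then goes through essentially as you describe.
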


To prove this theorem we first state a consequence of Theorem~\ref{thm:main-technical-result}.

\begin{theorem}\label{thm:entropy-derivative-optimization}
Let $\{\Phi_t:\, t\geq 0\}$ denote a semigroup of single-mode phase-covariant Gaussian channels. Then, for any $\alpha> 0$ we have
\begin{align}\label{eq:entropy-derivative-optimization}
\eta_\thermal(\alpha)=\inf_{\rho}\, \frac{\dd}{\dd t} S\big(\Phi_t(\rho) \big)\Big|_{t=0} + \alpha S(\rho),
\end{align}
where the infimum is taken over all quantum states $\rho$ with finite mean photon number, and $\eta_\thermal(\alpha)$ is the infimum of the same function restricted to thermal states. 
\end{theorem}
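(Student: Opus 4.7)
My plan is to derive this identity as a direct consequence of the meta log-Sobolev inequality (Theorem~\ref{thm:main-technical-result}) specialized to the limiting case $p \to 1^+$, with $\omega = 0$ and the generator rescaled by $1/\alpha$.

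The first step is to identify the entropy derivative with a Hilbert--Schmidt pairing. Writing $\rho_t = \Phi_t(\rho) = e^{-t\cL}(\rho)$ with $\cL$ as in~\eqref{eq:LG-general form}, the standard matrix-calculus identity $\frac{\dd}{\dd t} \tr(f(\rho_t)) = \tr(f'(\rho_t)\dot{\rho}_t)$ applied to $f(x) = -x\log x$, combined with $\tr(\cL(\rho)) = 0$, yields
\begin{equation*}
\frac{\dd}{\dd t} S\big(\Phi_t(\rho)\big)\Big|_{t=0} = \langle \cL(\rho), \log \rho\rangle.
\end{equation*}
This is exactly the value taken by the first term of $\Upsilon(\rho)$ in~\eqref{eq:def-Upsilon} at $p = 1$, as noted in the paragraph preceding Theorem~\ref{thm:main-technical-result}.

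The second step is to match parameters. For fixed $\alpha > 0$, the rescaled generator $\tilde\cL := \cL/\alpha$ has non-negative coefficients $\nu_0/\alpha$ and $\nu_1/\alpha$, and hence is still the Lindbladian of a single-mode phase-covariant Gaussian semigroup (a mere time reparametrization of the original). Applying Theorem~\ref{thm:main-technical-result} to $\tilde\cL$ with $p = 1$ and $\omega = 0$ gives, for every physical state $\rho$,
\begin{equation*}
\frac{1}{\alpha}\langle \cL(\rho), \log \rho\rangle + S(\rho) \;\geq\; \inf_{\tau\text{ thermal}} \Big(\tfrac{1}{\alpha}\langle \cL(\tau), \log \tau\rangle + S(\tau)\Big).
\end{equation*}
Multiplying through by $\alpha$ and applying the identity from the first step to both sides yields $F(\rho) \geq \eta_\thermal(\alpha)$, where $F(\rho) := \frac{\dd}{\dd t} S(\Phi_t(\rho))|_{t=0} + \alpha S(\rho)$. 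Taking the infimum over all $\rho$ gives $\inf_\rho F(\rho) \geq \eta_\thermal(\alpha)$; the reverse inequality is trivial since thermal states are admissible in the larger infimum.

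I anticipate no real obstacle beyond careful bookkeeping: the meta log-Sobolev inequality does all the work, and the rescaling $\cL \mapsto \cL/\alpha$ is precisely what injects the free parameter $\alpha$ into a setting where $\Upsilon$ has a fixed coefficient $1$ in front of $S(\rho)$.
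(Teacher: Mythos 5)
Your proposal is correct and follows essentially the same route as the paper: identify $\frac{\dd}{\dd t} S(\Phi_t(\rho))\big|_{t=0}$ with $\langle \cL(\rho),\log\rho\rangle$ and then invoke Theorem~\ref{thm:main-technical-result} at $p=1$ with $\omega=0$; your explicit rescaling $\cL\mapsto\cL/\alpha$ is exactly the ``appropriate choice of $\nu_0,\nu_1$'' that the paper leaves implicit.
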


\begin{proof}[Proof of Theorem~\ref{thm:entropy-derivative-optimization}]
Let $\cL$ be the Lindbladian of the semigroup $\{\Phi_t:\, t\geq 0\}$ so that $\Phi_t=e^{-t\cL}$. We have
\begin{align}\label{eq:entropy-derivative-1}
\frac{\dd}{\dd t} S\big(\Phi_t(\rho) \big)\Big|_{t=0} =  \langle \cL(\rho), \log \rho\rangle.
\end{align}
Then, the claim follows from Theorem~\ref{thm:main-technical-result} for $p=1$ and appropriate choices of $\nu_0, \nu_1\geq 0$ and $\omega=0$.

\end{proof}

We need yet another technical ingredient to prove Theorem~\ref{thm:CMOE}. For $0<x<1$ let 
\begin{align}\label{eq:tau-x-1}
\tau_x = (1-x)\sum_{n} x^n\ketbra{n}{n},
\end{align}
be a thermal state with parameter $x$. Then, letting $\Phi_t= e^{-t\cL}$ with $\cL=\nu_0\cL_0+\nu_1\cL_1$ and using~\eqref{eq:entropy-derivative-1}, we find that the objective function in~\eqref{eq:entropy-derivative-optimization} for thermal states equals
\begin{align*}
f_\alpha(x) &:=  \frac{\dd}{\dd t} S\big(\Phi_t(\tau_x) \big)\big|_{t=0} + \alpha S(\tau_x) \\
&\, = \frac{1}{1-x}\Big(\!  (\nu_1 x-\nu_0) \log x - \alpha x\log x-\alpha(1-x)\log(1-x)\! \Big).
\end{align*}
This equation can also be derived by taking the limit of $p\to1^+$ in~\eqref{eq:Upsilon-for-thermal}.

Assume that $\nu_0>0$ and as before let $\alpha>0$. Then, we have $\lim_{x\to 0^+ }f_\alpha(x) = \lim_{x\to 1^- }f_\alpha(x) = +\infty$. This means that the minimum of $f_\alpha(x)$ is attained at roots of $f'_\alpha(x)$. We have
\begin{align*}
f'_\alpha(x) &= \frac{1}{1-x}\Big( \nu_1\log x + \frac{\nu_1x-\nu_0}{x} - \alpha\log x + \alpha\log(1-x)  \Big) \\
&\quad~ +\frac{1}{(1-x)^2} \Big((\nu_1x-\nu_0)\log x -\alpha x\log x - \alpha(1-x)\log(1-x) \Big)\\
& =\frac{1}{(1-x)^2}\bigg(\! (\nu_1-\nu_0)\log x + \frac{(1-x)(\nu_1x-\nu_0)}{x} -\alpha\log x\bigg)
\end{align*}
Therefore, $f'_\alpha(x)=0$ iff $g(x)=\alpha$ where
$$g(x) := \nu_1-\nu_0 + \frac{(1-x)(\nu_1x-\nu_0)}{x\log x}.$$
Computing the derivative of $g(x)$, we have
$$g'(x) = \frac{\nu_0(\log x-x+1) + \nu_1(x(x-1)-x^2\log x)}{x^2\log^2x}.$$
Then, using $\log x\leq x-1$ and $-\log x= \log\frac 1x\leq \frac 1x-1$ we find that $g'(x)\leq 0$ for $0<x<1$, and that $g(x)$ is monotone decreasing in this interval. Moreover, since $\nu_0>0$ we have $\lim_{x\to 0^+} g(x)=+\infty$ and $\lim_{x\to 1^-} g(x) = 0$. This means that $g(x)\geq 0$ for all $0<x<1$, and $g(x)$ takes any positive value in this interval exactly once. 
As a conclusion, $g(x)=\alpha$ has exactly one solution in $(0,1)$ for any $\alpha>0$. We conclude that, if $\alpha>0$, the minimum of $f_\alpha(x)$ is attained at the unique root of $f'_\alpha(x)$. More importantly, for any $0<x_0<1$, there is $\alpha_{0}=g(x_0)$ such that the minimum of $f_{\alpha_{0}}(x)$ is attained at $x=x_0$.

\begin{proof}[Proof of Theorem~\ref{thm:CMOE}]
As a phase-covariant Gaussian channel, we have $\Phi=\Phi_{t_1}$ for some phase-covariant Gaussian semigroup $\{\Phi_t:\, t\geq 0\}$ and some $t_1>0$.\footnote{If $t_1=0$, then $\Phi$ is the identity channel and there is nothing to prove.} Let $\cL=\nu_0\cL_0+\nu_1\cL_1$ be the Lindbladian of this semigroup. In order to apply the above computations, by taking the limit of $\nu_0\to 0^+$ we assume with no loss of generality that $\nu_0>0$. The point is that if $S\big(\Phi_{t_1}(\rho)\big)\geq S(\Phi_{t_1}(\tau))$ holds for any $\nu_0>0$, then by the continuity of von Neumann entropy~\cite[Lemma 18]{winter2016tight}, it holds for $\nu_0=0$ as well.

Suppose that $S\big(\Phi_{t_1}(\rho)\big)< S(\Phi_{t_1}(\tau_{x_0}))$ where $\tau=\tau_{x_0}$ is the thermal state in the statement of the theorem with parameter $x_0$ as in~\eqref{eq:tau-x-1}. We note that by assumption $S(\Phi_{t}(\rho))= S(\Phi_{t}(\tau_{x_0}))$ for $t=0$. Thus, we may define
$$t_0 =\sup\big\{t:\,   S\big(\Phi_{t}^{\otimes m}(\rho)\big)\geq mS\big(\Phi_{t}(\tau_{x_0})\big),   t<t_1  \big\}.$$
Then, by continuity we have $S(\Phi_{t_0}(\rho))= S(\Phi_{t_0}(\tau_{x_0}))$ and 
\begin{align}\label{eq:S-t-t1}
S(\Phi_{t}(\rho))< S(\Phi_{t}(\tau_{x_0})),\qquad \forall t_0<t\leq t_1.
\end{align}

Recall that $\Phi_t(\tau_{x_0})$ is a thermal state for any $t>0$. Let $x_t$ be the parameter of this thermal state: $\tau_{x_t} = \Phi_t(\tau_{x_0})$. We note that $t\mapsto x_t$ is smooth. Then, by the above discussions, there is a smooth function $t\mapsto \alpha_t$ with $\alpha_t>0$ such that 
$$\eta_\thermal(\alpha_t)= \frac{\dd}{\dd s} S\big(\Phi_s(\tau_{x_t}) \big)\big|_{s=0} + \alpha_t S(\tau_{x_t}).$$
Employing Theorem~\ref{thm:entropy-derivative-optimization}, for any $t_0\leq t\leq t_1$ we obtain
$$ \frac{\dd}{\dd t} S\big(\Phi_t(\rho) \big) + \alpha_t S(  \Phi_t(\rho))   \geq \frac{\dd}{\dd t} S(\tau_{x_t} ) + \alpha_t S(\tau_{x_t}).$$
Taking the integral of both sides yields
\begin{align*}
   S\big(\Phi_{t_1}(\rho) \big) - S\big(\Phi_{t_0}(\rho) \big) + \int_{t_0}^{t_1}\!\alpha_t S(  \Phi_t(\rho))\,\dd t     \geq S(\tau_{x_{t_1}}) - S(\tau_{x_{t_0}}) +\int_{t_0}^{t_1}\! \alpha_t S(\tau_{x_t})\, \dd t.
\end{align*}
We note that $S(\Phi_{t_0}(\rho))= S(\tau_{x_{t_0}})$, so this inequality is in contradiction with~\eqref{eq:S-t-t1}. This implies that the starting inequality $S\big(\Phi_{t_1}(\rho)\big)< S(\Phi_{t_1}(\tau_{x_0}))$ does not hold. 

\end{proof}

Generalization of the CMOE conjecture to the multimode case is an open problem in general. However, generalizing the proof idea of Theorem~\ref{thm:CMOE} and using Theorem~\ref{thm:main-technical-result-multimode-Gaussian}, we can prove the conjecture for a special class of multimode states that can be prepared by applying a passive Gaussian unitary $U_G$ on an arbitrary product state, $\rho=U_G \rho_{1}\otimes\rho_{2}\otimes \cdots\otimes\rho_{m} U_G^\dagger$.

%************************************************************************
\section{Conclusion}\label{sec:con}

In this paper, we introduced a meta log-Sobolev inequality, formalized by Theorem~\ref{thm:main-technical-result}, that provides a general information theoretic framework to study phase-covariant Gaussian quantum channels. In general, this inequality and our technical proof, particularly the tensorization argument that we used in the second step, are of independent interest. However, by using this result, we have derived new optimal bounds in the context of $p$-log-Sobolev inequalities for the semigroup of attenuation channels.

Specifically, in Theorem~\ref{thm:p-LSI}, we explicitly computed the optimal $p$-log-Sobolev constant of the quantum Ornstein-Uhlenbeck semigroup for any $1\leq p\leq 2$. We also in Theorem~\ref{thm:multimode-LSI} established a bound on the $2$-log-Sobolev constant in the multimode case that scales logarithmically with the number of modes. We conjecture that the log-Sobolev constants in the multimode case match those of the single-mode case and the tensorization property holds in general. We showed in Corollary~\ref{cor:LSI-Gaussian} that this conjecture holds assuming that the optimal multimode states are Gaussian.

Quantum log-Sobolev inequalities are also defined for $0\leq p< 1$ and are related to \emph{reverse} hypercontractivity inequalities~\cite{CKMT15, BDR20}. An interesting open problem is to compute the $p$-log-Sobolev constant for this range of parameter, which we leave it for future works.

The proof of our multimode log-Sobolev inequality is based on an entropic inequality in Lemma~\ref{lem:entropic-ineq} which is of independent interest. Using this inequality, one might be able to prove log-Sobolev inequalities for the tensor product of other quantum Markov semigroups. We leave exploration of such applications of Lemma~\ref{lem:entropic-ineq} for future works.

Our results have novel consequences even in the fully classical case, i.e., when restricting to states $\rho$ that are diagonal in the Fock basis. In particular, as pointed out in~\cite{CarboneSasso2008}, restricting the quantum Ornstein-Uhlenbeck semigroup to diagonal states, we obtain a classical \emph{birth-death process} whose log-Sobolev constants were not known prior to our work (except in the case of $p=1$). Moreover, our results have consequences for the classical operation of \emph{thinning} on the space of integer-valued random variables, for which we refer to~\cite{DePalma+2018survey} and references therein.

We also presented an alternative proof of the CMOE conjecture in Theorem~\ref{thm:CMOE} in the single-mode case. This conjecture in the multimode case is left as a major open problem. 
Our results contribute to the resolution of \emph{Gaussian optimizer conjectures} in the quantum case~\cite{DePalma+2018survey}. These conjectures state the optimality of Gaussian states for certain optimization problems regarding Gaussian channels. Theorem~\ref{thm:main-technical-result} establishes this conjecture for the quite general function $\Upsilon(\rho)$ in the single-mode case. Generalizing this theorem for multimode states can resolve the CMOE conjecture in the multimode case. To this end, ideas developed in~\cite{GaoRouze2022Complete} might be useful. We leave further exploration of this problem for future works.

\medskip
\paragraph{Acknowledgements.} SB is thankful to Reza Seyyedali for several inspiring discussions. This work is supported by the NRF grant NRF2021-QEP2-02-P05 and the Ministry of Education, Singapore, under the Research Centres of Excellence program.

%****************************************REFERENCES***********************************************
{\small
\bibliographystyle{abbrvurl} 
\bibliography{References}
}

%********************************************************************************************************
%*****************************************APPENDIX**************************************************

\appendix

\section{Generators of phase-covariant Gaussian channels}\label{app:generators}

In this appendix we show that any single-mode phase-covariant channel given by~\eqref{eq:Phas-Cov-Def} is a member of a semigroup whose Lindbladian equals~\eqref{eq:LG-general form}.  

Let us consider the family of channels $\{\Phi_t:\, t\geq 0\}$ corresponding to parameters $\{(\lambda_t, \gamma_t):\, t\geq 0\}$ in~\eqref{eq:Phas-Cov-Def}:
\begin{equation}\label{eq:in-out-charac}
	\chi_{\Phi_t(\rho)}(\xi)=\exp\!\bigg(\!\!-\frac 1 2 \gamma_t|\xi|^2\bigg)\chi_\rho\big(\!\sqrt{\lambda_t}\,\xi\big).
\end{equation}
Suppose that $\{\Phi_t:\, t\geq 0\}$ is a semigroup which in particular means that $\Phi_0$ is the identity channel corresponding to parameters $\gamma_0=0$ and $\lambda_0=1$. Then, the generator of this semigroup is determined by
\begin{equation*}
		\chi_{\cL(\rho)}(\xi)=-\frac{\dd}{\dd t} \chi_{\Phi_{t}(\rho)}(\xi)\Big|_{t=0} .
\end{equation*}
Using \eqref{eq:in-out-charac} we compute the derivative 
\begin{align*}
\frac{\dd}{\dd t} \chi_{\Phi_{t}(\rho)}(\xi)\Big|_{t=0} &= - \frac{1}{2}\gamma'_0\,|\xi|^2 \tr(\rho D_\xi)+\frac{1}{2}\lambda'_0\,\Big(\xi\, \tr\big(\rho\, \bfa^\dagger D_\xi\big) - \bar\xi\, \tr\big(\rho\, \bfa D_\xi\big)\Big)\\
&=- \frac{1}{2}\gamma_0'\, \tr\big(\rho  \big[\bfa, [\bfa^\dagger ,D_\xi]\big]\big)+\frac{1}{2}\lambda_0' \Big(\tr\big(\rho\, \bfa^\dagger \big[\bfa, D_\xi\big]\big) -  \tr\big(\rho\, \bfa \big[\bfa^\dagger, D_\xi\big]\big) \Big)\\
&=- \frac{1}{2}\gamma_0'\, \tr\big(\big[\bfa^\dagger, [\bfa ,\rho]\big]  D_\xi\big) + \frac{1}{2}\lambda'_0 \Big(\tr\big(\big[\bfa^\dagger, \rho\, \bfa\big]D_\xi -\big[\bfa, \rho\, \bfa^\dagger\big]D_\xi\big)\Big)\\
&=- \frac{1}{2}\gamma'_0\, \chi_{[\bfa^\dagger, [\bfa ,\rho]]}(\xi) + \frac{1}{2}\lambda'_0\,\chi_{[\bfa^\dagger, \rho \bfa] -[\bfa, \rho \bfa^\dagger]}(\xi),
\end{align*}
where $\lambda'_0=\frac{\dd\lambda_t}{\dd t}\big|_{t=0}$, $\gamma'_0=\frac{\dd\gamma_t}{\dd t}\big|_{t=0}$. Here, we use   $\frac{\dd}{\dd \lambda} D_{\lambda\delta} =  (\xi \bfa^\dagger - \bar\xi \bfa) D_{\lambda\xi}$ in the first line, and the well-known equations
$[\bfa, D_\xi] =\xi D_\xi$ and $[\bfa^\dagger, D_\xi] =\bar \xi D_\xi$ in the second line. Replacing the characteristic functions in the above equation by their corresponding operators yields
\begin{align*}
	\frac{\dd}{\dd t} \Phi_{t}(\rho)\Big|_{t=0}=&- \frac{1}{2}\gamma'_0\, \big[\bfa^\dagger, [\bfa ,\rho]\big]+ \frac{1}{2}\lambda'_0\Big([\bfa^\dagger, \rho\, \bfa] -[\bfa, \rho\, \bfa^\dagger]\Big)\\
	=&-\frac{1}{2}\gamma_0' \Big(\frac 12\big\{\bfa^\dagger \bfa, \rho\big\}+\frac 12\big\{\bfa \bfa^\dagger, \rho\big\}- \bfa^\dagger\rho\,\bfa-\bfa\,\rho\,\bfa^\dagger\Big)\\
	&\, +\frac{1}{2}\lambda'_0\Big(\frac 12\big\{\bfa^\dagger \bfa, \rho\big\}-\frac 12\big\{\bfa \bfa^\dagger, \rho\big\}+ \bfa^\dagger\rho\,\bfa-\bfa\,\rho\,\bfa^\dagger\Big)\\
	=&-\frac{1}{2} \big(\gamma'_0+\lambda'_0\big)\Big(\frac 12\big\{\bfa \bfa^\dagger, \rho\big\}-\bfa^\dagger\rho\bfa\Big)-\frac{1}{2} \big(\gamma'_0-\lambda'_0\big)\Big(\frac 12\big\{\bfa^\dagger \bfa, \rho\big\}-\bfa\rho\bfa^\dagger\Big)\\
	=&-\frac{1}{2} \big(\gamma'_0+\lambda'_0\big)\cL_{0}(\rho)-\frac{1}{2} \big(\gamma'_0-\lambda'_0\big)\cL_1(\rho),
\end{align*}  
where we define 
$$\cL_{0}(\rho)=\frac 12\big\{\bfa \bfa^\dagger, \rho\big\}-\bfa^\dagger\rho\bfa \qquad\text{and}\qquad \cL_1(\rho)=\frac 12\big\{\bfa^\dagger \bfa, \rho\big\}-\bfa\rho\bfa^\dagger.$$ 
Therefore, if the channels defined by~\eqref{eq:in-out-charac} form a semigroup, then the corresponding Lindbladian is equal to 
\begin{equation}\label{eq:gen-gen}
	\cL(\rho)=\frac{1}{2} \big(\gamma_0'+\lambda_0'\big)\cL_{0}(\rho)+\frac{1}{2} \big(\gamma_0'-\lambda_0'\big)\cL_1(\rho).
\end{equation}
We use this equation to obtain the generator of the three classes of phase-covariant Gaussian channels.

\subsection{Attenuator channels}
The attenuator channel can be physically modeled by overlapping the input state and a thermal state $\sigma_\beta$ on a beam splitter with transmissivity $\lambda$:
\begin{equation*}
	\Phiatt_\lambda(\rho)=\tr_2\Big( U_{\text{BS},\lambda} (\rho\otimes \sigma_\beta) U_{\text{BS},\lambda}^\dagger \Big),
\end{equation*}
where the 2-mode unitary operator $U_{\text{BS},\lambda}$ can be described by
\begin{align}\label{eq:U-almbda-a1-a2}
	\begin{cases}
		U_{BS,\lambda}^\dagger \bfa_1 U_{BS,\lambda} &= \sqrt {\lambda}\, \bfa_1 + \sqrt{1-{\lambda}}\, \bfa_2,\\ 
		U_{BS,\lambda}^\dagger \bfa_2 U_{BS,\lambda}  &= -\sqrt{1-{\lambda}}\, \bfa_1 + \sqrt{\lambda}\, \bfa_2.
	\end{cases}
\end{align}
Using these relations, the characteristic function of the two-mode state $U_{\text{BS},\lambda} (\rho\otimes \sigma_\beta) U_{\text{BS},\lambda}^\dagger$ equals
\begin{align*}
\tr\Big( U_{\text{BS},\lambda} (\rho\otimes \sigma_\beta) U_{\text{BS},\lambda}^\dagger D_{\xi_1}\otimes D_{\xi_2}\Big)&=\tr\Big( \rho\otimes \sigma_\beta U_{\text{BS},\lambda}^\dagger (D_{\xi_1}\otimes D_{\xi_2})U_{\text{BS},\lambda}\Big)\\
&=\tr\Big(\! \rho D_{\sqrt {\lambda} \xi_1 -\sqrt{1-{\lambda}}\xi_2}\Big)\,\tr\Big(\! \sigma_\beta D_{\sqrt{1-{\lambda}}\xi_1 +\sqrt {\lambda} \xi_2}\Big).
\end{align*}
Thus, the characteristic function of the output state of the attenuation channel is given by
\begin{align*}
\chi_{\Phiatt_\lambda(\rho)}(\xi)& =\chi_\rho\big(\sqrt {\lambda}\,\xi_1 -\sqrt{1-{\lambda}}\,\xi_2\big)\,\chi_{\sigma_\beta}\big(\sqrt{1-{\lambda}}\,\xi_1 +\sqrt {\lambda} \,\xi_2\big)\Big|_{(\xi_1, \xi_2)=(\xi, 0)}\\
&= \chi_{\rho}\big(\sqrt {\lambda}\, \xi\big)\, \chi_{\sigma_\beta}\big(\sqrt{1-{\lambda}}\,\xi\big)\\
&= \exp\!\Big({\!{-\frac{1}{2} \coth(\beta/2) (1-\lambda) |\xi|^2}}\Big)  \chi_{\rho}\big(\sqrt {\lambda}\, \xi\big),
\end{align*}
where we replace the characteristic function of the thermal state from~\eqref{eq:chi-thermal}.

Setting $\lambda=\lambda_t=e^{-2ct}$ with $c>0$ in the above relation, we observe that attenuator channels $\{\Phiatt_{\lambda_t}:\, t\geq0\}$ form a semigroup. Comparing the above equation with \eqref{eq:in-out-charac} also gives $\gamma_t=\coth(\beta/2) (1-\lambda_t)$. Therefore, by plugging $\lambda_0'=-2c$ and $\gamma_0'=2c\coth(\beta/2) $ into \eqref{eq:gen-gen}, we obtain the corresponding generator
\begin{align*}
	\cL(\rho)&=c \big(\!\coth(\beta/2)-1\big)\cL_{0}(\rho)+c \big(\!\coth(\beta/2)+1\big)\cL_1(\rho).
\end{align*} 

%%%%%%%%%%%%%%%%%%%%%%%%%%%%%%%%%%%%%%%%%%%%%%%%%%%%%%
\subsection{Amplifier channels}
Following a similar procedure as for attenuator channels, we can derive the generator of the semigroup of amplifier channels. An amplifier channel can be modeled by applying a two-mode squeezing unitary with squeezing parameter $\lambda\geq1$ on the input state and a thermal state $\sigma_\beta$, and then tracing over the second mode:
\begin{equation*}
	\Phiamp_\lambda(\rho)=\tr_2\Big( U_{\text{2S},\lambda} (\rho\otimes \sigma_\beta) U_{\text{2S},\lambda}^\dagger \Big).
\end{equation*}
Here, the action of the two-mode squeezing unitary can be described by
\begin{align}\label{eq:U2S-almbda-a1-a2}
	\begin{cases}
		U_{2S,\lambda}^\dagger \bfa_1 U_{2S,\lambda} &= \sqrt {\lambda}\, \bfa_1 + \sqrt{{\lambda}-2}\, \bfa_2^\dagger,\\ 
		U_{2S,\lambda}^\dagger \bfa_2 U_{2S,\lambda}  &= \sqrt{\lambda}\, \bfa_2+\sqrt{{\lambda}-1}\, \bfa_1^\dagger.
	\end{cases}
\end{align}
Using these relations, we have
\begin{equation*}
U_{2S,\lambda}^\dagger D_{\xi_1}\otimes D_{\xi_2} U_{2S,\lambda}=D_{\xi_1\sqrt{\lambda}-\bar{\xi}_2\sqrt{\lambda-1}}\otimes D_{\xi_2\sqrt{\lambda}-\bar{\xi}_1\sqrt{\lambda-1}}.
\end{equation*}
Thus, the characteristic function of the output state is given by
\begin{align*}
	\chi_{\Phiamp_\lambda(\rho)}(\xi)&=\chi_\rho\big(\sqrt{\lambda}\,\xi_1-\sqrt{\lambda-1}\,\bar{\xi}_2\big)\,\chi_{\sigma_\beta}\big(\sqrt{\lambda}\,\xi_2-\sqrt{\lambda-1}\,\bar{\xi}_1\big)\Big|_{(\xi_1, \xi_2)=(\xi, 0)}\\
	&= \chi_{\rho}\big(\sqrt {\lambda}\,\xi\big)\, \chi_{\sigma_\beta}\big(\sqrt{{\lambda}-1}\,\xi\big)\\
	&= \exp\!\Big(\!{{-\frac{1}{2} \coth(\beta/2) (\lambda-1) |\xi|^2}}\Big)  \chi_{\rho}\big(\sqrt {\lambda}\,\xi\big),
\end{align*}
where in the last line the characteristic function of the thermal state \eqref{eq:chi-thermal} is replaced.

Setting $\lambda=\lambda_t=e^{2ct}$ with $c>0$, we observe that amplifier channels $\{\Phiamp_{\lambda_t}:\, t\geq0\}$ form a semigroup. Here, $\gamma_t=\coth(\beta/2) (\lambda_t-1)$. Thus, by plugging $\lambda_0'=2c$ and $\gamma_0'=2c\coth(\beta/2)$ in \eqref{eq:gen-gen}, we obtain the generator
\begin{align*}
	\cL(\rho)&=c \big(\!\coth(\beta/2)+1\big)\cL_0(\rho)+c \big(\!\coth(\beta/2)-1\big)\cL_{1}(\rho).
\end{align*} 
Note that here we get the same coefficients as those of the generator of the attenuator semigroup, but swapped.

%%%%%%%%%%%%%%%%%%%%%%%%%%%%%%%%%%%%%%%%%%%%%%
\subsection{Additive-noise channels}  
The classical additive-noise Gaussian channels can be modeled by applying a displacement operator whose parameter is chosen at random according to a Gaussian probability distribution:
\begin{equation*}
	\Phiad_\gamma(\rho)=\frac{2}{\pi\gamma}\int e^{-\frac{2}{\gamma}|\alpha|^2} D_\alpha\rho D^{\dagger}_\alpha\, \dd^2\alpha.
\end{equation*} 
The characteristic function of the output state of this channel is given by
\begin{align*}
\chi_{\Phiad_\gamma(\rho)}(\xi)&=\frac{2}{\pi\gamma}\int e^{-\frac{2}{\gamma}|\alpha|^2} \tr\big(D_\alpha\rho D^{\dagger}_\alpha D_\xi\big)\, \dd^2\alpha\\
&=\frac{2}{\pi\gamma}\chi_\rho(\xi) \int e^{-\frac{2}{\gamma} |\alpha|^2} e^{\xi\bar{\alpha}-\bar{\xi}\alpha}\, \dd^2\alpha\\
&=\exp\!\Big({-\frac 12\gamma|\xi|^2}\Big)\chi_\rho(\xi),
\end{align*}
where we use $D^{\dagger}_\alpha D_\xi D_\alpha=\exp(\xi\bar{\alpha}-\bar{\xi}\alpha) D_\xi$ in the second line. 

By choosing $\gamma=2ct$ with $c>0$, the channels $\{\Phiad_t:\, t\geq0\}$ form a semigroup. Here, $\lambda_t=1$. Thus, by inserting $\lambda_0'=0$ and $\gamma_0'=2c$ in~\eqref{eq:gen-gen} we obtain the generator
\begin{equation*}
\cL(\rho)=c\cL_0(\rho)+c\cL_{1}(\rho).
\end{equation*} 

%************************************************************************
\section{$\Upsilon(\rho)$ is well-defined if $\tr(\rho\bfa^\dagger \bfa)<+\infty$}\label{app:range-Upsilon}

In this appendix we show that $\Upsilon(\rho)$ given by
\begin{align*}
\Upsilon(\rho) =\hatp  \bigg(\!\nu_0 \Big[\tr\big(\rho\bfa\bfa^\dagger\big) -  \tr\big(  \rho^{1/p} \bfa \rho^{1/\hatp} \bfa^\dagger\big) \Big] +\nu_1\Big[  \tr\big(\rho\bfa^\dagger\bfa\big)-  \tr\big(  \rho^{1/p} \bfa^\dagger \rho^{1/\hatp} \bfa \big)\!\Big]\!\bigg) + \omega \tr\big(\rho \bfa^\dagger\bfa\big) + S(\rho).
\end{align*}
is well-defined for any state $\rho$ with finite mean photon number. For such a state both $\tr(\rho \bfa^\dagger\bfa), \tr(\rho \bfa \bfa^\dagger)$ are finite. Moreover, if $\rho = \sum_j \lambda_j \ketbra{\psi_j}{\psi_j}$ is the eigen-decomposition of $\rho$, then we have
\begin{align*}
 \tr\big(  \rho^{1/p} \bfa \rho^{1/\hatp} \bfa^\dagger\big) &  =  \sum_{i, j}   \lambda_i^{1/p}\lambda_j^{1/\hatp} \big| \bra{\psi_i} \bfa\ket{\psi_j}  \big|^2\\
 & \leq \sum_{i, j}  \Big( \frac 1p \lambda_i  + \frac{1}{\hatp} \lambda_j \Big) \big| \bra{\psi_i} \bfa\ket{\psi_j}  \big|^2\\
 & = \sum_i \frac{1}{p}\lambda_i \big\|\bfa^\dagger \ket{\psi_i}\big\|^2 + \sum_j \frac{1}{\hatp} \lambda_j \big\|\bfa \ket{\psi_j}\big\|^2\\
 & = \frac{1}{p}\tr\big(\rho \bfa\bfa^\dagger\big) + \frac{1}{\hatp}\tr\big(\rho  \bfa^\dagger \bfa\big),
\end{align*}
where in the second line we use Young's inequality. Thus, $ \tr\big(  \rho^{1/p} \bfa \rho^{1/\hatp} \bfa^\dagger\big)$ and similarly $\tr\big(  \rho^{1/p} \bfa^\dagger \rho^{1/\hatp} \bfa \big)$ are finite. Also, using the non-negativity of Umegaki's quantum relative entropy $D(\rho\| \tau) = \tr(\rho\log \rho) - \tr(\rho\log \tau)$, for the thermal state $\tau = (1-1/e) e^{-\bfa^\dagger \bfa} $ we have
\begin{align*}
0\leq D(\rho \| \tau) = -S(\rho) - \tr(\rho \log \tau)  = -S(\rho) -\log(1-1/e) + \tr\big(\rho \bfa^\dagger \bfa \big).
\end{align*}
Therefore, since $\tr\big(\rho \bfa^\dagger\bfa\big)$ is finite, $S(\rho)$ is also finite.

%**************************************************************************
\section{Verification of~\eqref{eq:phi-derivative}}\label{app:comp-derivative}

Recall that 
\begin{align*}
\phi(x, y) = & -y^2\log y^2 -(1-y^2)\log(1-y^2) + y^2\log x^2 + (1-y^2)\log(1-x^2) \\
&\, -\frac{\big( x^{\frac 2p}-y^{\frac2 p} \big)\big(  x^{\frac 2\hatp}-y^{\frac2 {\hatp} }\big)\log x^2}{\big(1-x^{\frac 2p}\big)\big(1-x^{\frac 2\hatp}\big)} .
\end{align*}
Then, computing the derivative term-by-term yields
\begin{align*}
\frac{\dd}{\dd x} \phi(x, y) = &\frac{2y^2}{x} - \frac{2x(1-y^2)}{1-x^2} - \frac{2\big(x^{\frac 2p}-y^{\frac 2p}\big)\big(x^{\frac 2\hatp}-y^{\frac 2\hatp}\big)}{x\big(1-x^{\frac 2p}\big)\big(1-x^{\frac 2\hatp}\big)}\\
& \, - \frac{2x^{\frac 2p}\big(x^{\frac 2\hatp} - y^{\frac 2\hatp}\big)\log x^2}{px\big(1-x^{\frac 2p}\big)\big(1-x^{\frac 2\hatp}\big)}  - \frac{2x^{\frac 2\hatp}\big(x^{\frac 2p} - y^{\frac 2p}\big)\log x^2}{\hatp x\big(1-x^{\frac 2p}\big)\big(1-x^{\frac 2\hatp}\big)} \\
&\, - \frac{2x^{\frac 2p}\big(x^{\frac 2p}-y^{\frac 2p}\big)\big(x^{\frac 1\hatp} - y^{\frac 1\hatp}\big)\log x^2}{px\big(1-x^{\frac 2p}\big)^2\big(1-x^{\frac 2\hatp}\big)}  - \frac{2x^{\frac 2\hatp}\big(x^{\frac 2\hatp}-y^{\frac 2\hatp}\big)\big(x^{\frac 2p} - y^{\frac 2p}\big)\log x^2}{\hatp x\big(1-x^{\frac 2p}\big)\big(1-x^{\frac 2\hatp}\big)^2}\\
= &   -\frac{2(x^2-y^2)}{x(1-x^2)} - \frac{2\big(x^{\frac 2p}-y^{\frac 2p}\big)\big(x^{\frac 2\hatp}-y^{\frac 2\hatp}\big)}{x\big(1-x^{\frac 2p}\big)\big(1-x^{\frac 2\hatp}\big)} \\
& \,- \frac{2x^{\frac 2p}\big(1-y^{\frac 2p}\big)\big(x^{\frac 2\hatp} - y^{\frac 2\hatp}\big)\log x^2}{px\big(1-x^{\frac 2p}\big)^2\big(1-x^{\frac 2\hatp}\big)}  - \frac{2x^{\frac 2\hatp}\big(1-y^{\frac 2\hatp}\big)\big(x^{\frac 2p} - y^{\frac 2p}\big)\log x^2}{\hatp x\big(1-x^{\frac 2p}\big)\big(1-x^{\frac 2\hatp}\big)^2}.
\end{align*}
Therefore, we have
\begin{align*}
\frac x2 \frac{\dd}{\dd x} &\phi(x, y) + \frac{x^{\frac 2p}\big(1-y^{\frac 2p}\big)\big(x^{\frac 2\hatp} - y^{\frac 2\hatp}\big)}{\big(1-x^{\frac 2p}\big)\big(1-x^{\frac 2\hatp}\big)}  \bigg[  \frac{\log x^2}{p(1-x^{\frac 2p})}+1 \bigg]  + \frac{x^{\frac 2\hatp}\big(1-y^{\frac 2\hatp}\big)\big(x^{\frac 2p} - y^{\frac 2p}\big)}{\big(1-x^{\frac 2p}\big)\big(1-x^{\frac 2\hatp}\big)} \bigg[  \frac{\log x^2}{\hatp(1-x^{\frac 2\hatp})}+1 \bigg] \\
= &   -\frac{(x^2-y^2)}{(1-x^2)} - \frac{\big(x^{\frac 2p}-y^{\frac 2p}\big)\big(x^{\frac 2\hatp}-y^{\frac 2\hatp}\big)-x^{\frac 2p}\big(1-y^{\frac 2p}\big)\big(x^{\frac 2\hatp} - y^{\frac 2\hatp}\big)-x^{\frac 2\hatp}\big(1-y^{\frac 2\hatp}\big)\big(x^{\frac 2p} - y^{\frac 2p}\big)}{\big(1-x^{\frac 2p}\big)\big(1-x^{\frac 2\hatp}\big)} \\
= &   -\frac{(x^2-y^2)}{(1-x^2)} + \frac{x^2-y^2 +x^{\frac 2p}y^2 - x^2y^{\frac 2p}+ x^{\frac 2\hatp}y^2-x^2y^{\frac 2\hatp}}{\big(1-x^{\frac 2p}\big)\big(1-x^{\frac 2\hatp}\big)} \\
= &    \frac{-(x^2-y^2)\big(1-x^{\frac 2p}\big)\big(1-x^{\frac 2\hatp}\big) +\big(1-x^2\big)\big(x^2-y^2 +x^{\frac 2p}y^2 - x^2y^{\frac 2p}+ x^{\frac 2\hatp}y^2-x^2y^{\frac 2\hatp}\big)}{\big(1-x^2\big)\big(1-x^{\frac 2p}\big)\big(1-x^{\frac 2\hatp}\big)} \\
= &    \frac{x^2\Big( \! \big(1- x^{\frac 2\hatp}\big)\big(1-y^{\frac 2\hatp}\big) \big(x^{\frac 2p} - y^{\frac 2p}\big)   + \big(1- x^{\frac 2p}\big)\big(1-y^{\frac 2p}\big) \big(x^{\frac 2\hatp} - y^{\frac 2\hatp}\big) \!  \Big)
}{\big(1-x^2\big)\big(1-x^{\frac 2p}\big)\big(1-x^{\frac 2\hatp}\big)}.
\end{align*}
This gives~\eqref{eq:phi-derivative}.

%***************************************************
\section{Proof of $\alpha_2=\min_{p\geq 1} \alpha_p$}\label{app:alpha2-min}

In this appendix, we show that for any $p\geq 1$ the following inequality holds:
\begin{align*}
\frac{p\hatp}{4}\big(1-e^{-\beta/p}\big)\big(1-e^{-\beta/\hatp}\big)\geq  \big(1-e^{-\beta/2}\big)^2.
\end{align*}
First, we may restrict to $p\in [1, 2]$. Then, with the change of variable $p=\frac{2}{1+t}$ with $t\in [0,1]$ the above inequality is equivalent to
\begin{align*}
\big(1-e^{-\frac \beta 2 (1+t) }\big) \big(1-e^{-\frac \beta 2 (1-t) }\big)\geq (1-t^2) \big(1-e^{-\beta/2}\big)^2.
\end{align*}
Starting with the left hand side we compute
\begin{align*}
\big(1-e^{-\frac \beta 2 (1+t) }\big) \big(1-e^{-\frac \beta 2 (1-t) }\big) &  = 1+e^{-\beta} - e^{-\beta/2}\big(e^{\beta t/2}+e^{-\beta t/2}\big)\\
& = 1+ e^{-\beta} -2 e^{-\beta/2} \sum_{k:\text{ even}} \frac{1}{k!} \Big(\frac{\beta t}{2}\Big)^k\\
& \geq 1+ e^{-\beta} -2 e^{-\beta/2}\bigg( 1+ t^2\sum_{k\geq 2:\text{ even}} \frac{1}{k!} \Big(\frac{\beta }{2}\Big)^k \bigg)\\
&  = \big(1-e^{-\beta/2}\big)^2 -t^2 e^{-\beta/2} \Big( e^{\beta/2} + e^{-\beta/2} -2  \Big)\\
& = (1-t^2)  \big(1-e^{-\beta/2}\big)^2.
\end{align*}
Here, the inequality holds since $0\leq t\leq 1 $. We are done.

%**************************************************************************
\section{Proof of Proposition~\ref{prop:spectrum1}}\label{app:spectrum}

It suffices to prove the proposition for $m=1$. In this case, we drop subscript $j$ and denote $\widehat \cL$ and $\widehat \sigma$ simply by $\cL$ and $\sigma$, respectively.

Using the commutation relation $[\bfa , \bfa^\dagger]=1$, we find that $[\bfa, \bfq_z ] = \frac{z}{\sqrt 2}$, and
$$\big[\bfa, e^{s\bfq_z - \frac{\coth(\beta/2)} 4 s^2}\big] = \frac{z}{\sqrt 2} \Big( \frac{\dd}{\dd t} e^{st-\frac{\coth(\beta/2)} 4s^2}\Big)\Big|_{t=\bfq_z}=\frac{z}{\sqrt 2} s e^{s\bfq_z - \frac{\coth(\beta/2)} 4 s^2}.$$
This means that 
$$e^{s\bfq_z - \frac{\coth(\beta/2)} 4 s^2}\bfa = \Big(\bfa-\frac{z}{\sqrt 2} s\Big)e^{s\bfq_z - \frac{\coth(\beta/2)} 4 s^2},$$
and similarly
$$e^{s\bfq_z - \frac{\coth(\beta/2)} 4 s^2}\bfa^\dagger = \Big(\bfa^\dagger+\frac{\bar z}{\sqrt 2} s\Big)e^{s\bfq_z - \frac{\coth(\beta/2)} 4 s^2}.$$
Using these and $|z|=1$, a straightforward computation shows that 
\begin{align*}
\mathcal{L}^*(e^{s\bfq_z - \frac {\coth(\beta/2)}4 s^2})& = e^{\beta/2}\Big(  \frac 12 \bfa^\dagger\bfa +\frac 12(\bfa^\dagger+\frac{\bar z}{\sqrt 2} s)(\bfa - \frac{z}{\sqrt 2}s) - \bfa^\dagger(\bfa - \frac{z}{\sqrt 2} s) \Big)e^{s\bfq_z - \frac{\coth(\beta/2)} 4 s^2}\\
&\quad\,  + e^{-\beta/2}\Big(  \frac 12 \bfa\bfa^\dagger +\frac 12(\bfa - \frac{z}{\sqrt 2}s)(\bfa^\dagger+\frac{\bar z}{\sqrt 2} s) - \bfa(\bfa^\dagger + \frac{\bar z}{\sqrt 2} s) \Big)e^{s\bfq_z - \frac{\coth(\beta/2)} 4 s^2}\\
& = e^{\beta/2}\Big(\frac 12 s\bfq_z-\frac 14 s^2 \Big)e^{s\bfq_z - \frac{\coth(\beta/2)} 4 s^2}+e^{-\beta/2}\Big(-\frac 12 s\bfq_z-\frac 14 s^2 \Big)e^{s\bfq_z - \frac{\coth(\beta/2)} 4 s^2}\\
& = \frac{e^{\beta/2}-e^{-\beta/2}}{2} s\Big(  \bfq_z -\frac{\coth(\beta/2)} 2  s  \Big)e^{s\bfq_z - \frac {\coth(\beta/2)} 4 s^2}\\
& = \frac{e^{\beta/2}-e^{-\beta/2}}{2} s\frac{\dd}{\dd s} e^{s\bfq_z - \frac \nu4 s^2}\\
&= \frac{e^{\beta/2}-e^{-\beta/2}}{2} \sum_{k=0}^\infty \frac{s^{k}}{(k-1)!}h_n(\bfq_z).
\end{align*}
Comparing to~\eqref{eq:hermit} we find that 
$$\mathcal L^*\big(h_k(\bfq_z)\big) =  \big(\sinh(\beta/2)k\big)\,  h_k(\bfq_z).$$
Finally, note that the operators $h_k(\bfq_z)$, over all $k\geq 0$ and $|z|=1$, span the whole space of polynomials of $\bfa$ and $\bfa^\dagger$, which is dense in $L_2(\sigma)$.

\end{document}